\def \Remark{\noindent\underline{\textbf{Remark }}}
\DeclareMathOperator*{\argsup}{arg\,sup\,}
\DeclareMathOperator*{\arginf}{arg\,inf\,}
\newtheorem{theorem}{Theorem}
\newtheorem{example}[theorem]{Example}
\newtheorem{remark}[theorem]{Remark}
\newenvironment{proof}[1][Proof]{\noindent\textbf{#1.} }{\ \rule{0.5em}{0.5em}}
\begin{document}

\pagestyle{headings}  

\title{Towards a better understanding of the dual representation of phi divergences}
\author{Diaa AL MOHAMAD\\
Laboratoire de Statistique Th\'eorique et Appliqu\'ee, Universit\'e Pierre et Marie Curie\\
4 place Jussieu 75005 PARIS}
\date{\today}
\maketitle

\begin{abstract}
The aim of this paper is to study different estimation procedures based on $\varphi-$divergences. The dual representation of $\varphi-$divergences based on the Fenchel-Legendre duality is the main interest of this study. It provides a way to estimate $\varphi-$divergences by a simple plug-in of the empirical distribution without any smoothing technique. Resulting estimators are thoroughly studied theoretically and with simulations showing that the so called minimum $\varphi-$divergence estimator (MD$\varphi$DE) is generally non robust and behaves similarly to the maximum likelihood estimator. We give some arguments supporting the non robustness property, and give insights on how to modify the classical approach. An alternative class of $\varphi-$divergences robust estimators based on the dual representation is presented. We study consistency and robustness properties from an influence function point of view of the new estimators. In a second part, we invoke the Basu-Lindsay approach for approximating $\varphi-$divergences and provide a comparison between these approaches. The so called dual $\varphi-$divergence is also discussed and compared to our new estimator. A full simulation study of all these approaches is given in order to compare efficiency and robustness of all mentioned estimators against the so-called minimum density power divergence, showing encouraging results in favor of our new class of minimum dual $\varphi-$divergences.
\end{abstract}
\section*{Introduction}
The maximum likelihood method is a simple and an efficient method to estimate unknown parameters of a given model. The most common drawback of such method is its sensibility to contamination and misspecification. From the first years of the twentieth century, many researchers such as Pearson, Hellinger, Kullback and Liebler, Neymann and others started using different approaches using distant-like functions between probability density functions called as divergences. Resulting estimators have shown a good robustness against outliers. Nowadays, we have several divergence-based techniques which perform well under noise presence such as $\varphi-$divergences (\cite{Csiszar}, \cite{AliSilvey}), $S-$divergences (\cite{GoshSDivergence}), R\'eny\'e pseudodistances (see for example \cite{TomaAubin}), Bregman divergences and many others. We are particularly interested here in this paper in $\varphi-$divergences and in comparing it with maximum likelihood (calculated using EM algorithm for mixtures) and some particular cases of $S-$divergences and Bregman divergences.\\
We define a $\varphi-$divergence in the sense of \cite{Csiszar} as follows. Let $\varphi:[0,\infty)\rightarrow (0,\infty)$ be a proper closed convex function. Let $P$ and $Q$ be two probability measures defined on the same measurable space $(\mathcal{A},\mathbb{B})$ such that $Q$ is absolutely continuous with respect to $P$. Denote $dQ/dP$ the corresponding Radon-Nikodym density. The $\varphi-$divergence between $Q$ and $P$ is defined by:
\begin{equation}
D_{\varphi}(Q,P) = \int{\varphi\left(\frac{dQ}{dP}(y)\right)dP(y)}
\label{eqn:PhiDivergence}
\end{equation}
If $Q$ is not absolutely continuous with respect to $P$, we set $D_{\varphi}(Q,P)=\infty$. For the class of Cressie-Read $\varphi_{\gamma}(t) = \frac{x^{\gamma}-\gamma x + \gamma -1}{\gamma (\gamma -1)}$, we get the power divergences which contain the Hellinger ($\gamma=0.5$), the Pearson $\chi^2$ ($\gamma=2$), the Neymann $\chi^2$ ($\gamma=-1$) and other classical divergences.\\
When working with discreet models, $\varphi-$divergences are simply approximated using the empirical distribution $P_n$ since both the model and the empirical distribution are absolutely continuous with respect to the Dirac measure. Efficient and robust estimators were derived and extensively studied, see for example \cite{Simpson} and (\cite{LindsayRAF}). \\
For continuous models, the empirical distribution is no longer suitable to replace directly the true distribution since the model has a continuous support. Thus, the model cannot be absolutely continuous with respect to $P_n$ and no interesting estimation procedure is produced (see \cite{BroniatowskiSeveralApplic} for a proper explanation). Authors such as \cite{Beran} proposed to simply smooth the empirical distribution using kernels. \cite{BasuLindsay} proposed to smooth both the model and the empirical distribution in order to avoid consistency conditions and rates of convergence imposed on the kernel estimator (provided the existence of a \emph{transparent} kernel). Their method can be reread in some basic examples as the calculus of the $\varphi-$divergence between a kernel estimator and a weighted version of the model $p_{\theta}^* = \text{cte }p_{\theta}^a$. Although the smoothed model may result in a loss of information, Basu and Lindsay show that this loss is rather small. They also admit that there is still a difficulty in the choice of the window and the kernel for the smoothing since providing transparent kernels for a given model is a hard task\footnote{The authors provide however, three standard examples which admit transparent kernels; they are the gaussian the gamma and the poisson models.}.\\

Recently, an approach based on some convexity arguments have been proposed by \cite{LieseVajdaDivergence} and \cite{BroniaKeziou2006}. In both articles, the authors provide similar "supremal" representations of $\varphi-$divergences where a simple plug-in of the empirical distribution is possible without any smoothing techniques. The resulting estimators were called as minimum dual $\varphi-$ divergence estimators (MD$\varphi$DE). Since their appearance, no complete study about the robustness of such estimators were proposed except for the calculus of the influence function in \cite{TomaBronia}. There were even no simulation studies either, except for the paper of \cite{Frydlova}. However, the authors have considered only the case of normal model where the MD$\varphi$DE is proved to coincide with the maximum likelihood estimator, see \cite{Broniatowski2014}. Although they get a robust estimator in only one case\footnote{They get robust results when adding outliers drawn from a Cauchy distribution.}, we believe that it was due to a calculation error which we explain later.\\
The dual representation proposed by both \cite{LieseVajdaDivergence} and \cite{BroniaKeziou2006} performs well under the model. It even coincides with the MLE in full exponential families, and hence have the same efficiency as the MLE. Weak and strong consistency is reached under classical conditions (see \cite{BroniatowskiKeziou2007}). Limit laws of the MD$\varphi$DE and the estimated divergence are simple and were exploited to build statistical tests. However, when we are not under the model, this approach seems to be inconvenient and suffer from lack of robustness. When we are in contamination models or under misspecification, this approach does not approximate well the $\varphi-$divergence between the empirical distribution and the model. It even remarkably underestimates its value. We propose in this paper a brief explanation of this problem and provide a general solution. We also give two particular solutions. The first is based on kernels which avoids the supremal form (hence no double optimization). The second is devoted to contamination models which appears as a slight modification of the classical MD$\varphi$DE. We study the consistency and the robustness from an influence function point of view of the kernel-based estimator and check corresponding conditions on simple examples. \\
In a second part of this paper, we briefly recall the Basu-Lindsay approach (\cite{BasuLindsay}) and discuss what happens in the context of densities defined on $(0,\infty)$. We show that symmetric kernels are not suitable and provide some solutions through asymmetric kernels. We also discuss some of the positive and drawbacks of the so-called dual $\varphi-$divergence estimator (\cite{BroniatowskiKeziou2007}) which is another estimator derived through the dual representation of the divergence. The sensitivity to the choice of the escort parameter is invoked. We compare this estimator with the density power divergence of \cite{BasuMPD} and show a strong relation between these two methods.\\
Finally, we provide several simulation results in a simple gaussian model, a mixture of two gaussian components, a generalized Pareto distribution (GPD) and a mixture of two Weibull components. We make a comparison with the classical MD$\varphi$DE, the MLE (calculated using EM for the mixture case), the Basu-Lindsay approach, the so called dual $\varphi-$divergence estimator, Beran's approach and the minimum power density divergence (MPD) when the data is drawn under the model and when it is contaminated by $10\%$ of observations from other distributions. Our new estimator is as efficient as MLE under the model and is robust against outliers. Although not being the best robust estimator in simple examples (but close enough), it shows promising performances in difficult ones conquering other methods.\\

The paper is organized as follows. In Section 1, we give a theoretical introduction of the dual representation of $\varphi-$divergences. We explain the problem of the existing approach based on duality and introduce a solution to robustify the classical MD$\varphi$DE. Section 2 is devoted to the asymptotic properties of our kernel-based MD$\varphi$DE where a set of conditions ensuring consistency are given and verified on a gaussian model. The influence function is also calculated and proved to be bounded on a simple example. In Section 3, we recall the Basu-Lindsay approach and see how it is applied when one uses asymmetric kernels. In Section 4, we discuss some of the positive points of our new estimator in comparison to the classical MD$\varphi$DE and the Basu-Lindsay approach. We also enlist some of its drawbacks. The so called dual $\varphi$ divergence estimator is discussed in Section 5 and we show a convergence with the density power divergence introduced in \cite{BasuMPD}. In Section 6, we give another estimator for contamination models and discuss briefly its properties. Finally, Section 7 contains an extensive simulation study and a comparative discussion with other estimators presented in the paper.

\section{The dual representation of \texorpdfstring{$\varphi$}{TEXT}--divergences}
\subsection{The theoretical approaches}
\cite{LieseVajdaDivergence} propose the following "supremal" representation of $\varphi-$divergences. Let $\mathcal{P}$ be a class of mutually absolutely continuous distributions such that for any triplet $P,P_T$ and $Q$, $\varphi'(dP_T/dQ)$ is $P$-integrable. Theorem 17 in \cite{LieseVajdaDivergence} states that:
\begin{equation}
D_{\varphi}(P_T,P) = \sup_{Q\in\mathcal{P}}\int{\varphi'\left(\frac{dQ}{dP}\right)dP_T} + \int{\varphi\left(\frac{dQ}{dP}\right)dP} - \int{\varphi'\left(\frac{dQ}{dP}\right)dQ}
\label{eqn:LieseVajdaRep}
\end{equation}
and the supremum is attained when $Q=P_T$.
\cite{BroniaKeziou2006} have also developed a similar and a more general representation of $D_{\varphi}(P,P_T)$. Let $\mathcal{F}$ be some class of $\mathcal{B}-$measurable real valued functions. Let $\mathcal{M}_{\mathcal{F}}$ be the subspace of the space of probability measures $\mathcal{M}$ defined by $\mathcal{M}_{F} = \{P\in\mathcal{M} | \int{|f|dP<\infty, \forall f\in\mathcal{F}}\}$. Assume that $\varphi$ is differentiable and strictly convex. Then, for all $P\in\mathcal{M}_{\mathcal{F}}$ such that $D_{\varphi}(P,P_T)$ is finite and $\varphi'(dP/dP_T)$ belongs to $\mathcal{F}$, the $\varphi-$divergence admits the dual representation (see Theorem 4.4 in \cite{BroniaKeziou2006}):
\begin{equation}
D_{\varphi}(P,P_T) = \sup_{f\in\mathcal{F}} \int{f dP} - \int{\varphi^*(f)dP_T},
\label{eqn:GeneralDualRep}
\end{equation}
where $\varphi^*(x)=\sup_{t\in\mathbb{R}} tx-\varphi(t)$ is the Fenchel-Legendre convex conjugate. Moreover, the supremum is attained at $f=dP/dP_T$.\\
When substituting $\mathcal{F}$ by the class of functions $\{\varphi'(dP/dQ)\}$, and using the property $\varphi^*(\varphi'(t)) = t\varphi'(t)-\varphi(t)$, we obtain the same representation given above in (\ref{eqn:LieseVajdaRep}). Both formulations (\ref{eqn:LieseVajdaRep}) and (\ref{eqn:GeneralDualRep}) are interesting in their own and in their proofs. The second formula gives us the opportunity to reproduce many supremal forms for the $\varphi-$divergence.\\
In a parametric setup where $dP_{\phi} = p_{\phi}dx$ for $\phi\in\Phi\subset\mathbb{R}^d$ and the true distribution generating the data is a member of the model, i.e. $P_T=P_{\phi^T}$ for some $\phi^T\in\Phi$, \cite{BroniaKeziou2006} propose to use the class of functions $\mathcal{F}_{\phi} = \{\varphi'(p_{\phi}/p_{\alpha}),\alpha\in\Phi\}$. The dual representation of $D_{\varphi}$ is now written as:
\begin{equation}
D_{\varphi}(p_{\phi},p_{\phi_T}) = \sup_{\alpha\in\Phi}\left\{\int{\varphi'\left(\frac{p_{\phi}}{p_{\alpha}}\right)(x)p_{\phi}(x)dx} - \int{\left[\frac{p_{\phi}}{p_{\alpha}} \varphi'\left(\frac{p_{\phi}}{p_{\alpha}}\right) - \varphi'\left(\frac{p_{\phi}}{p_{\alpha}}\right)\right](y) p_{\phi^T}(y)dy}\right\}.
\label{eqn:ParametricDualForm}
\end{equation}
The idea behind this choice is that the supremum is attained when $\alpha = \phi^T$. Since $p_{\phi^T}$ is unknown, one think about replacing $p_{\phi^T}dy$ by the empirical distribution. This seems very natural and does not cause any problem of absolute continuity as in formula (\ref{eqn:PhiDivergence}). We now get:
\begin{equation}
\hat{D}_{\varphi}(p_{\phi},p_{\phi_T}) = \sup_{\alpha\in\Phi}\left\{\int{\varphi'\left(\frac{p_{\phi}}{p_{\alpha}}\right)(x)p_{\phi}(x)dx} - \frac{1}{n}\sum_{i=1}^n{\left[\frac{p_{\phi}}{p_{\alpha}} \varphi'\left(\frac{p_{\phi}}{p_{\alpha}}\right) - \varphi'\left(\frac{p_{\phi}}{p_{\alpha}}\right)\right](y_i)}\right\}
\label{eqn:DivergenceDef}
\end{equation}
This quantity is "nearly" the divergence between the empirical distribution and the model. Both \cite{BroniaKeziou2006} and \cite{LieseVajdaDivergence} propose to estimate the set of parameters $\phi^T$ by:
\begin{equation}
\hat{\phi}_n = \arginf_{\phi\in\Phi} \sup_{\alpha\in\Phi} \hat{D}_{\varphi}(p_{\phi},p_{\phi_T})
\label{eqn:MDphiDEClassique}
\end{equation}
This was called by \cite{BroniaKeziou2006} as the minimum dual $\varphi-$divergence estimator (MD$\varphi$DE) who have also studied the asymptotic properties and provided sufficient conditions for the consistency of this estimator. They have also built some test statistics based on it. \cite{TomaBronia} and \cite{BroniatowskiSeveralApplic} have studied the robustness of such an estimator from an influence function (IF) point of view. The IF is unfortunately unbounded in general and does not even depend on $\varphi$ for the classe of Cressie-Read functions $\varphi_{\gamma}$ presented in the introduction. This fact is still not sufficient to conclude the non robustness of the MD$\varphi$DE. It was pointed out by many authors in the context of $\varphi-$divergences that one may have an  unbounded influence function, still the resulting estimators enjoy a good robustness against outliers, see \cite{Beran} for the hellinger divergence in continuous models and \cite{LindsayRAF} for a general class of $\varphi-$divergences in discrete models. \\
Till this day, and to the best of our knowledge, there is not even a simulation study of the robustness of the MD$\varphi$DE although it is an estimator which, similarly to the power density estimator of \cite{BasuMPD}, does not require any smoothing or escort parameters. Besides, the asymptotic properties are proved with merely classical conditions on the model. The only simulation study, to our knowledge, is done by \cite{Frydlova} and focuses only on the normal model. In their results, the MD$\varphi$DE has comparative results to the maximum likelihood estimator when no contamination is present, while they get some cases where the MD$\varphi$DE is robust under contamination, although they \emph{should not} as we will see later in the following paragraph.
\subsection{Does the MD\texorpdfstring{$\varphi$}{TEXT}DE have any chance to be robust?}\label{sec:NonRobsutMDphiDE}
\paragraph{Equality with MLE in exponential families.} An important aspect about the classical MD$\varphi$DE is that it coincides with the maximum likelihood estimator in full exponential models whenever the corresponding \emph{true} divergence $D_{\varphi}$ is finite, see \cite{Broniatowski2014}. This covers the standard gaussian model for which \cite{Frydlova} provided clear robust properties of the MD$\varphi$DE when outliers are generated by the standard Cauchy distribution. This contradicts with the theoretical result presented in \cite{Broniatowski2014} which is an exact result and depends only on analytic arguments. We have done similar simulations and found out that numerical problems may play a nice role here. Fortunately, we have no numerical integration since all integrals can be easily calculated, see \cite{Frydlova} or \cite{BroniatowskiSeveralApplic}. When using the standard Cauchy distribution to generate outliers, we get points with very large values superior to 100. These points participate only in the sum term in the MD$\varphi$DE (\ref{eqn:MDphiDEClassique}). A gaussian density with parameters not very far from the standard ones ($\mu=0,\sigma=1$) will produce a value equal to 0 in numerical computer programs. Thus, numerical problems of the form $0/0$ would appear when calculating the sum term in (\ref{eqn:DivergenceDef}) since the summand is of the form $g(p_{\theta}/p_{\alpha})(y_i)$. If one uses simple practical solutions to avoid this, such as adding a very small value (e.g. $10^{-100}$) to the denominator or the nominator, a thresholding effect is produced and the \emph{true} fraction is badly calculated. As a result, such outliers would have practically no effect in the procedure as if they were not added, and one would obtain "forged robust estimates". The same thresholding effect does not happen in the MLE since the likelihood function does not contain any fractions. On the other hand, if one calculates the fraction using the properties of the exponential function, i.e. $p_{\theta}(y_i)/p_{\alpha}(y_i) = \exp[(y_i-\alpha)^2/2 - (y_i-\phi)^2/2]$, the MD$\varphi$DE defined by (\ref{eqn:MDphiDEClassique}) gives the same result as the maximum likelihood estimator and never better\footnote{On the basis of 100 experiments, there were about 20 experiments where the MD$\varphi$DE suffered from numerical complications and exploses higher than the MLE.}.\\ 
We have performed further simulations on several models which do not belong the exponential family and found out that the MD$\varphi$DE have a very similar behavior to the MLE, see Sect. \ref{sec:Simulations} below. This should not be very surprising because of the convergence between exponential families and a large class of probability laws. Papers such as \cite{BarronSheu} discussed how one can estimate a probability density using an exponential families and proved interesting convergence rates.\\
\paragraph{Why should not it work well although being an estimator of a $\varphi$--divergence which is proved to be a robust procedure (see \cite{Donoho})?} We do not pretend to give a full answer about the non robustness of the MD$\varphi$. Our argument here is intuitive. When $P_T$ is a member of the model, the approximated dual formula converges to the $\varphi-$divergence, and the argument of the infimum to the corresponding one, as the number of observations increases. This consistency was discussed in Proposition 3.1 in \cite{BroniatowskiKeziou2007}. Their result, however, does not hold when $P_T$ is not a member of the model, i.e. under contamination or misspecification. Indeed, consistency is in the following sense:
\[\hat{D}_n(P_{\phi},P_T)\rightarrow \sup_{\alpha\in\Phi}\left\{\int{\varphi'\left(\frac{p_{\phi}}{p_{\alpha}}\right)(x)p_{\phi}(x)dx} - \int{\varphi^{\#}\left(\frac{p_{\phi}}{p_{\alpha}}\right)(y) dP_T(y)dy}\right\},\]
and the arginf of the left hand side to the arginf of the right hand side. The limiting quantity is the dual representation of the $\varphi-$divergence, and since the supremum is attained uniquely when $p_{\alpha}=dP_T/dy$, then it is never attained as long as $P_T$ is not a member of the model. Moreover, the limiting quantity is a lower bound of the divergence and minimizing the former does not guarantee the minimization of the later. Figure \ref{fig:UnderEstimation} represent this idea on a standard gaussian model where the mean is unknown. The true distribution is contaminated by a gaussian distribution $\mathcal{N}(\mu=10,\sigma=2)$. The minimum of the dual representation is attained at $\mu=1$ whereas it is attained at 0 for the true divergence. Figure (a) shows formula (\ref{eqn:ParametricDualForm}) and figure (b) shows formula (\ref{eqn:DivergenceDef}). The data contains 100 observations. We also represent the solution introduced in the following paragraph which overcomes this problem.
\begin{figure}[h]
\includegraphics[scale=0.31]{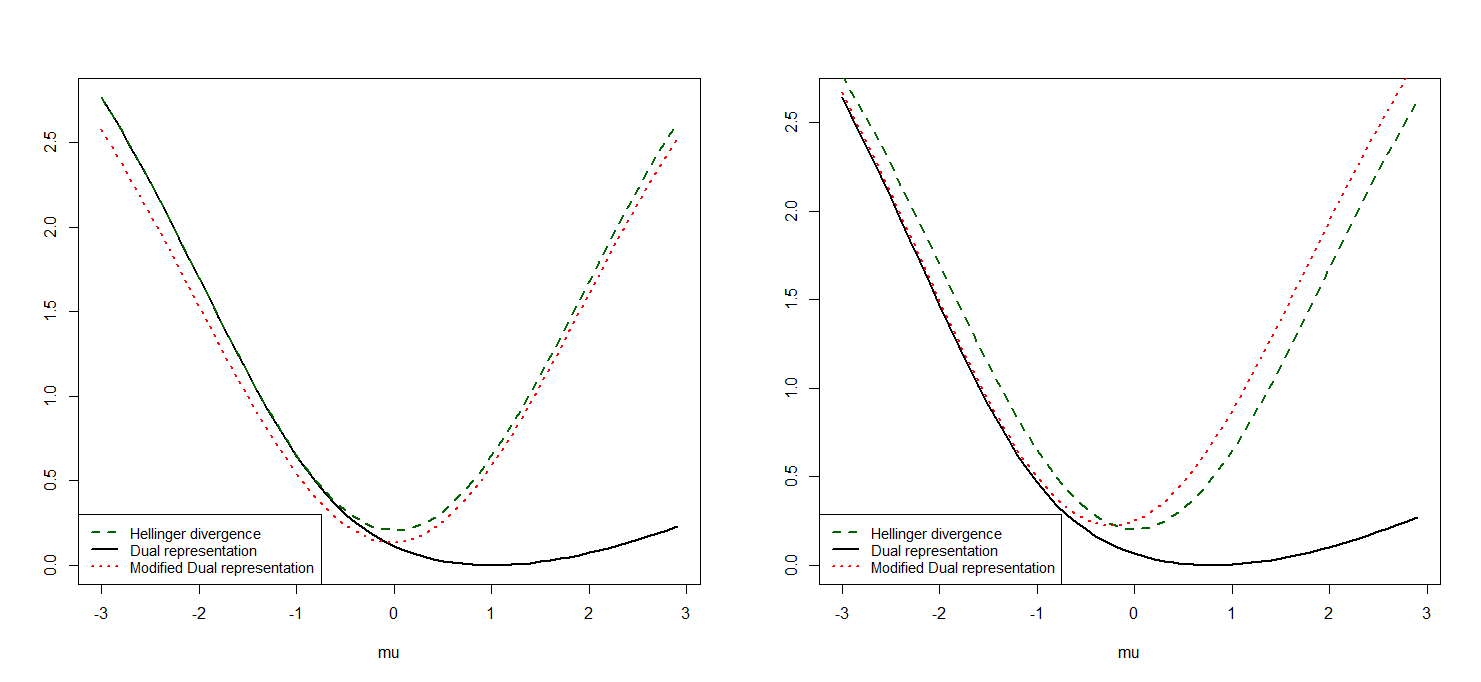}
\label{fig:UnderEstimation}
\caption{Underestimation caused by the classical dual representation compared to the new one. The true distribution is taken to be $0.9\mathcal{N}(\mu=0,\sigma=1)+0.1\mathcal{N}(\mu=10,\sigma=2)$. Figure (a) shows the dual representation defined by (\ref{eqn:ParametricDualForm}) in comparison with the new reformulation defined by (\ref{eqn:NewExactDualForm}). Figure (b) shows the corresponding approximations when we replace the true distribution by its empirical version.}
\end{figure}
\subsection{New reformulation of the dual representation}\label{subsec:KernelSolution}
As stated previously, when the data is contaminated, the supremum in (\ref{eqn:DivergenceDef}) is not attained and the approximation of the divergence between the model and the empirical distribution\footnote{Although this quantity is not well defined in the continuous case, the plug-in of the empirical distribution in the dual representation gives an idea about the divergence between the model and the empirical distribution.} is dramatically degraded. Since in the \emph{not approximated formula} (\ref{eqn:ParametricDualForm}) or (\ref{eqn:LieseVajdaRep}), the supremum is attained uniquely whenever $p_{\alpha}=p_T$, an intuitive idea is to replace $p_{\alpha}$ by an adaptive (nonparametric) estimator of $p_T$ which does not take into account the restriction of being in the model. We, then, have a dual representation where the supremum is, nearly, attained whether we are under the model or not. This way our criterion should inherit robustness properties against possible contamination as it approximates a $\varphi-$divergence.\\
One should be able to propose many solutions which correspond to this idea in order to reach a supremal attainment in the dual representation which may vary depending on the situation. For example, if we face a proportion of large-values outliers, one may add an extra component to $p_{\alpha}$, i.e. replace $p_{\alpha}$ with the mixture $\lambda p_{\alpha} + (1-\lambda) q_{\theta}$. The extra component covers the outliers part in a smooth way. This suggestion is still very specific and treats only the case of contaminated data. Any nonparametric estimator of $p_T$ can be used whose parameters may be determined automatically in the supremum calculus (since the supremum will be over the window parameter). In what follows $K_{n,w}$ denotes a kernel estimator\footnote{In formula (\ref{eqn:NewExactDualForm}) which comes next, the kernel function should not have a compact support such as the Epanechnikov kernel for the sake of integration existence. This is only temporary, and as we define the new estimator of the $\varphi-$divergence, the integral is replaced by a Monte-Carlo average where the kernel is only calculated on observed data and we get rid of the integration problem, and thus the use of a compact support kernel becomes possible.} of $p_T$ defined using a symmetric or asymmetric kernel with or without bias-correction treatment.\\
In order to introduce our new MD$\varphi$DE, let's go back to the beginning. We restart from formula (\ref{eqn:GeneralDualRep}) and use the following class of functions $\mathcal{F}_{\theta,n}=\{p_{\theta}/K_{n,w}, w>0\}$. The dual representation is now given by:
\begin{equation}
D_{\varphi}(p_{\phi},p_T) = \sup_{w>0}\left\{\int{\varphi'\left(\frac{p_{\phi}}{K_{n,w}}\right)(x)p_{\phi}(x)dx} - \int{\left[\frac{p_{\phi}}{K_{n,w}} \varphi'\left(\frac{p_{\phi}}{K_{n,w}}\right) - \varphi'\left(\frac{p_{\phi}}{K_{n,w}}\right)\right](y) p_T(y)dy}\right\}
\label{eqn:NewExactDualForm}
\end{equation}
The supremum calculus will produce a window for which the kernel $K_{n,w_{\text{opt}}}$ is the \emph{closest} (in some sense) to $p_{\phi^T}$. Now, we approximate $D_{\varphi}(p_{\phi},p_{\phi_T})$ by:
\[D_{\varphi}(p_{\phi},p_T) \approx \int{\varphi'\left(\frac{p_{\phi}}{K_{n,w_{\text{opt}}}}\right)(x)p_{\phi}(x)dx} - \int{\left[\frac{p_{\phi}}{K_{n,w_{\text{opt}}}} \varphi'\left(\frac{p_{\phi}}{K_{n,w_{\text{opt}}}}\right) - \varphi'\left(\frac{p_{\phi}}{K_{n,w_{\text{opt}}}}\right)\right](y) p_T(y)dy}\]
Since $p_T$ is the unknown object we hope to estimate, we replace it by its empirical version. Our final approximation is given by:
\[\hat{D}_{\varphi}(p_{\phi},p_T) = \int{\varphi'\left(\frac{p_{\phi}}{K_{n,w_{\text{opt}}}}\right)(x)p_{\phi}(x)dx} - \frac{1}{n}\sum_{i=1}^n{\left[\frac{p_{\phi}(y_i)}{K_{n,w_{\text{opt}}}(y_i)} \varphi'\left(\frac{p_{\phi}}{K_{n,w_{\text{opt}}}}\right)(y_i) - \varphi\left(\frac{p_{\phi}}{K_{n,w_{\text{opt}}}}\right)(y_i)\right]}\]
Define now the new minimum dual $\varphi-$divergence estimator by:
\begin{equation}
\hat{\phi}_{n} = \arginf_{\phi\in\Phi} \int{\varphi'\left(\frac{p_{\phi}}{K_{n,w_{\text{opt}}}}\right)(x)p_{\phi}(x)dx} - \frac{1}{n}\sum_{i=1}^n{\left[\frac{p_{\phi}(y_i)}{K_{n,w_{\text{opt}}}(y_i)} \varphi'\left(\frac{p_{\phi}}{K_{n,w_{\text{opt}}}}\right)(y_i) - \varphi\left(\frac{p_{\phi}}{K_{n,w_{\text{opt}}}}\right)(y_i)\right]}
\label{eqn:NewMDphiDE}
\end{equation}
An important question which arises now is: what should be the value of $w_{\text{opt}}$ since its calculus demands knowing the true distribution? For the time being, we do not have any specific propositions for the choice of the window. Taking into consideration that the window should be chosen in order to copy the true distribution, one needs a \emph{good} kernel estimator. In the literature of kernel estimation, there exists many rules (automatic or not) to determine sub-optimum windows such as the (Silverman's or Scott's) rule-of-thumb, cross-validation methods, etc. See for example \cite{VenablesRipley} Chap 5. Figure \ref{fig:UnderEstimation} shows in a gaussian example contaminated by a gaussian component $\mathcal{N}(10,2)$ the use of Silverman's rule with a gaussian kernel. The classical dual representation clearly underestimates the true divergence whereas the new reformulation stays close to it.\\
\begin{remark}
The new MD$\varphi$DE keeps the MLE as a member of its class for the choice of $\varphi(t)=-\log(t)+t-1$. Indeed, $\varphi'(t)=-1/t+1$ and $t\varphi'(t)-\varphi(t)=\log(t)$. Thus:
\begin{eqnarray*}
\int{\varphi'\left(\frac{p_{\phi}}{K_{n,w_{\text{opt}}}}\right)(x)p_{\phi}(x)dx} & = &  1\\
\frac{1}{n}\sum_{i=1}^n{\left[\frac{p_{\phi}(y_i)}{K_{n,w_{\text{opt}}}(y_i)} \varphi'\left(\frac{p_{\phi}}{K_{n,w_{\text{opt}}}}\right)(y_i) - \varphi\left(\frac{p_{\phi}}{K_{n,w_{\text{opt}}}}\right)(y_i)\right]} & = & \frac{1}{n}\sum_{i=1}^n{\log\left(p_{\phi}(y_i)\right)-\log\left(K_{n,w_{\text{opt}}}(y_i)\right)}
\end{eqnarray*}
and thus:
\begin{eqnarray*}
\hat{\phi}_{n} & = & \arginf_{\phi\in\Phi} 1 - \frac{1}{n}\sum_{i=1}^n{\log\left(p_{\phi}(y_i)\right)}+ \frac{1}{n}\sum_{i=1}^n{\log\left(K_{n,w_{\text{opt}}}(y_i)\right)}\\
 & = & \argsup_{\phi\in\Phi} \frac{1}{n}\sum_{i=1}^n{\log\left(p_{\phi}(y_i)\right)}\\
 & = & \text{MLE}
\end{eqnarray*}
\end{remark}
\begin{remark}
Replacing $p_T$ by the empirical distribution in (\ref{eqn:NewExactDualForm}) should not be a way to calculate an automatic window for the kernel. Indeed, although the proof of the dual representation supposes mutual absolute continuity between $p_{\alpha}, p_{\phi}$ and $p_T$, one still expects that the attainment condition of the supremum ($p_{\alpha}=p_T$) should hold as we replace $p_T$ by the empirical distribution. Indeed, if we insert directly $K_{n,w}$ in (\ref{eqn:DivergenceDef}) instead of $p_{\alpha}$, the maximization becomes on the window $w$, and the supremum will always be attained for $w=0$. When the kernel estimator is calculated by convolution, recall that $K_{n,w}=K_w*P_n\rightarrow P_n$ as w goes to zero. 
\end{remark}


\section{Asymptotic properties and robustness of the new reformulation}
We present in this section some of the asymptotic properties of the new MD$\varphi$DE defined by (\ref{eqn:NewMDphiDE}). We use Theorem 5.7 from the book of \cite{Vaart} which we restate here. Consistency of the kernel-based MD$\varphi$DE means that $\hat{\phi}_n$ defined by (\ref{eqn:NewMDphiDE}) converges in probability to $\phi^T$ the true vector of parameters when we are under the model, i.e. $P_T = P_{\phi^T}$. If we are not under the model, consistency becomes with respect to the projection of $P_T$ on the model in the sens of the divergence. In other terms, the projection $P_{\phi^T}$ is the member of the model $P_{\phi}$ whose parameters are defined by $\phi^T = \arginf_{\phi\in\Phi} D_{\varphi}(P_{\phi},P_T)$.\\ 
Similarly to \cite{BasuLindsay}, there are some cases (which are rare) in which consistency of the kernel-based MD$\varphi$DE does not need any condition on the kernel window. Thus, one may find simpler versions of the results we give below. We will see that a gaussian model with unknown mean is one of these examples where we give the corresponding conditions.\\
In a second part of this section, we calculate the influence function of the kernel-based MD$\varphi$DE for a given window, and show how the use of a kernel estimate instead of the model $p_{\alpha}$ in the dual formula interferes to make the IF bounded. \\
We use the same notations as in \cite{Vaart} to note integration. Thus, if $f$ is a $P-$integrable function, we denote $Pf$ to the integral $\int fdP$. Moreover, the notation $K_w*P$ denotes the operation of smoothing $dP$ by the kernel $K_w$ with bandwidth equal to $w$. This smoothing can be done by simple convolution as in the case of Rosenblatt-Parzen kernel estimator. Other kinds of smoothing are presented in Section \ref{subsec:BLapproach}. The smoothing is supposed to be an additive operator on distributions in the sense that $K_w*(P\pm Q) = K_w*P \pm K_w*Q$.
\subsection{Consistency}
Theorem 5.7 from \cite{Vaart} permits to treat the consistency of a general class of M-estimates. It is stated as follows:
\begin{theorem}
Let $M_n$ be random functions and let $M$ be a fixed function of $\phi$ such that for every $\varepsilon>0$
\begin{eqnarray}
\sup_{\phi\in\Phi} |M_n(\phi) - M(\phi)| \xrightarrow[]{\mathbb{P}} 0,\label{eqn:ConsistP1}\\
\inf_{\phi:\|\phi-\phi^T\|\geq\varepsilon} M(\phi) > M(\phi^T).\label{eqn:ConsistP2}
\end{eqnarray}
Then any sequence of estimators $\hat{\phi}_n$ with $M_n(\hat{\phi}_n)\leq M_n(\phi^T) - o_P(1)$ converges in probability to $\phi^T$.
\end{theorem}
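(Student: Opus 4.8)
This is the standard consistency argument for approximate M-estimators, and the plan is to prove it directly by showing that $\mathbb{P}\bigl(\|\hat{\phi}_n - \phi^T\|\geq\varepsilon\bigr)\to 0$ for each fixed $\varepsilon>0$. The mechanism is a squeeze: the separation hypothesis (\ref{eqn:ConsistP2}) turns the statement ``$\hat{\phi}_n$ is far from $\phi^T$'' into the quantitative statement ``$M(\hat{\phi}_n)$ exceeds $M(\phi^T)$ by a fixed positive amount'', while the uniform convergence (\ref{eqn:ConsistP1}) together with the near-minimizer property $M_n(\hat{\phi}_n)\leq M_n(\phi^T)-o_P(1)$ forbids $M(\hat{\phi}_n)$ from exceeding $M(\phi^T)$ by more than a quantity that vanishes in probability. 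These two facts are contradictory with probability tending to one.

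First I would fix $\varepsilon>0$ and set $\delta:=\inf_{\|\phi-\phi^T\|\geq\varepsilon}M(\phi)-M(\phi^T)$, which is strictly positive by (\ref{eqn:ConsistP2}). On the event $A_n:=\{\|\hat{\phi}_n-\phi^T\|\geq\varepsilon\}$ one has, pathwise on that event, $M(\hat{\phi}_n)-M(\phi^T)\geq\delta$. Next I would write the decomposition
\[
M(\hat{\phi}_n)-M(\phi^T)=\bigl(M(\hat{\phi}_n)-M_n(\hat{\phi}_n)\bigr)+\bigl(M_n(\hat{\phi}_n)-M_n(\phi^T)\bigr)+\bigl(M_n(\phi^T)-M(\phi^T)\bigr).
\]
The first and third terms are each bounded in absolute value by $\Delta_n:=\sup_{\phi\in\Phi}|M_n(\phi)-M(\phi)|$, which converges to $0$ in probability by (\ref{eqn:ConsistP1}). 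For the middle term, the hypothesis on $\hat{\phi}_n$ provides a sequence of random variables $R_n\xrightarrow{\mathbb{P}}0$ with $M_n(\hat{\phi}_n)-M_n(\phi^T)\leq R_n$ (absorbing the sign into $R_n$). Combining the three bounds, on $A_n$ we obtain $\delta\leq 2\Delta_n+R_n$, hence $\mathbb{P}(A_n)\leq\mathbb{P}\bigl(2\Delta_n+R_n\geq\delta\bigr)$. Since $2\Delta_n+R_n\xrightarrow{\mathbb{P}}0$ and $\delta>0$ is fixed, the right-hand side tends to $0$; as $\varepsilon$ was arbitrary, $\hat{\phi}_n\xrightarrow{\mathbb{P}}\phi^T$.

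There is no genuine obstacle here — the argument is purely a reduction using the two hypotheses, with no analytic estimate to push through. The only point that deserves care is the bookkeeping around the stochastic $o_P(1)$: the near-minimizer inequality holds with a single random sequence rather than a deterministic rate, so the final inequality $\delta\leq 2\Delta_n+R_n$ is an inequality between random variables valid only on $A_n$, and the conclusion comes from a union-type bound on probabilities, not from any pathwise reasoning. If one wishes to be scrupulous about the measurability of $\Delta_n$ (the supremum of an uncountable family), one replaces $\mathbb{P}$ by outer probability throughout, exactly as in \cite{Vaart}.
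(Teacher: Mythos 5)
Your argument is correct and is exactly the standard squeeze used to prove this result; the paper itself offers no proof, merely restating Theorem 5.7 of \cite{Vaart}, whose proof proceeds by the same decomposition (separation turns ``far from $\phi^T$'' into ``$M(\hat{\phi}_n)-M(\phi^T)\geq\delta$'', and uniform convergence plus the near-minimizer property force that gap to vanish in probability). Your remarks on absorbing the sign of the $o_P(1)$ term and on outer probability for the uncountable supremum are the right caveats and match van der Vaart's treatment.
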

In our approach, function $M_n$ corresponds to the criterion function $P_n H(P_n,\phi)$, where $H(P_n,\phi,y)$ is defined by:
\[
H(P_n,\phi,y) = \int{\varphi'\left(\frac{p_{\phi}}{K_{w}*P_n}\right)(x)p_{\phi}(x)dx} - \left[\frac{p_{\phi}(y)}{K_{w}*P_n(y)} \varphi'\left(\frac{p_{\phi}}{K_{w}*P_n}\right)(y) - \varphi\left(\frac{p_{\phi}}{K_{w}*P_n}\right)(y)\right].
\]
Function $M$ is simply defined by the \emph{expected}\footnote{In the literal sense and not mathematically.} limit in probability of $M_n$, since the Law of Large Numbers cannot be used because the average term is not a sum of i.i.d. random variables. It is given by $P_{\phi^T} h(P_{\phi^T},\phi)$ where $h(P_{\phi^T},\phi,x)$ is defined as:
\[h(P_{\phi^T},\phi,y) = \int{\varphi'\left(\frac{p_{\phi}}{p_{\phi^T}}\right)(x)p_{\phi}(x)dx} - \left[\frac{p_{\phi}(y)}{p_{\phi^T}(y)} \varphi'\left(\frac{p_{\phi}}{p_{\phi^T}}\right)(y) - \varphi\left(\frac{p_{\phi}}{p_{\phi^T}}\right)(y)\right]. \]
In order to prove (\ref{eqn:ConsistP1}), we propose to divide the argument into two parts. One can write:
\begin{equation}
\sup_{\phi\in\Phi} |P_nH(P_n,\phi) - P_{\phi^T}h(P_{\phi^T},\phi)| \leq \sup_{\phi\in\Phi} |P_nH(P_n,\phi) - P_nh(P_{\phi^T},\phi)| + \sup_{\phi\in\Phi} |P_{\phi^T}h(P_{\phi^T},\phi) - P_nh(P_{\phi^T},\phi)|.
\label{eqn:DecompProofIneq}
\end{equation}
Now, the second supremum tends to 0 in probability by the Glivenko-Cantelli theorem as soon as function $\phi\mapsto h(P_{\phi^T},\phi)$ is $P_{\phi^T}$--integrable, or more generally if $\{h(P,\phi), \phi\in\Phi\}$ is a Glivenko-Cantelli class of functions. The problem then resides in finding conditions under which the first supremum tends to 0 in probability. The remaining of the paragraph will be concerned with the search for such conditions. In the whole section concerning the consistency of our new estimator the window parameter $w$ is suppposed to depend directly on $n$ in order to be able to use Theorem 1 without any modifications. Besides, the construction of the estimator from (\ref{eqn:NewExactDualForm}) shows the explicit link of the window with $n$.\\
The following results are arranged in a way to give at first the most general case which one can offer. This result shows, according to our proof, that it is very difficult to derive a general and an applicable result in the same time. The ideas we provide are useful however to derive particular results according to a given divergence. We treat after that the case of divergences based on the Cressie-Read class of functions. Conditions of our result for this case still seem very restrictive. We finally discuss two particular cases when $\gamma$ is either in the interval $(0,1)$ or $(-1,0)$. Simpler conditions are derived and then verified in the gaussian model when we use a gaussian kernel.
\subsection{General Result}
We will derive in this paragraph a result which concerns the general class of divergence functions $\varphi$. Hereafter, simpler conditions will be proved for the particular class of Cressie-Read functions $\varphi_{\gamma}$. Let $\varphi^{\#}$ be the function $\varphi^{\#}(t) = t\varphi'(t)-\varphi(t)$, we then have:
\begin{multline*}
P_nH(P_n,\phi)-P_nh(P,\phi) = \int{\left[\varphi'\left(\frac{p_{\phi}}{K_{w}*P_n}\right)-\varphi'\left(\frac{p_{\phi}}{p_{\phi^T}}\right)\right](x)p_{\phi}(x)dx} \\ - \frac{1}{n}\sum_{i=1}^n{\varphi^{\#}\left(\frac{p_{\phi}}{K_{w}*P_n}\right)(y_i) - \varphi^{\#}\left(\frac{p_{\phi}}{p_{\phi^T}}\right)(y_i)}.
\end{multline*}
The key idea is to treat each term (the integral and the sum) separately and prove its uniform convergence in probability towards 0. Another important step is to apply the mean value theorem in order to transfer the difference from functions $\varphi'$ and $\varphi^{\#}$ into a difference between the kernel estimator and the true distribution where consistency of the former is exploited. We state now our general result:
\begin{theorem}
Assume that:
\begin{enumerate}
\item function $t\mapsto\varphi(t)$ is twice differentiable;
\item the kernel estimator is strongly consistent, i.e. $\sup_x\left|K_w*P_n(x) - p_{\phi^T}(x)\right|\rightarrow 0$ in probability;
\item function $x\mapsto \varphi^{\#}\left(\frac{p_{\phi}}{p_{\phi^T}(x)}\right)(x)$ is $P_T-$integrable for any $\phi$ in $\Phi$;
\item for any $\varepsilon>0$, there exists $n_0$ such that $\forall n\geq n_0$, the probability that the quantity 
\[\mathcal{A}_n = \sup_{\phi} \int{\frac{p_{\phi}^2(x)}{p_{\phi^T}(x) K_{w}*P_n(x)} \varphi''\left(\lambda_1(x)\frac{p_{\phi}}{K_{w}*P_n}(x)+(1-\lambda_1(x))\frac{p_{\phi}}{p_{\phi^T}}(x)\right) dx}\] 
is upper bounded independently of $n$, where $\lambda_1(x)\in(0,1)$, is greater than $1-\eta_n$ for $\eta_n\rightarrow 0$;
\item for any $\varepsilon>0$, there exists $n_0$ such that $\forall n\geq n_0$, the probability that the quantity 
\[\mathcal{B}_n = \sup_{\phi} \frac{1}{n}\sum_{i=1}^n{\frac{p_{\phi}}{p_{\phi^T} K_{w}*P_n}\left(\varphi^{\#}\right)'\left(\lambda_2(y_i)\frac{p_{\phi}}{K_{w}*P_n}+(1-\lambda_2(y_i))\frac{p_{\phi}}{p_{\phi^T}}\right)}\] 
is upper bounded, where $\lambda_2(y_i)\in(0,1)$, is greater than $1-\eta_n$ for $\eta_n\rightarrow 0$;
\item $\inf_{\phi:\|\phi-\phi^T\|\geq\varepsilon} P_T h(P_T,\phi) > P_T h(P_T,\phi^T)$,
\end{enumerate}
then the minimum dual $\varphi-$divergence estimator defined by (\ref{eqn:NewMDphiDE}) is consistent whenever it exists.
\end{theorem}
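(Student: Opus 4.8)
The plan is to apply Theorem~1 with $M_n(\phi)=P_nH(P_n,\phi)$ and $M(\phi)=P_{\phi^T}h(P_{\phi^T},\phi)$ (recall that here $P_T=P_{\phi^T}$, which is what makes condition 2 and the definition of $h$ coherent). Hypothesis (\ref{eqn:ConsistP2}) of Theorem~1 is exactly assumption~6. The estimator $\hat\phi_n$ of (\ref{eqn:NewMDphiDE}) is by construction a minimizer of $M_n$, so $M_n(\hat\phi_n)\le M_n(\phi^T)$ holds exactly, hence a fortiori $M_n(\hat\phi_n)\le M_n(\phi^T)-o_P(1)$. Thus everything reduces to proving the uniform convergence (\ref{eqn:ConsistP1}), for which we use the split (\ref{eqn:DecompProofIneq}).

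For the second supremum $\sup_\phi|P_{\phi^T}h(P_{\phi^T},\phi)-P_nh(P_{\phi^T},\phi)|$, note that $y\mapsto h(P_{\phi^T},\phi,y)$ equals a $\phi$-dependent constant minus $\varphi^{\#}(p_\phi/p_{\phi^T})(y)$, which is $P_T$-integrable for each $\phi$ by assumption~3; assuming in addition (or deriving from a $P_T$-integrable envelope) that $\{h(P_{\phi^T},\phi):\phi\in\Phi\}$ is a Glivenko--Cantelli class, this supremum tends to $0$ in probability. For the first supremum, start from the identity displayed just before the theorem, which writes $P_nH(P_n,\phi)-P_nh(P,\phi)$ as an integral piece built from $\varphi'(p_\phi/K_w*P_n)-\varphi'(p_\phi/p_{\phi^T})$ and a Monte-Carlo piece built from $\varphi^{\#}(p_\phi/K_w*P_n)-\varphi^{\#}(p_\phi/p_{\phi^T})$. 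Apply the mean value theorem to $\varphi'$ (licit by assumption~1) and to $\varphi^{\#}$, whose derivative $(\varphi^{\#})'(t)=t\varphi''(t)$ exists under assumption~1, evaluating at the two arguments $p_\phi/K_w*P_n$ and $p_\phi/p_{\phi^T}$; the intermediate point has the form $\lambda(x)\,p_\phi/K_w*P_n+(1-\lambda(x))\,p_\phi/p_{\phi^T}$ with $\lambda(x)\in(0,1)$, which is precisely the argument appearing inside $\mathcal{A}_n$ and $\mathcal{B}_n$. Since the common factor $p_\phi/K_w*P_n-p_\phi/p_{\phi^T}$ equals $p_\phi\,(p_{\phi^T}-K_w*P_n)/(K_w*P_n\cdot p_{\phi^T})$, the integral piece is bounded in absolute value (after taking $\sup_\phi$, and using that the sup in $x$ below does not depend on $\phi$) by
\[
\sup_x\bigl|K_w*P_n(x)-p_{\phi^T}(x)\bigr|\cdot\sup_{\phi}\int\frac{p_\phi^2(x)}{p_{\phi^T}(x)\,K_w*P_n(x)}\,\varphi''\bigl(\lambda_1(x)\tfrac{p_\phi}{K_w*P_n}+(1-\lambda_1(x))\tfrac{p_\phi}{p_{\phi^T}}\bigr)\,dx=\sup_x\bigl|K_w*P_n-p_{\phi^T}\bigr|\cdot\mathcal{A}_n,
\]
and likewise the Monte-Carlo piece is bounded by $\sup_x\bigl|K_w*P_n-p_{\phi^T}\bigr|\cdot\mathcal{B}_n$. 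By assumption~2 the first factor is $o_P(1)$; by assumptions~4 and~5, $\mathcal{A}_n$ and $\mathcal{B}_n$ are $O_P(1)$ (bounded on events of probability $\ge 1-\eta_n$ with $\eta_n\to 0$). On the intersection of these good events a product of an $o_P(1)$ term and an $O_P(1)$ term is $o_P(1)$, and the complementary event has vanishing probability, so the first supremum in (\ref{eqn:DecompProofIneq}) tends to $0$ in probability. Combining the two bounds gives (\ref{eqn:ConsistP1}), and Theorem~1 delivers $\hat\phi_n\xrightarrow[]{\mathbb{P}}\phi^T$.

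The step demanding the most care is the bookkeeping around assumptions~4 and~5: the intermediate points $\lambda_1(x),\lambda_2(y_i)$ produced by the mean value theorem depend on $x$ (resp.\ $y_i$), on $\phi$, and on $n$ through $K_w*P_n$, so one must check that $\mathcal{A}_n,\mathcal{B}_n$ are bona fide (measurable) random variables and that the uniformity in $\phi$ is preserved, and one must intersect the event $\{\sup_x|K_w*P_n-p_{\phi^T}|\le\delta\}$ with $\{\mathcal{A}_n\le C\}\cap\{\mathcal{B}_n\le C\}$ without losing probability mass. A further subtlety is the $\mathcal{B}_n$ term, since it is a Monte-Carlo average of a function of the sample through $K_w*P_n$: no law of large numbers or independence is available, but the pathwise factorization above reduces its control entirely to the $\sup_x$ term times the uniform-in-$\phi$ bound postulated in assumption~5, which is exactly why that assumption is formulated the way it is.
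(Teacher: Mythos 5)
Your proposal is correct and follows essentially the same route as the paper's own proof: Theorem~1 applied to $M_n=P_nH(P_n,\cdot)$ and $M=P_{\phi^T}h(P_{\phi^T},\cdot)$, the split (\ref{eqn:DecompProofIneq}) with Glivenko--Cantelli for the second supremum, and the mean value theorem turning the first supremum into $\sup_x|K_w*P_n-p_{\phi^T}|$ times $\mathcal{A}_n$ (resp.\ $\mathcal{B}_n$). The only differences are cosmetic --- you phrase the conclusion as an $o_P(1)\cdot O_P(1)$ product where the paper chooses $n$ so that the kernel error is below $\min(\varepsilon,\varepsilon/\mathcal{A}_n,\varepsilon/\mathcal{B}_n)$, and you make explicit the Glivenko--Cantelli-class caveat and the minimizer condition of Theorem~1, both of which the paper only mentions in passing.
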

\begin{proof}
Let $\varepsilon>0$. We want to prove that $\lim_{n\rightarrow\infty} \mathbb{P}\left(\sup_{\phi\in\Phi} |P_nH(P_n,\phi) - P_nh(P_T,\phi)|<\varepsilon\right) = 1$. Since $\varphi$ is twice differentiable (which also implies the differentiability of $\varphi^{\#}$), then by the mean value theorem, there exist two functions $\lambda_1,\lambda_2:\mathbb{R}\rightarrow (0,1)$ such that:
\begin{eqnarray*}
\varphi'\left(\frac{p_{\phi}}{K_{w}*P_n}\right)-\varphi'\left(\frac{p_{\phi}}{p_{\phi^T}}\right) & = & \varphi''\left(\lambda_1(x)\frac{p_{\phi}}{K_{w}*P_n}+(1-\lambda_1(x))\frac{p_{\phi}}{p_{\phi^T}}\right) \left[\frac{p_{\phi}}{K_{w}*P_n} - \frac{p_{\phi}}{p_{\phi^T}}\right], \\
\varphi^{\#}\left(\frac{p_{\phi}}{K_{w}*P_n}\right)-\varphi^{\#}\left(\frac{p_{\phi}}{p_{\phi^T}}\right) & = & \left(\varphi^{\#}\right)'\left(\lambda_2(y_i)\frac{p_{\phi}}{K_{w}*P_n}+(1-\lambda_2(y_i))\frac{p_{\phi}}{p_{\phi^T}}\right) \left[\frac{p_{\phi}}{K_{w}*P_n} - \frac{p_{\phi}}{p_{\phi^T}}\right]. \\
\end{eqnarray*}
Let $n$ be sufficiently large such that:
\[\sup_{x}\left|K_{w}*P_n(x) - p_{\phi^T}(x)\right| \leq \min\left(\varepsilon,\frac{\varepsilon}{\mathcal{A}_n},\frac{\varepsilon}{\mathcal{B}_n}\right) \]
where $\mathcal{A}_n$ and $\mathcal{B}_n$ are as described in 4 and 5 in the theorem.
Provided that constants $\mathcal{A}_n$ and $\mathcal{B}_n$ exist and are bounded independently of $n$, this event occurs with probability $1-\eta_n$ with $\eta_n\rightarrow 0$ by the strong consistency assumption (point 2). This implies that both events:
\begin{eqnarray*}
\left|\int{\left[\varphi'\left(\frac{p_{\phi}}{K_{w}*P_n}\right)-\varphi'\left(\frac{p_{\phi}}{p_{\phi^T}}\right)\right]p_{\phi}}\right| &\leq & \frac{\varepsilon}{\mathcal{A}_n}\int{\frac{p_{\phi}^2}{p_{\phi^T} K_{w}*P_n} \varphi''\left(\lambda_1(x)\frac{p_{\phi}}{K_{w}*P_n}+(1-\lambda_1(x))\frac{p_{\phi}}{p_{\phi^T}}\right) dx}\\
& \leq & \varepsilon, \\
\left|\frac{1}{n}\sum_{i=1}^n{\varphi^{\#}\left(\frac{p_{\phi}}{K_{w}*P_n}\right)(y_i) - \varphi^{\#}\left(\frac{p_{\phi}}{p_{\phi^T}}\right)(y_i)}\right| & \leq & \frac{\varepsilon}{\mathcal{B}_n} \frac{1}{n}\sum_{i=1}^n{\frac{p_{\phi}}{p_{\phi^T} K_{w}*P_n}\left(\varphi^{\#}\right)'\left(\lambda_2(y_i)\frac{p_{\phi}}{K_{w}*P_n}+(1-\lambda_2(y_i))\frac{p_{\phi}}{p_{\phi^T}}\right)(y_i)} \\
& \leq & \varepsilon
\end{eqnarray*}
happen with probability greater than $1-\eta_n$ independently of $\phi$. Finally, we conclude that
\[\mathbb{P}\left(\sup_{\phi\in\Phi} |P_nH(P_n,\phi) - P_nh(P,\phi)|<2\varepsilon\right) \geq 1-\eta_n,\]
and hence $\sup_{\phi\in\Phi} |P_nH(P_n,\phi) - P_nh(P,\phi)|\rightarrow 0$ in probability. To end the proof, we use assumption 3 of the present theorem together with the Glivenko-Cantelli theorem to conclude that $\sup_{\phi\in\Phi} |P_{\phi^T}h(P_{\phi^T},\phi) - P_nh(P_{\phi^T},\phi)|\rightarrow 0$ in probability. Using inequality \ref{eqn:DecompProofIneq}, we conclude that $\sup_{\phi\in\Phi} |P_nH(P_n,\phi) - P_{\phi^T}h(P_{\phi^T},\phi)| \rightarrow 0$ in probability. We end with the use of Theorem 1. Condition (\ref{eqn:ConsistP1}) is verified by the previous arguments, and Condition (\ref{eqn:ConsistP2}) is what we have assumed in point 6 of the present theorem. By definition of the kernel-based MD$\varphi$DE as a minimum of the criterion function $\phi\mapsto P_nH(P_n,\phi)$ Theorem 1 entails the consistency of our new estimator.
\end{proof}

This result is very general since function $\varphi$ is only supposed to be twice differentiable\footnote{Recall that $\varphi$ should also verify other conditions related to the notion of $\varphi-$divergences as mentionned in the introduction.}. For consistency results one can consult for example \cite{WiedWeibbach}, \cite{ZambomDias} or \cite{Libengue} Chap. 1 for a brief survey on symmetric kernels. If one is using asymmetric kernels, unfortunately consistency is only proved on every compact subset of the support of the distribution function, see \cite{Bouezmarni} or \cite{Libengue} Chap. 3 for a more general approach. On the other hand, it is not simple to verify conditions 4 and 5 for the general class of functions $\varphi$, and one may derive for his own case study a simpler set of conditions on the basis of this result. For condition 4, if one is using for example the $\chi^2$ divergence, $\varphi''(t)=1$ so that function $\lambda_1$ is no longer there and the expression of $\mathcal{A}_n$ is simplified. The main subtlety in condition 5 is that the sum is over strongly dependent random variables. We will see in the case of divergences with $\varphi=\varphi_{\gamma}$ for $\gamma\in(-1,0)$ that this sum becomes over only i.i.d. random variables and is simple to be taken care of.\\
Assumption 6 means that function $\phi\mapsto P_Th(P_T,\phi)$ has a unique and well separated minimum. Uniqueness is already in our hands since function $\phi\mapsto P_Th(P_T,\phi)$ is non other than the dual representation (with the supremum calculated) of the $\varphi-$divergence $D_{\varphi}(p_{\phi},p_{\phi^T})$. Using the property that $D_{\varphi}(p_{\phi},p_{\phi^T})=0$ iff $p_{\phi}=p_{\phi^T}$, uniqueness is immediately verified as long as the model is identifiable.
\subsection{General Result for Power Divergences}
Power divergences are the divergences defined through the class of Cressie-Read functions defined by:
\[\varphi_{\gamma}(t) = \frac{t^{\gamma} - \gamma t + \gamma - 1}{\gamma(\gamma-1)}.\]
The kernel based MD$\varphi$DE is defined as:
\[\hat{\phi}_{n} = \arginf_{\phi\in\Phi} \frac{1}{\gamma-1}\int{\frac{p_{\phi}^{\gamma}}{\left(K_{w}*P_n\right)^{1-\gamma}}(x)dx} - \frac{1}{n\gamma}\sum_{i=1}^n{\left(\frac{p_{\phi}(y_i)}{K_{w}*P_n(y_i)}\right)^{\gamma} }.\]
The idea here is to get rid of $p_{\phi}$ from many places and replace the derivatives of $\varphi'$ and $\varphi^{\#}$ with a more explicit formulas. The idea of using the mean value theorem is kept, but this time it is applied on simpler functions. Here we have:
\begin{multline*}
P_nH(P_n,\phi)-P_nh(P,\phi) = \frac{1}{\gamma-1}\int{\frac{\left(K_{w}*P_n\right)^{1-\gamma}-p_{\phi^T}^{1-\gamma}}{p_{\phi}^{-\gamma}}(x)dx} - \frac{1}{n\gamma}\sum_{i=1}^n{\frac{\left(K_{w}*P_n\right)^{-\gamma}-p_{\phi^T}^{-\gamma}}{p_{\phi}^{-\gamma}}(y_i)}.
\end{multline*}
\begin{theorem}
For the class of power divergences defined through the class of Cressie-Read functions $\varphi_{\gamma}$, assume that:
\begin{enumerate}
\item the kernel estimator is strongly consistent, i.e. $\sup_x\left|K_w*P_n(x) - p_{\phi^T}(x)\right|\rightarrow 0$ in probability;
\item function $x\mapsto \left(\frac{p_{\phi}}{p_{\phi^T}(x)}\right)^{\gamma}(x)$ is $P_T-$integrable for any $\phi$ in $\Phi$;
\item for any $\varepsilon>0$, there exists $n_0$ such that $\forall n\geq n_0$, the probability that the quantity 
\[\mathcal{A}_n = \sup_{\phi} \int{\frac{ \left[\lambda_1(x)K_w*P_n(x) + (1-\lambda_1(x))p_{\phi^T}(x)\right]^{-\gamma}}{p_{\phi}^{-\gamma}(x)} dx}\] 
is upper bounded independently of $n$, where $\lambda_1(x)\in(0,1)$, is greater than $1-\eta_n$ for $\eta_n\rightarrow 0$;
\item for any $\varepsilon>0$, there exists $n_0$ such that $\forall n\geq n_0$, the probability that the quantity 
\[\mathcal{B}_n = \sup_{\phi}\frac{1}{n}\sum_{i=1}^n{\frac{ \left[\lambda_2(y_i)K_w*P_n(y_i) + (1-\lambda_2(y_i))p_{\phi^T}(y_i)\right]^{-\gamma-1}}{p_{\phi}^{-\gamma}(y_i)}}\] 
is upper bounded independently of $n$, where $\lambda_2(y_i)\in(0,1)$, is greater than $1-\eta_n$ for $\eta_n\rightarrow 0$;
\item $\inf_{\phi:\|\phi-\phi^T\|\geq\varepsilon} P_T h(P_T,\phi) > P_T h(P_T,\phi^T)$,
\end{enumerate}
then the minimum dual $\varphi-$divergence estimator defined by (\ref{eqn:NewMDphiDE}) is consistent whenever it exists.
\end{theorem}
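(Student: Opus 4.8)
The plan is to run the argument of Theorem 3 almost verbatim, the only change being that for $\varphi=\varphi_{\gamma}$ every derivative that appears becomes an explicit power of its argument, so the mean value theorem can be applied to the elementary maps $u\mapsto u^{1-\gamma}$ and $u\mapsto u^{-\gamma}$ instead of to $\varphi'$ and $\varphi^{\#}$. First I would record the identities $\varphi_{\gamma}'(t)=\frac{t^{\gamma-1}-1}{\gamma-1}$, $\varphi_{\gamma}^{\#}(t)=t\varphi_{\gamma}'(t)-\varphi_{\gamma}(t)=\frac{t^{\gamma}-1}{\gamma}$ and $(\varphi_{\gamma}^{\#})'(t)=t^{\gamma-1}$. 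Substituting these into the decomposition $P_nH(P_n,\phi)-P_nh(P,\phi)$ and pulling the common factor $p_{\phi}^{\gamma}$ out of the integral and of the empirical average reproduces exactly the identity displayed just before the theorem, i.e. the difference of the two pieces $\frac{1}{\gamma-1}\int\frac{(K_w*P_n)^{1-\gamma}-p_{\phi^T}^{1-\gamma}}{p_{\phi}^{-\gamma}}(x)\,dx$ and $\frac{1}{n\gamma}\sum_{i=1}^n\frac{(K_w*P_n)^{-\gamma}-p_{\phi^T}^{-\gamma}}{p_{\phi}^{-\gamma}}(y_i)$.

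Next I would apply the mean value theorem pointwise. For the integral term, to $u\mapsto u^{1-\gamma}$ between the two positive values $K_w*P_n(x)$ and $p_{\phi^T}(x)$, which produces the factor $(1-\gamma)\big[\lambda_1(x)K_w*P_n(x)+(1-\lambda_1(x))p_{\phi^T}(x)\big]^{-\gamma}$ times $\big[K_w*P_n(x)-p_{\phi^T}(x)\big]$; for the sum term, to $u\mapsto u^{-\gamma}$, producing $-\gamma\big[\lambda_2(y_i)K_w*P_n(y_i)+(1-\lambda_2(y_i))p_{\phi^T}(y_i)\big]^{-\gamma-1}$ times $\big[K_w*P_n(y_i)-p_{\phi^T}(y_i)\big]$. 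The prefactors cancel the $\frac{1}{\gamma-1}$ and $\frac{1}{\gamma}$ in modulus ($\tfrac{|1-\gamma|}{|\gamma-1|}=1$, $\tfrac{|-\gamma|}{|\gamma|}=1$), and the intermediate point is a convex combination of two densities, hence strictly positive, so the negative powers are well defined. Bounding $|K_w*P_n(x)-p_{\phi^T}(x)|$ by $\sup_x|K_w*P_n(x)-p_{\phi^T}(x)|$ and taking the supremum over $\phi$, what remains inside is precisely $\mathcal{A}_n$ and $\mathcal{B}_n$, so on the event $\{\sup_x|K_w*P_n(x)-p_{\phi^T}(x)|\le\min(\varepsilon,\varepsilon/\mathcal{A}_n,\varepsilon/\mathcal{B}_n)\}$ one gets $\sup_{\phi}|P_nH(P_n,\phi)-P_nh(P,\phi)|\le 2\varepsilon$. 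By assumption 1 (strong consistency of $K_w*P_n$) together with assumptions 3 and 4 (which make $\mathcal{A}_n$ and $\mathcal{B}_n$ bounded in $n$ with probability tending to one), this event has probability at least $1-\eta_n$ with $\eta_n\to0$, so the first supremum in the decomposition (\ref{eqn:DecompProofIneq}) tends to $0$ in probability.

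I would then dispose of the second supremum in (\ref{eqn:DecompProofIneq}) by Glivenko--Cantelli: the only $y$-dependent part of $h(P_{\phi^T},\phi,\cdot)$ is $-\varphi_{\gamma}^{\#}(p_{\phi}/p_{\phi^T})=-\tfrac{1}{\gamma}\big((p_{\phi}/p_{\phi^T})^{\gamma}-1\big)$, which is $P_T$--integrable for every $\phi$ by assumption 2, so $\sup_{\phi}|P_{\phi^T}h(P_{\phi^T},\phi)-P_nh(P_{\phi^T},\phi)|\to0$ in probability (passing through an envelope or bracketing bound if one wants to upgrade pointwise integrability to a Glivenko--Cantelli class). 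Combining the two estimates via (\ref{eqn:DecompProofIneq}) gives $\sup_{\phi}|P_nH(P_n,\phi)-P_Th(P_T,\phi)|\to0$ in probability, which is hypothesis (\ref{eqn:ConsistP1}) of Theorem 1; hypothesis (\ref{eqn:ConsistP2}) is assumption 5, its uniqueness part following from identifiability because $\phi\mapsto P_Th(P_T,\phi)$ is the (supremum-evaluated) dual form of $D_{\varphi}(p_{\phi},p_{\phi^T})$. Since $\hat{\phi}_n$ minimises $\phi\mapsto P_nH(P_n,\phi)$ it satisfies $M_n(\hat{\phi}_n)\le M_n(\phi^T)$, so Theorem 1 yields $\hat{\phi}_n\to\phi^T$ in probability.

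As in Theorem 3, there is no deep obstacle here — the content is purely a bookkeeping specialisation — but the point worth flagging is that the empirical average defining $\mathcal{B}_n$ is a sum over strongly dependent variables, since each summand involves $K_w*P_n$ which is built from all the data; hence no law of large numbers is available and the required boundedness has to be postulated directly in assumption 4 rather than derived. Verifying that this boundedness (and that of $\mathcal{A}_n$, controlled by the tails of $p_{\phi^T}$ against $p_{\phi}^{-\gamma}$) genuinely holds is exactly where the sign and size of $\gamma$ intervene, and is the task taken up afterwards for the ranges $\gamma\in(0,1)$ and $\gamma\in(-1,0)$.
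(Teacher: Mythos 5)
Your proof is correct and follows essentially the same route as the paper's: the mean value theorem applied to the elementary power maps $u\mapsto u^{1-\gamma}$ and $u\mapsto u^{-\gamma}$, yielding exactly the intermediate convex combinations that define $\mathcal{A}_n$ and $\mathcal{B}_n$, followed by the event $\{\sup_x|K_w*P_n(x)-p_{\phi^T}(x)|\le\min(\varepsilon,\varepsilon/\mathcal{A}_n,\varepsilon/\mathcal{B}_n)\}$ and the same Glivenko--Cantelli and Theorem 1 conclusion as in Theorem 2. Your closing remark on the dependence of the summands in $\mathcal{B}_n$ also matches the paper's own discussion.
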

\begin{proof}
Let $x$ and $a\neq 0,1$ be real numbers. By the mean value theorem, there exists $\lambda(x)\in(0,1)$ such that:
\[\left(K_w*P_n\right)^{a}(x) - p_{\phi^T}^{a}(x) = a \left[\lambda(x)K_w*P_n(x) + (1-\lambda(x))p_{\phi^T}(x)\right]^{a-1}\left(K_w*P_n(x) - p_{\phi^T}(x)\right)\]
This implies both identities:
\begin{eqnarray*}
\frac{\left(K_w*P_n\right)^{-\gamma+1}(x) - p_{\phi^T}^{-\gamma+1}(x)}{p_{\phi}^{-\gamma}(x)} & = & (1-\gamma)\frac{ \left[\lambda_1(x)K_w*P_n(x) + (1-\lambda_1(x))p_{\phi^T}(x)\right]^{-\gamma}}{p_{\phi}^{-\gamma}(x)} \times \left(K_w*P_n(x) - p_{\phi^T}(x)\right)\\
\frac{\left(K_w*P_n\right)^{-\gamma}(y_i) - p_{\phi^T}^{-\gamma}(y_i)}{p_{\phi}^{-\gamma}(y_i)} & = & -\gamma\frac{ \left[\lambda_2(y_i)K_w*P_n(y_i) + (1-\lambda_2(y_i))p_{\phi^T}(y_i)\right]^{-\gamma-1}}{p_{\phi}^{-\gamma}(y_i)}  \times \left(K_w*P_n(y_i) - p_{\phi^T}(y_i)\right)
\end{eqnarray*}
Let $n$ be sufficiently large such that:
\[\sup_{x}\left|K_w*P_n(x) - p_{\phi^T}(x)\right| \leq \min\left(\varepsilon, \frac{\varepsilon}{\mathcal{A}_n}, \frac{\varepsilon}{\mathcal{B}_n}\right)\]
where $\mathcal{A}_n$ and $\mathcal{B}_n$ as defined in the theorem. This event occurs with probability greater than $1-\eta_n$ with $\eta_n\rightarrow 0$ by the strong consistency of the kernel. The remaining of the arguments is the same as for Theorem 2.
\end{proof}

Quantities $\mathcal{A}_n$ and $\mathcal{B}_n$ can be simplified according to the range of values of $\gamma$ in a way that $\lambda_1$ and $\lambda_2$ play no role in the calculus. These functions have no explicit formulas in general and the only information we have in hand is that they take their values in $(0,1)$. Indeed, if $\gamma>0$, using the convexity of function $t\mapsto t^{-\gamma}$ over $\mathbb{R}_+$ and the fact that both quantities $\lambda_1(x)$ and $1-\lambda_1(x)$ are upper bounded by 1, we may write:
\[\mathcal{A}_n \leq \sup_{\phi} \int{\frac{ \left[K_w*P_n(x)\right]^{-\gamma}}{p_{\phi}^{-\gamma}(x)} dx} + \int{\frac{ p_{\phi^T}(x)^{-\gamma}}{p_{\phi}^{-\gamma}(x)} dx}.\]
If $\gamma <0$, one may use the increasing property of function $t\mapsto t^{-\gamma}$ on $\mathbb{R}_+$ to deduce that:
\[\mathcal{A}_n \leq \sup_{\phi} \int{\frac{ \left[K_w*P_n(x)+p_{\phi^T}(x)\right]^{-\gamma}}{p_{\phi}^{-\gamma}(x)} dx}. \]
Moreover, for values of $\gamma$ in $(-1,0)$, we may use Jensen's inequality to go further and write:
\[\mathcal{A}_n \leq \sup_{\phi} \left(\int{\frac{ K_w*P_n(x)+p_{\phi^T}(x)}{p_{\phi}(x)} dx}\right)^{-\gamma}. \]
Similar upper bounds can be established for $\mathcal{B}_n$. When $\gamma>-1, \gamma\neq 0,1$, we use again the convexity of function $t\mapsto t^{-\gamma-1}$ to get the following upper bound:
\[\mathcal{B}_n \leq \sup_{\phi}\frac{1}{n}\sum_{i=1}^n{\frac{ \left[K_w*P_n(y_i)\right]^{-\gamma-1} + p_{\phi^T}(y_i)^{-\gamma-1}}{p_{\phi}^{-\gamma}(y_i)}}.\]
Finally, when $\gamma\leq-1$, we use the increasing property of function $t\mapsto t^{-\gamma-1}$ over $\mathbb{R}_+$. We get:
\[\mathcal{B}_n \leq \sup_{\phi}\frac{1}{n}\sum_{i=1}^n{\frac{ \left[K_w*P_n(y_i) + p_{\phi^T}(y_i)\right]^{-\gamma-1}}{p_{\phi}^{-\gamma}(y_i)}}.\]
\subsection{Case of power divergences with $\gamma\in(0,1)$}
In this paragraph, we try to derive simpler conditions than those in Theorems 2 and 3 . For the class of Cressie-Read defined by $\varphi_{\gamma}$ with $\gamma\in(0,1)$, the kernel-based MD$\varphi$DE has the form:
\begin{multline}
\hat{\phi}_{n} = \arginf_{\phi\in\Phi} \frac{1}{\gamma-1}\int{\left(K_{w}*P_n\right)^{1-\gamma}(x)p_{\phi}^{\gamma}(x)dx} - \frac{1}{n\gamma}\sum_{i=1}^n{\left(\frac{p_{\phi}(y_i)}{K_{w}*P_n(y_i)}\right)^{\gamma} } - \frac{1}{\gamma(\gamma-1)}.
\label{eqn:NewMDphiDEPDbis}
\end{multline}
Our main problem is always the study of the difference $P_nH(P_n,\phi)-P_nh(P,\phi)$ which is given by:
\begin{multline*}
P_nH(P_n,\phi)-P_nh(P,\phi) = \frac{1}{\gamma-1}\int{\left[\left(K_{w}*P_n\right)^{1-\gamma}-p_{\phi^T}^{1-\gamma}\right](x)p_{\phi}(x)dx} \\ - \frac{1}{n\gamma}\sum_{i=1}^n{\left(\frac{p_{\phi}}{K_{w}*P_n\times p_{\phi^T}}\right)^{\gamma}(y_i) \left(p_{\phi^T}^{\gamma}(y_i) - [K_{w}*P_n]^{\gamma}(y_i)\right)}.
\end{multline*}
The key idea here is to use the uniform continuity of both functions $t\mapsto t^{\gamma}$ and $t\mapsto t^{1-\gamma}$. 
\begin{theorem}
For power divergences defined by $\varphi_{\gamma}$ with $\gamma\in(0,1)$, suppose that:
\begin{enumerate}
\item the kernel estimator is strongly consistent, i.e. $\sup_x\left|K_w*P_n(x) - p_{\phi^T}(x)\right|\rightarrow 0$ in probability;
\item function $x\mapsto \left(\frac{p_{\phi}}{p_{\phi^T}(x)}\right)^{\gamma}(x)$ is $P_T-$integrable for any $\phi$ in $\Phi$;
\item the quantity $\mathcal{A} = \sup_{\phi}\int{p_{\phi}^{\gamma}(x)dx}$ is upper bounded;
\item for any $\varepsilon>0$, there exists $n_0$ such that $\forall n\geq n_0$, the probability that the quantity 
\[\mathcal{B}_n = \sup_{\phi}\frac{1}{n}\sum_{i=1}^n{\left(\frac{p_{\phi}}{K_{w}*P_n\times p_{\phi^T}}\right)^{\gamma}(y_i)}\] 
is upper bounded independently of $n$ is greater than $1-\eta_n$ for $\eta_n\rightarrow 0$;
\item $\inf_{\phi:\|\phi-\phi^T\|\geq\varepsilon} P_T h(P_T,\phi) > P_T h(P_T,\phi^T)$,
\end{enumerate}
then the minimum dual $\varphi-$divergence estimator defined by (\ref{eqn:NewMDphiDEPDbis}) is consistent whenever it exists.
\end{theorem}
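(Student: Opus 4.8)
The plan is to apply Theorem 1 with $M_n(\phi)=P_nH(P_n,\phi)$ and $M(\phi)=P_Th(P_T,\phi)$, where, as in Theorems 2 and 3, we work under the model so that $P_T=P_{\phi^T}$. Hypothesis (\ref{eqn:ConsistP2}) of Theorem 1 is exactly assumption 5 (its uniqueness part being automatic, as noted after Theorem 2, since $\phi\mapsto P_Th(P_T,\phi)$ is the dual form of $D_{\varphi_\gamma}(p_\phi,p_{\phi^T})$, which vanishes only at $p_\phi=p_{\phi^T}$, so model identifiability suffices), and the definition of $\hat\phi_n$ as an exact minimiser of $\phi\mapsto P_nH(P_n,\phi)$ gives $M_n(\hat\phi_n)\le M_n(\phi^T)$. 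So the whole task is the uniform convergence (\ref{eqn:ConsistP1}), and as in the proof of Theorem 2 I would split it through (\ref{eqn:DecompProofIneq}) into $\sup_\phi|P_nH(P_n,\phi)-P_nh(P_T,\phi)|$ and the Glivenko--Cantelli term $\sup_\phi|P_{\phi^T}h(P_{\phi^T},\phi)-P_nh(P_{\phi^T},\phi)|$.

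For the first supremum I would start from the identity for $P_nH(P_n,\phi)-P_nh(P,\phi)$ displayed just before the theorem, whose two pieces are $\frac{1}{\gamma-1}\int p_\phi^\gamma\left[(K_w*P_n)^{1-\gamma}-p_{\phi^T}^{1-\gamma}\right]dx$ and an average of $\left(\frac{p_\phi}{K_w*P_n\,p_{\phi^T}}\right)^\gamma(y_i)\left(p_{\phi^T}^\gamma(y_i)-(K_w*P_n)^\gamma(y_i)\right)$. The decisive ingredient -- and the reason $\gamma\in(0,1)$ is treated apart from Theorems 2--3 -- is the elementary sub-additivity inequality $|s^a-t^a|\le|s-t|^a$, valid for $s,t\ge0$ and $a\in(0,1)$; the mean value theorem used in Theorem 3 is useless here because $t\mapsto t^{a-1}$ blows up near $0$, whereas this Hölder-type bound simply replaces the difference of powers by $\delta_n^a$, where $\delta_n:=\sup_x|K_w*P_n(x)-p_{\phi^T}(x)|$. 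Applying it with $a=1-\gamma$ bounds the integral piece by $\frac{\delta_n^{1-\gamma}}{1-\gamma}\int p_\phi^\gamma\,dx\le\frac{\mathcal A}{1-\gamma}\,\delta_n^{1-\gamma}$, uniformly in $\phi$ by assumption 3; applying it with $a=\gamma$ bounds the average by $\frac{\delta_n^\gamma}{\gamma}\cdot\frac1n\sum_i\left(\frac{p_\phi}{K_w*P_n\,p_{\phi^T}}\right)^\gamma(y_i)\le\frac{\delta_n^\gamma}{\gamma}\mathcal B_n$, uniformly in $\phi$ on the event of assumption 4. Since $\delta_n\to0$ in probability by assumption 1 and $\delta_n^{1-\gamma},\delta_n^\gamma$ are continuous functions of $\delta_n$, both bounds vanish in probability, so $\sup_\phi|P_nH(P_n,\phi)-P_nh(P_T,\phi)|\xrightarrow{\mathbb{P}}0$.

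It then remains to treat the Glivenko--Cantelli term. For the Cressie--Read $\varphi_\gamma$ one has $\varphi_\gamma'(t)=\frac{t^{\gamma-1}-1}{\gamma-1}$ and $\varphi_\gamma^{\#}(t)=\frac{t^\gamma-1}{\gamma}$, so for each fixed $\phi$ the map $y\mapsto h(P_{\phi^T},\phi,y)$ is the sum of the constant $\frac{1}{\gamma-1}\left(\int p_\phi^\gamma p_{\phi^T}^{1-\gamma}\,dx-1\right)$ and a multiple of $(p_\phi/p_{\phi^T})^\gamma(y)$; assumption 2 makes both $P_{\phi^T}$-integrable, and then Glivenko--Cantelli gives $\sup_\phi|P_{\phi^T}h(P_{\phi^T},\phi)-P_nh(P_{\phi^T},\phi)|\xrightarrow{\mathbb{P}}0$. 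Combining the two suprema through (\ref{eqn:DecompProofIneq}) establishes (\ref{eqn:ConsistP1}); together with (\ref{eqn:ConsistP2}), which is assumption 5, and the minimiser property of $\hat\phi_n$, Theorem 1 yields $\hat\phi_n\xrightarrow{\mathbb{P}}\phi^T$, which is the assertion.

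I expect the genuinely delicate step to be the Glivenko--Cantelli argument: assumption 2 is only a pointwise-in-$\phi$ integrability condition, whereas the splitting (\ref{eqn:DecompProofIneq}) really wants $\{h(P_{\phi^T},\phi,\cdot):\phi\in\Phi\}$ to be a Glivenko--Cantelli class, so in a fully rigorous write-up one should either upgrade assumption 2 to an envelope or bracketing condition, or restrict to a model where this is transparent (as in the Gaussian location example verified afterwards). A secondary subtlety, already signalled in the text, is that the summands defining $\mathcal B_n$ use each $y_i$ both as an evaluation point and inside $K_w*P_n$, so $\mathcal B_n$ is an average of strongly dependent variables and cannot be controlled by a law of large numbers -- which is precisely why its boundedness is imposed as the high-probability hypothesis 4 rather than derived.
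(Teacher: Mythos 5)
Your proposal is correct and follows essentially the same route as the paper: Theorem 1 applied through the decomposition (\ref{eqn:DecompProofIneq}), Glivenko--Cantelli for the second supremum, and control of the first supremum by exploiting the uniform continuity of $t\mapsto t^{\gamma}$ and $t\mapsto t^{1-\gamma}$ together with assumptions 3 and 4 --- your explicit H\"older-type bound $|s^a-t^a|\le|s-t|^a$ is just a quantitative instantiation of the uniform-continuity modulus the paper invokes. The caveats you raise at the end (the pointwise-versus-uniform Glivenko--Cantelli issue and the dependence inside $\mathcal{B}_n$) are the same ones the paper itself acknowledges.
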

\begin{proof} 
In order to prove consistency of the kernel-based MD$\varphi$DE, we follow the same steps in Theorem 2. To Verify condition (\ref{eqn:ConsistP1}), we use the decomposition (\ref{eqn:DecompProofIneq}). The second term in (\ref{eqn:DecompProofIneq}) goes to zero in probability using assumption 2 of the present theorem and the Glivenko-Cantelli theorem. In order to treat the first term, we use the uniform continuity of both functions $t\mapsto t^{\gamma}$ and $t\mapsto t^{1-\gamma}$. If
\[\left|p_{\phi^T}(x) - K_{w}*P_n(x)\right|<\delta_1(\varepsilon),\]
then
\begin{equation}
\left| \left(K_{w}*P_n\right)^{\gamma}(x)-\left(p_{\phi^T}\right)^{\gamma}(x) \right| <  \frac{\varepsilon\gamma}{\mathcal{B}_n}.
\label{eqn:HellingerLikeCond1}
\end{equation}
where $\mathcal{B}_n$ is as given here above. On the other hand, uniform continuity of $t\mapsto t^{1-\gamma}$ entails that if
\[\left|p_{\phi^T}(x) - K_{w}*P_n(x)\right|<\delta_2(\varepsilon),\]
then
\begin{equation}
\left| \left(K_{w}*P_n\right)^{1-\gamma}(x)-\left(p_{\phi^T}\right)^{1-\gamma}(x) \right| <  \frac{\varepsilon(1-\gamma)}{\sup_{\phi}\int{p_{\phi}^{\gamma}(x)dx}}.
\label{eqn:HellingerLikeCond2}
\end{equation}
Let $n$ be sufficiently large such that 
\[\sup_{x}|p_{\phi^T}(x) - K_{w}*P_n(x)| < \min\left(\delta_1(\varepsilon),\delta_2(\varepsilon)\right). \]
By the strong consistency assumption for the kernel estimator, this event happens with probability greater than $1-\eta_n$ where $\eta_n\rightarrow 0$. Thus, each of (\ref{eqn:HellingerLikeCond1}) and (\ref{eqn:HellingerLikeCond2}) occur with probability greater than $1-\eta_n$ where $\eta_n\rightarrow 0$.\\
The remaining of the argument is exactly the same as for Theorem 2.
\end{proof}

This result is clearly far more simpler than the one given in Theorem 2. Condition 3 here is already independent of $n$ and is deterministic which is not the case for assumption 4 in theorem 2 and assumption 3 in Theorem 3. Although assumption 4 does not contain unknown functions such as $\lambda_2$ defined for previous results, it has always a similar difficulty since it concerns a sum of strongly dependent terms.
\subsection{Case of power divergences with $\gamma\in(-1,0)$}
This time we will use the uniform continuity of function $t\mapsto t^{-\gamma}$ to prove that if $|K_w*P_n(x) - p_{\phi^T}(x)|<\delta_2$, then:
\[|K_w*P_n(x) - p_{\phi^T}(x)|<\frac{\varepsilon}{\sup_{\phi}\frac{1}{n}\sum{p_{\phi}^{\gamma}(y_i)}}\]
Thus, $\mathcal{B}_n$ of Theorem 3 is now replaced by the quantity 
\[\mathcal{B}_n = \sup_{\phi}\frac{1}{n}\sum_{i=1}^n{p_{\phi}^{\gamma}(y_i)}.\]
On the other hand, we rewrite the integral difference as follows:
\[\int{\frac{\left(K_w*P_n\right)^{-\gamma+1}(x) - p_{\phi^T}^{-\gamma+1}(x)}{p_{\phi}^{-\gamma}(x)}dx} = \int{\frac{\left[\left(K_w*P_n\right)^{\frac{-\gamma+1}{2}}(x) - p_{\phi^T}^{\frac{-\gamma+1}{2}}(x)\right]\left[\left(K_w*P_n\right)^{\frac{-\gamma+1}{2}}(x) + p_{\phi^T}^{\frac{-\gamma+1}{2}}(x)\right]}{p_{\phi}^{-\gamma}(x)}dx}.\]
Now, using the uniform continuity of function\footnote{notice that $\frac{-\gamma+1}{2}\in(0,1)$ since $\gamma\in(-1,0)$.} $t\mapsto t^{\frac{-\gamma+1}{2}}$, we may deduce that if $|K_w*P_n(x) - p_{\phi^T}(x)|<\delta_1$, then:
\[|K_w*P_n(x) - p_{\phi^T}(x)|<\frac{\varepsilon}{\sup_{\phi}\int{\frac{\left(K_w*P_n\right)^{\frac{-\gamma+1}{2}}(x) + p_{\phi^T}^{\frac{-\gamma+1}{2}}(x)}{p_{\phi}^{-\gamma}(x)}}} \]
Thus $\mathcal{A}_n$ of Theorem 3 is now replaced by the simpler quantity:
\[\mathcal{A}_n = \sup_{\phi}\int{\frac{\left(K_w*P_n\right)^{\frac{-\gamma+1}{2}}(x) + p_{\phi^T}^{\frac{-\gamma+1}{2}}(x)}{p_{\phi}^{-\gamma}(x)}}\]
The remaining of the argument follows similarly to previous theorem. We may state the following result.
\begin{theorem}
For the class of power divergences defined through the class of Cressie-Read functions $\varphi_{\gamma}$, assume that:
\begin{enumerate}
\item the kernel estimator is strongly consistent, i.e. $\sup_x\|K_w*P_n(x) - p_{\phi^T}(x)\|\rightarrow 0$ in probability;
\item function $x\mapsto \left(\frac{p_{\phi}}{p_{\phi^T}(x)}\right)^{\gamma}(x)$ is $P_T-$integrable for any $\phi$ in $\Phi$;
\item for any $\varepsilon>0$, there exists $n_0$ such that $\forall n\geq n_0$, the probability that the quantity 
\[\mathcal{A}_n = \sup_{\phi}\int{\frac{\left(K_w*P_n\right)^{\frac{-\gamma+1}{2}}(x) + p_{\phi^T}^{\frac{-\gamma+1}{2}}(x)}{p_{\phi}^{-\gamma}(x)}}\] 
is upper bounded independently of $n$ is greater than $1-\eta_n$ for $\eta_n\rightarrow 0$;
\item for any $\varepsilon>0$, there exists $n_0$ such that $\forall n\geq n_0$, the probability that the quantity 
\[\mathcal{B}_n = \sup_{\phi}\frac{1}{n}\sum_{i=1}^n{p_{\phi}^{\gamma}(y_i)}\] 
is upper bounded independently of $n$ is greater than $1-\eta_n$ for $\eta_n\rightarrow 0$;
\item $\inf_{\phi:\|\phi-\phi^T\|\geq\varepsilon} P_T h(P_T,\phi) > P_T h(P_T,\phi^T)$,
\end{enumerate}
then the minimum dual $\varphi-$divergence estimator defined by (\ref{eqn:NewMDphiDE}) is consistent whenever it exists.
\end{theorem}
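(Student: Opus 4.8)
The plan is to apply Theorem 1 with $M_n(\phi)=P_nH(P_n,\phi)$ and $M(\phi)=P_Th(P_T,\phi)$; since the estimator (\ref{eqn:NewMDphiDE}) is by construction a minimiser of $M_n$, condition (\ref{eqn:ConsistP2}) is exactly assumption 5 of the present theorem (uniqueness of that minimiser being automatic, as $P_Th(P_T,\cdot)$ is the dual form of $D_\varphi(p_{\cdot},p_{\phi^T})$ and the model is identifiable), so the whole work lies in the uniform convergence (\ref{eqn:ConsistP1}). For this I would use the decomposition (\ref{eqn:DecompProofIneq}). The term $\sup_{\phi}|P_{\phi^T}h(P_{\phi^T},\phi)-P_nh(P_{\phi^T},\phi)|$ tends to $0$ in probability by Glivenko--Cantelli: for $\varphi=\varphi_{\gamma}$ the non-constant pieces of $h(P_T,\phi)$ are $\int p_\phi^{\gamma} p_{\phi^T}^{1-\gamma}\,dx$ and $(p_\phi/p_{\phi^T})^{\gamma}(y)$, both controlled by the integrability in assumption 2. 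Everything then reduces to $\sup_{\phi}|P_nH(P_n,\phi)-P_nh(P_T,\phi)|\to 0$ in probability.

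Next I would split $P_nH(P_n,\phi)-P_nh(P_T,\phi)$ into the two pieces written just before Theorem 3,
\[
\frac{1}{\gamma-1}\int\frac{(K_w*P_n)^{1-\gamma}-p_{\phi^T}^{1-\gamma}}{p_\phi^{-\gamma}}\,dx
\qquad\text{and}\qquad
-\frac{1}{n\gamma}\sum_{i=1}^n\frac{(K_w*P_n)^{-\gamma}-p_{\phi^T}^{-\gamma}}{p_\phi^{-\gamma}}(y_i),
\]
and control each on the event $\{\sup_x|K_w*P_n(x)-p_{\phi^T}(x)|<\min(\delta_1(\varepsilon),\delta_2(\varepsilon))\}$, whose probability tends to $1$ by assumption 1. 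For the sum, $-\gamma\in(0,1)$, so $t\mapsto t^{-\gamma}$ is uniformly continuous on $[0,\infty)$; taking $\delta_2$ small makes $\sup_x|(K_w*P_n)^{-\gamma}(x)-p_{\phi^T}^{-\gamma}(x)|$ as small as needed, and --- since $1/p_\phi^{-\gamma}=p_\phi^{\gamma}$ --- the sum term becomes $\le\varepsilon$ once the normalising quantity $\mathcal B_n=\sup_\phi\frac1n\sum_{i=1}^n p_\phi^{\gamma}(y_i)$ is bounded, which is assumption 4. The crucial point is that, in contrast to the earlier cases (where $\mathcal B_n$ still involved $K_w*P_n$), this $\mathcal B_n$ contains no $K_w*P_n$ whatsoever, so it is a genuine empirical average of i.i.d.\ quantities. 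The integral is more delicate because $t\mapsto t^{1-\gamma}$ is not uniformly continuous for $\gamma<0$; here I would first factor
\[
(K_w*P_n)^{1-\gamma}-p_{\phi^T}^{1-\gamma}=\Bigl[(K_w*P_n)^{\frac{1-\gamma}{2}}-p_{\phi^T}^{\frac{1-\gamma}{2}}\Bigr]\Bigl[(K_w*P_n)^{\frac{1-\gamma}{2}}+p_{\phi^T}^{\frac{1-\gamma}{2}}\Bigr],
\]
and apply uniform continuity to $t\mapsto t^{(1-\gamma)/2}$, which is legitimate since $(1-\gamma)/2\in(0,1)$; the remaining $+$-factor, together with $p_\phi^{-\gamma}$ in the denominator, is exactly what is absorbed into $\mathcal A_n=\sup_\phi\int \bigl[(K_w*P_n)^{(1-\gamma)/2}+p_{\phi^T}^{(1-\gamma)/2}\bigr]/p_\phi^{-\gamma}\,dx$, bounded with probability tending to $1$ by assumption 3. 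On the chosen event both pieces are $\le\varepsilon$ uniformly in $\phi$.

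Assembling everything, on an event of probability at least $1-\eta_n$ with $\eta_n\to 0$ one has $\sup_\phi|P_nH(P_n,\phi)-P_nh(P_T,\phi)|\le 2\varepsilon$; together with the Glivenko--Cantelli step and (\ref{eqn:DecompProofIneq}) this gives (\ref{eqn:ConsistP1}), and Theorem 1 then yields $\hat\phi_n\to\phi^T$ in probability. The uniform-continuity manipulations are routine; the real obstacle, as in the previous cases, is concealed in assumption 3, since $\mathcal A_n$ still carries a random factor $(K_w*P_n)^{(1-\gamma)/2}$ and one must ensure it stays bounded in probability. The way around it is to exploit the monotonicity and concavity of $t\mapsto t^{(1-\gamma)/2}$ on $[0,\infty)$ to dominate $\mathcal A_n$ by a cleaner quantity (of the type $\sup_\phi\int(K_w*P_n+p_{\phi^T})/p_\phi\,dx$), which can then be verified against a concrete model such as a Gaussian family estimated with a Gaussian kernel.
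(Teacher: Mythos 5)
Your proposal follows essentially the same route as the paper: you invoke Theorem 1 via the decomposition (\ref{eqn:DecompProofIneq}), handle the sum term through the uniform continuity of $t\mapsto t^{-\gamma}$ (with the key observation that $\mathcal{B}_n$ now involves only i.i.d.\ terms), and handle the integral term through the factorization of $(K_w*P_n)^{1-\gamma}-p_{\phi^T}^{1-\gamma}$ into the product of the difference and sum of the $(1-\gamma)/2$ powers, applying uniform continuity to $t\mapsto t^{(1-\gamma)/2}\in(0,1)$ --- which is exactly the argument the paper gives in the paragraph preceding the theorem before concluding ``the remaining of the argument follows similarly to the previous theorem.'' The proof is correct and matches the paper's.
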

This last result is the least complicated one among previous ones, since in the one hand, there is no unknown functions such as $\lambda_1$ or $\lambda_2$. On the other hand, the sum in $\mathcal{B}_n$ is over i.i.d. terms. According to the model and to the value of $\gamma$ in the interval $(0,1)$, one may either use the results of Theorems 3 or 4. The two general results are clearly restrictive, and one should for his own particular case study derive his set of conditions. Those results stay as a guide to proving further ones.\\
The remaining of this section is devoted to show how in a gaussian model, consistency of the kernel-based MD$\varphi$DE can be proved.

\begin{example}
We take a simple and ordinary example of a gaussian model with unknown mean $\mu$ which is supposed to be in a close interval $[\mu_{\min},\mu_{\max}]$. We consider power divergences for which $\gamma\in(-1,0)$. The gaussian kernel is used. Assumption 1 is easily checked by considering the list of conditions in Theorem A in \cite{Silverman}. Assumption 2 is also very simple since
\[\left(\frac{p_{\phi}^{\gamma}}{p_{\phi^T}^{\gamma}}p_{\phi^T}(x) = e^{-\frac{1}{2}x^2 - \mu y + \frac{1}{2}\mu^2}\right).\]
We use Theorem 5 to prove consistency. We calculate constants $\mathcal{A}_n$ and $\mathcal{B}_n$. By the strong consistency of the kernel estimator, it suffices for $\mathcal{A}_n$ to study boundedness of the term which contains $p_{\phi^T}$.
\[\int{\frac{p_{\phi^T}^{\frac{-\gamma+1}{2}}(x)}{p_{\phi}^{-\gamma}(x)}} = c_1(\gamma)e^{\frac{\gamma^2-\gamma}{2(1+\gamma)}\mu^2}\]
for a constant $c_1$. This quantity is bounded since $\mu$ is supposed to be in a closed interval. Hence $\mathcal{A}_n$ is bounded and assumption 3 is now verified.\\
On the other hand, in order to study $\mathcal{B}_n$, it suffices to consider the quantity $\sup_{\phi}\int{p_{\phi}^{\gamma}p_{\phi^T}}$ by vertue of the Glivenko-Cantelli theorem\footnote{The Glivenko-Cantelli theorem states that both quantities $\sup_{\phi}\frac{1}{n}\sum_{i=1}^n{p_{\phi}^{\gamma}(y_i)}$ and $\sup_{\phi}\int{p_{\phi}^{\gamma}p_{\phi^T}}$ are uniformly close for sufficiently large $n$ independently of $\phi$, hence boundedness of either of them implies boundedness of the other.}. We have:
\[\int{p_{\phi}^{\gamma}p_{\phi^T}} = c_2(\gamma)e^{-\frac{-\gamma}{1+\gamma}\frac{\mu^2}{2}}.\]
for a constant $c_2$. Here again, since $\mu$ is supposed to be in a closed interval, the previous quantity is bounded. This entails that $\mathcal{B}_n$ is bounded and assumption 4 is fulfilled. \\
We move now to the last assumption. By the dual representation of the divergence, we have $P_Th(P_T,\phi) = D_{\varphi}(p_{\phi},p_{\varphi^T})$. This implies that :
\[P_Th(P_T,\phi) = \frac{1}{\gamma(\gamma-1)}e^{\frac{\gamma^2-\gamma}{2}\mu^2} - \frac{1}{\gamma(\gamma-1)}.\]
This function clearly verifies assumption 5 since it has a minimum at $\mu=0$ and this minimum is well separated.
\end{example}
\begin{remark}
The argument concerning the boundedness of the term $\int{(K_w*P_n)^{\frac{1-\gamma}{2}}/p_{\phi}^{-\gamma}}$ given above is not precise. We will give a more accurate one for the interested. By Jensen's inequality, one may write:
\begin{eqnarray*}
\int{\frac{(K_w*P_n)^{\frac{1-\gamma}{2}}(y)}{p_{\phi}^{-\gamma}(y)}dy} & = & \int{\left(\frac{K_w*P_n}{p_{\phi}^{\frac{-2\gamma}{1-\gamma}}}\right)^{\frac{1-\gamma}{2}}(y)dy} \\
 & \leq & \left(\int{\frac{K_w*P_n}{p_{\phi}^{\frac{-2\gamma}{1-\gamma}}}(y)dy}\right)^{\frac{1-\gamma}{2}}\\
 & \leq & e^{-\frac{\gamma}{2}\mu^2}\left(\frac{1}{nw}\sum_{i=1}^n{e^{-\frac{y_i^2}{2w^2}} \int{\exp\left[-\frac{1}{2}\left(\frac{1}{w^2}+\frac{2\gamma}{1-\gamma}\right)y^2 + \left(\frac{y_i}{w^2}+\frac{2\gamma\mu}{1-\gamma}\right)y\right]dy}}\right)^{\frac{1-\gamma}{2}}.
\end{eqnarray*}
We calculate each integral separately:
\[
\int{\exp\left[-\frac{1}{2}\left(\frac{1}{w^2}+\frac{2\gamma}{1-\gamma}\right)y^2 + \left(\frac{y_i}{w^2}+\frac{2\gamma\mu}{1-\gamma}\right)y\right]dy} = w c_3(\gamma,w) \exp\left[\frac{\left(y_i+\frac{2\gamma w^2}{1-\gamma}\mu\right)^2}{2 w^2\left(1+\frac{2\gamma w^2}{1-\gamma}\right)}\right],
\]
where $c_3(\gamma,w)=\sqrt{\frac{1-\gamma}{1-\gamma +2\gamma w^2}}$. We now proceed to estimate the sum over $i$:
\begin{eqnarray*}
\frac{1}{nw}\sum_{i=1}^n{e^{-\frac{y_i^2}{2w^2}} \int{\exp^{-\frac{1}{2}\left(\frac{1}{w^2}+\frac{2\gamma}{1-\gamma}\right)y^2 + \left(\frac{y_i}{w^2}+\frac{2\gamma\mu}{1-\gamma}\right)y}dy}} & = & \frac{1}{n}\sum_{i=1}^n{e^{-\frac{y_i^2}{2w^2}} c_3(\gamma,w) \exp\left[\frac{\left(y_i+\frac{2\gamma w^2}{1-\gamma}\mu\right)^2}{2 w^2\left(1+\frac{2\gamma w^2}{1-\gamma}\right)}\right]} \\
 & = & \frac{1}{n}c_3(\gamma,w)e^{\frac{2w^2\gamma^2\mu^2}{(1-\gamma)(1-\gamma+2\gamma w^2)}}\sum_{i=1}^n{e^{-\frac{\gamma}{1-\gamma}y_i^2 + \frac{2\gamma \mu}{1-\gamma+2\gamma w^2}y_i}} \\
& \leq & \frac{1}{n}c_3(\gamma,w)e^{\frac{2w^2\gamma^2\mu_{\max}^2}{(1-\gamma)(1-\gamma+2\gamma w^2)}}\sum_{i=1}^n{e^{-\frac{\gamma}{1-\gamma}y_i^2 + \frac{2\gamma \mu_{\min}}{1-\gamma+2\gamma w^2}y_i}}.
\end{eqnarray*}
The final step is to use a version of the law of large numbers for independent random variables such as the two series theorem of Kolomogrov (see \cite{Feller} Chap VII, Theorem 3) since the terms of the sum do not have the same probability law, but guided by the standard gaussian law. The general term of the mean and its first two moments are:
\begin{eqnarray*}
Z_i & = & \exp\left[-\frac{\gamma}{1-\gamma}y_i^2 + \frac{2\gamma \mu_{\min}}{1-\gamma+2\gamma w^2}y_i\right] \\
\mathbb{E}[Z_i] & = & \sqrt{\frac{1-\gamma}{1+\gamma}}\exp\left[\frac{1-\gamma}{1+\gamma}\frac{\gamma \mu_{\min}}{1-\gamma+2\gamma w^2}\right] \\
\mathbb{E}[Z_i^2] & = & \sqrt{\frac{1-\gamma}{1+3\gamma}}\exp\left[\frac{1-\gamma}{1+3\gamma}\frac{4\gamma \mu_{\min}}{1-\gamma+2\gamma w^2}\right].
\end{eqnarray*}
The variance exists only when $\gamma\in(-\frac{1}{3},0)$. It results that for this range of values, the Kolomogrov two series theorem applies and the average $\frac{1}{n}\sum_{i=1}^n{e^{-\frac{\gamma}{1-\gamma}y_i^2 + \frac{2\gamma \mu_{\min}}{1-\gamma+2\gamma w^2}y_i}}$ now converges in probability independently of $\mu$. Besides, the remaining factor$c_3(\gamma,w)e^{\frac{2w^2\gamma^2\mu_{\max}^2}{(1-\gamma)(1-\gamma+2\gamma w^2)}}$ also converges as $n$ goes to infinity (and $w$ goes to zero) to a constant (equal to 1).  Thus, boundedness of $\int{\frac{(K_w*P_n)^{\frac{1-\gamma}{2}}(y)}{p_{\phi}^{-\gamma}(y)}dy}$ is ensured.
\end{remark}
\begin{example}
Let's take again the example of a gaussian model with a mean parameter $\mu$ unknown. Consider the class of power divergences with $\gamma\in(0,1)$. We verify assumptions of Theorem 4. We suppose also that the true distribution is the standard gaussian law $\mathcal{N}(0,1)$. Let's consider a gaussian kernel. We have:
\[K_w*P_T(x) = \frac{1}{\sqrt{2\pi(1+w^2)}} e^{-\frac{x^2}{2(1+w^2)}}.\]
In this example, it suffices to study consistency of the kernel-based MD$\varphi$DE for a fixed window. Indeed, the minimum of $P_Th(P_T,\phi)$ concides with the minimum of function $P_TH(P_T,\phi)$. This entails that if for a fixed window, the kernel-based MD$\varphi$DE is consistent with respect to the minimum of $P_TH(P_T,\phi)$, so does it with respect to the minimum of $P_Th(P_T,\phi)$ which is the true parameter $\mu^T$. The corresponding list of conditions can easily be derived from Theorem 4. Indeed, consistency of the kernel is no longer needed \footnote{In the Basu-Lindsay approach, this can happen if one can find a \emph{transparent} kernel.}. Points 2 and 3 are kept as they are. We replace $p_{\phi^T}$ in assumption 4 by the smoothed distribution $K_w*P_{\phi^T}$. Point 5 becomes with respect to $P_TH(P_T,\phi)$ instead of $P_Th(P_T,\phi)$. The arguments of the proof are the same. We only need to use the Glivenk-Cantelli theorem instead of the strong consistency of the kernel. Notice that point 4 is very hard so that we only verify it when $w^2>\frac{1}{2}$.\\
We first check our claim that both $P_TH(P_T,\mu)$ and $P_Th(P_T,\mu)$ have the same minimum. The minimum of $P_Th(P_T,\mu)$ is attained when $\mu=0$. We calculate an exact form of function $P_TH(P_T,\mu)$. We have:
\begin{eqnarray*}
\int_{\mathbb{R}}{\frac{p_{\mu}^{\gamma}(x)}{p_{\phi^T}^{\gamma-1}(x)}dx} & = & \sqrt{\frac{1+w^2}{1+\gamma w^2}} e^{-\frac{\gamma(1-\gamma)}{2(1+\gamma w^2)}\mu^2}\\
\int_{\mathbb{R}}{\frac{p_{\mu}^{\gamma}(x)}{p_{\phi^T}^{\gamma}(x)} p_T(x)dx} & = & \sqrt{\frac{1+w^2}{(\gamma+1)w^2+1}} e^{-\frac{\gamma(w^2+1-\gamma)}{2(1+(\gamma+1)w^2)}\mu^2}
\end{eqnarray*}
and thus:
\[P_TH(P_T,\mu)=\frac{1}{\gamma-1}\sqrt{\frac{1+w^2}{1+\gamma w^2}} e^{-\frac{\gamma(1-\gamma)}{2(1+\gamma w^2)}\mu^2} - \frac{1}{\gamma}\sqrt{\frac{1+w^2}{(\gamma+1)w^2+1}} e^{-\frac{\gamma(w^2+1-\gamma)}{2(1+(\gamma+1)w^2)}\mu^2} - \frac{1}{\gamma(\gamma-1)}.\]
Figure \ref{fig:ObjFun} shows the curve of this function for several values of $w$ and $\gamma$. It is clear that the infimum is unique and nicely separated.
\begin{figure}[ht]
\centering
\includegraphics[scale=0.4]{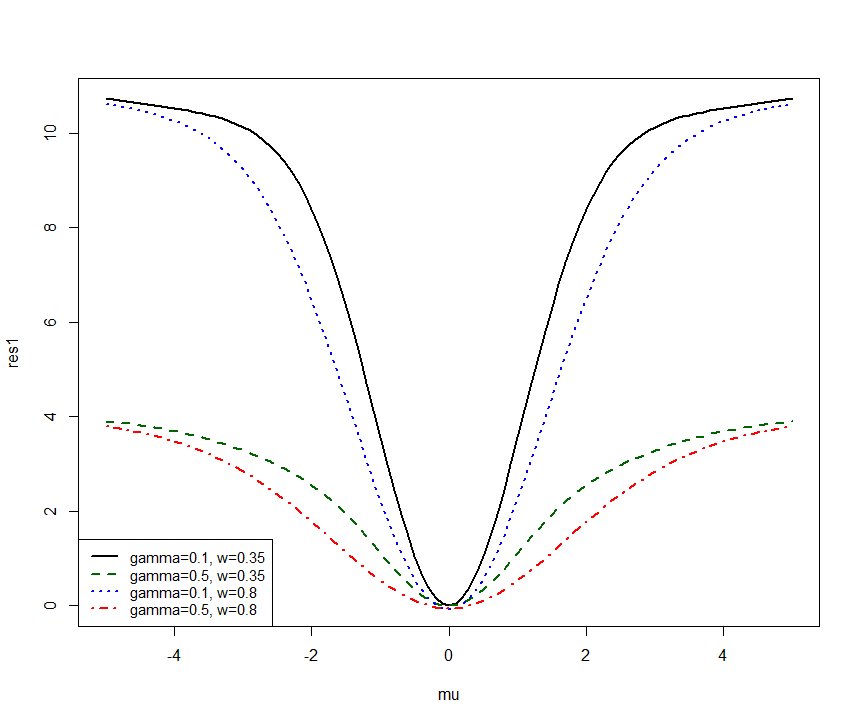}
\caption{Function $P_TH(P_T,\mu)$ for different windows and divergences. They all have an infimum at zero.}
\label{fig:ObjFun}
\end{figure}
It is easy to see that the derivative of $P_TH(P_T,\mu)$ with respect to $\mu$ has a unique zero at $\mu=0$. Besides function $\mu\mapsto P_TH(P_T,\mu)$ is strictly decreasing on $(-\infty,0)$ and strictly increasing on $(0,\infty)$. This is sufficient to prove our claim. Besides assumption 4 becomes well verified.\\
We now move to verify the $P_T-$integrability of $x\mapsto \left(\frac{p_{\phi}}{p_{\phi^T}(x)}\right)^{\gamma}(x)$. We have:
\[\left(\frac{p_{\phi}}{p_{\phi^T}(x)}\right)^{\gamma}(x) p_T(x) = \frac{(1+w^2)^{\gamma/2}}{\sqrt{2\pi}}e^{-\frac{(\gamma+1) w^2 +1}{2(1+w^2)}x^2 + \gamma x\mu - \gamma\mu^2/2}\]
which is clearly integrable.\\
Assumption 3 demands that the quantity $\sup_{\phi}\int{p_{\phi}^{\gamma}(x)dx}$ is finite. We have:
\[\int{p_{\phi}^{\gamma}(x)dx} = \frac{1}{\sqrt{\gamma}}, \quad \forall \mu\in\mathbb{R}.\]
Hence the supremum over $\mu$ is equal to $1/\sqrt{\gamma}$ and assumption 2 is verified.\\
Assumption 4 demands that the quantity $\sup_{\phi,n}\frac{1}{n}\sum_{i=1}^n{\left(\frac{p_{\phi}}{K_{w}*P_n\times p_{\phi^T}}\right)^{\gamma}(y_i)}$ is finite. We have:
\begin{eqnarray*}
K_w*P_n(y_i) & = & \frac{1}{n w} \sum_{j=1}^n{K\left(\frac{y_i-y_j}{w}\right)}\\
 & = & \frac{K(0)}{nw} + \frac{n-1}{n w} \frac{1}{n-1}\sum_{j\neq i}{K\left(\frac{y_i-y_j}{w}\right)}.
\end{eqnarray*}
The problem in the previous average is that the random variables inside are dependent but identically distributed and we cannot apply directly the law of large numbers. Let's try and calculate an almost sure upper bound manually. We have:
\begin{eqnarray*}
\frac{1}{n-1}\sum_{j\neq i}{K\left(\frac{y_i-y_j}{w}\right)} & = & e^{-\frac{y_i^2}{2w^2}} \frac{1}{n-1}\sum_{j\neq i}{e^{-\frac{y_j^2}{2w^2} + \frac{y_iy_j}{w^2}}} \\
 & \geq & e^{-\frac{y_i^2}{2w^2}} \frac{1}{n-1}\sum_{j\neq i}{1-\frac{y_j^2}{2w^2} + \frac{y_iy_j}{w^2}} \\
& \geq & e^{-\frac{y_i^2}{2w^2}}\left[1 - \frac{1}{2w^2}\frac{1}{n-1}\sum_{j\neq i}{y_j^2} + \frac{y_i}{w^2}\frac{1}{n-1}\sum_{j\neq i}{y_j}\right].
\end{eqnarray*}
Now, the averages $\frac{1}{n-1}\sum_{j\neq i}{y_j^2}$ and $\frac{1}{n-1}\sum_{j\neq i}{y_j}$ are sums of i.i.d. random variables. They are distributed independently of $i$ by $\frac{1}{n-1}\chi^2(n-1)+1$ and $\mathcal{N}\left(0,\frac{1}{n-1}\right)$ respectively. Moreover, the distribution of $y_i$ is independent of $i$. Hence, the distribution of the random variable $Z_n = 1 - \frac{1}{2w^2}\frac{1}{n-1}\sum_{j\neq i}{y_j^2} + \frac{y_i}{w^2}\frac{1}{n-1}\sum_{j\neq i}{y_j}$ is independent of $i$. On the other hand, this random variable converges in probability (using the law of large numbers and the Slutsky's lemma) to $1 - \frac{1}{2w^2}$ which is strictly positive since $w^2>\frac{1}{2}$. Thus one can deduce the existence of $n_0$ independent of $i$ such that, for $n\geq n_0$ the probability of the event $\{Z_n \geq 1 - \frac{1}{2w^2} - c\}$ is greater than $1-\eta_n$ for $\eta_n\rightarrow 0$. The value of $c$ is chosen such that $c < 1 - \frac{1}{2w^2}$. This entails that:
\begin{eqnarray*}
\frac{1}{n}\sum_{i=1}^n{\left[\frac{p_{\phi}(y_i)}{p_{\phi^T}(y_i)\times K_{w}*P_n(y_i)}\right]^{\gamma}} & \leq & \frac{1}{n}\sum_{i=1}^n{\left[\frac{p_{\phi}(y_i)}{p_{\phi^T}(y_i)e^{-\frac{y_i^2}{2w^2}}(1 - \frac{1}{2w^2} - c)} \right]^{\gamma}}.
\end{eqnarray*}
It suffices now to prove the boundedness of the sum which is a mean of i.i.d. random variables and one can use the law of large numbers to conclude an approximation, and the Glivenko-Cantelli theorem in order to conclude a result about the supremum over $\phi$. The limit in probability is given by:
\[\int{\left[\frac{p_{\phi}(y_i)}{p_{\phi^T}(y_i)e^{-\frac{y_i^2}{2w^2}}} \right]^{\gamma} dP_T} = \int{\exp\left[-\frac{(\gamma+1)w^4 + (1-\gamma)w^2-\gamma}{2w^2(1+w^2)}y^2 + \gamma y\mu - \frac{\gamma}{2}\mu^2\right]}\]
It is clear that $(\gamma+1)w^4 + (1-\gamma)w^2-\gamma$ needs to be positive in order for the theory to be applicable. A simple calculus shows that $w$ needs to verify the following condition:
\begin{equation}
w^2 > \frac{\gamma-1+\sqrt{5\gamma^2+2\gamma+1}}{2(\gamma+1)}.
\label{eqn:BwCond1}
\end{equation}
Under this condition on $w$, the previous integral can be calculated and is given by:
\[\int{\left[\frac{p_{\phi}(y_i)}{p_{\phi^T}(y_i)e^{-\frac{y_i^2}{2w^2}}} \right]^{\gamma} dP_T} = \frac{1}{\sqrt{a}}\exp\left[-\left(\frac{1}{2}-\frac{\gamma}{2a}\right)\gamma\mu^2\right]\]
where $a = \frac{(\gamma+1)w^4 + (1-\gamma)w^2-\gamma}{w^2(1+w^2)}$. In order for the supremum over $\mu$ to exist, we need that $a>\gamma$, i.e. $(\gamma+1)w^4 + (1-\gamma)w^2-\gamma>\gamma w^2(1+w^2)$. This happens if $w$ verify:
\begin{equation}
w^2\geq \frac{2\gamma-1+\sqrt{4\gamma^2+1}}{2}.
\label{eqn:BwCond2}
\end{equation}
Condition (\ref{eqn:BwCond2}) contains (\ref{eqn:BwCond1}), and hence is the one to be more interesting. For example, for $\gamma=0.1$, the corresponding condition on the window is $w\geq 0.33$. For a 100-sample from the standard gaussian distribution, the window corresponding to the Silverman's rule of thumb is in average $0.35$ whereas the Sheather and Jones' window is in average $0.39$. When adding $10\%$ outliers from a gaussian distribution $\mathcal{N}(10,1)$, these values become $0.41$ and $0.427$ in average. It is important to notice that, the preceding analysis is very simplistic and was based on the naive and relatively harsh inequality $e^x\geq x+1$. Thus, one should be able, using a more rigorous analysis, to get a better lower bounds on the bandwidth of the window.\\
Finally, It is important to notice that the existence of the kernel-based MD$\varphi$DE $\hat{\mu}_n$ is guaranteed since function $\mu\mapsto P_nH(P_n,\mu)$ has the form of the function $e^{-\mu^2}$. Besides, it has a limit equals to 0 as $\mu$ tends to $\pm\infty$. Moreover, it is continuous on $\mathbb{R}$ as a function of $\mu$, hence it is bounded and the infimum exists.
\end{example}
\subsection{Influence Function for a given window}
In practice, the choice of the window is based on methods such as cross-validation, gaussian approximations or even based on personal experience. Thus, it is interesting to study the robustness properties supposing that the window is generated by an external tool.\\
We will use the influence function (IF) approach which, although being limited to the existence of a noise-component, is easy to calculate in general\footnote{This is regardless of the theoretical justifications of its existence.} and gives an aspect of the robustness of an estimator whenever the IF is bounded. We derive here in this paragraph the influence function of the new MD$\varphi$DE for the class of power divergences. The general case of function $\varphi$ seems to give an incomprehensive formula, and is not as interesting as the case of power divergences. Recall that the later contains many classical divergences such as the Hellinger, the Pearson's $\chi^2$ and the Neymann's one.\\
Let $C$ be a functional which gives for a probability distribution $P$ the estimator corresponding to the argument of the infimum of $PH(P,\phi)$ defined earlier, i.e.
\[C(P) = \arginf_{\phi\in\Phi}\int{\varphi'\left(\frac{p_{\phi}}{K_{w}*P}\right)(x)p_{\phi}(x)dx} - \int{\varphi^{\#}\left(\frac{p_{\phi}(y)}{K_{w}*P(y)}\right) dP(x)}.\]
Hence, $C(P_n)$ is non other than the estimator given by (\ref{eqn:NewMDphiDE}) for a given $w$. Fisher consistency is translated by $C(P_{\phi^T})=\phi^T$. This is unfortunately not verified in general when the window is supposed to be calculated by an external tool, because the dual formula is a priori a lower bound of $D_{\varphi}(P_{\phi},P_{\phi^T})$, and we cannot be sure that it would verify the same identifiability property, i.e. $D(Q,P)=0$ iff $P=Q$ whenever $\varphi$ is strictly convex. Example 1 shows, however, a case where Fisher consistency is attained for any value of the window $w$.\\

The influence function measures the impact of a small perturbation in the distribution $P$ on the resulting estimator. It is hence defined by:
\[\text{IF}(P,Q) = \lim_{\varepsilon\rightarrow 0} \frac{C\left((1-\varepsilon)P+\varepsilon Q\right) - C(P)}{\varepsilon}\]
We generally detect the influence of an outlier $x_0$ by observing what happens when we replace $P$ by $(1-\varepsilon)P + \varepsilon \delta_{x_0}$.\\ 
In the literature of M-estimates, one may derive the IF from the estimating equation. For power divergences, the estimating equation corresponding to $P$ is given by:
\begin{equation}
\frac{\gamma}{\gamma-1}\int{\frac{p_{C(P)}^{\gamma-1}\nabla p_{C(P)}}{(K_w*P)^{\gamma-1}}(x)dx} = \int{\frac{p_{C(P)}^{\gamma-1}\nabla p_{C(P)}}{(K_w*P)^{\gamma}}(x)dP(x)}. 
\label{eqn:EstimEq}
\end{equation}
The influence function is obtained by "deriving"\footnote{The arginf function is a troubelsome function} the two sides with respect to $\varepsilon$ after having replaced $P$ by $(1-\varepsilon)P+\varepsilon Q$. The following result give the formula of the IF for power divergences when the noise is generated by an arbitrary distribution $Q$ or when an outlier is present.
\begin{theorem}
The influence function of the kernel-based MD$\varphi$DE defined by (\ref{eqn:NewMDphiDE}) for a given window is given by:
\begin{multline}
\text{IF}(P_T,Q) = \gamma A^{-1}\int{\frac{p_{C(P_T)}^{\gamma-1}\left[K_w*Q\right]\nabla p_{C(P_T)}}{(K_w*P_T)^{\gamma}}\left(1-\frac{p}{K*P_T}\right)(x) dx} + A^{-1}\int{\frac{p_{C(P_T)}^{\gamma-1}\nabla p_{C(P_T)}}{(K_w*P_T)^{\gamma}}(x)dQ(x)}.
\label{eqn:IFGeneralDef}
\end{multline}
If $C$ is Fisher consistent, i.e. $C(P_T) = \phi^T$, then the influence function is given by:
\begin{multline}
\text{IF}(P_T,Q) = \gamma A^{-1}\int{\frac{p_{\phi^T}^{\gamma-1}\left[K_w*Q\right]\nabla p_{\phi^T}}{(K_w*P)^{\gamma}}\left(1-\frac{p}{K*P}\right)(x) dx} + A^{-1}\int{\frac{p_{\phi^T}^{\gamma-1}\nabla p_{\phi^T}}{(K_w*P)^{\gamma}}(x)dQ(x)}.
\label{eqn:IFFisherConsist}
\end{multline}
Finally, if $Q = \delta_{x_0}$, then the IF is given by:
\begin{multline}
\text{IF}(P_T,x_0) = \frac{\gamma}{w} A^{-1}\int{\frac{p_{C(P_T)}^{\gamma-1}\left[K_w*\delta_{x_0}\right]\nabla p_{C(P_T)}}{(K_w*P_T)^{\gamma}}\left(1-\frac{p}{K_w*P_T}\right)(x) dx} + A^{-1}\frac{p_{C(P_T)}^{\gamma-1}\nabla p_{C(P_T)}}{(K_w*P_T)^{\gamma}}(x_0)
\label{eqn:IFFisherConsistOutliers}
\end{multline}
\end{theorem}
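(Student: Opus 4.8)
The plan is to treat $C(P)$ as an $M$-functional defined implicitly by its stationarity condition (\ref{eqn:EstimEq}) and to differentiate that identity along the contamination path $P_\varepsilon=(1-\varepsilon)P_T+\varepsilon Q$. Write $g_\phi=p_\phi^{\gamma-1}\nabla p_\phi$ and $K_\varepsilon=K_w*P_\varepsilon$. Because the smoothing operator is additive, $K_\varepsilon=(1-\varepsilon)\,K_w*P_T+\varepsilon\,K_w*Q$, so $\partial_\varepsilon K_\varepsilon=K_w*Q-K_w*P_T$ does not depend on $\varepsilon$; this linearity is what makes the computation tractable. Evaluated at $P_\varepsilon$, the estimating equation becomes $\Psi(\varepsilon):=\frac{\gamma}{\gamma-1}\int K_\varepsilon^{1-\gamma}g_{C(P_\varepsilon)}\,dx-\int K_\varepsilon^{-\gamma}g_{C(P_\varepsilon)}\,dP_\varepsilon=0$ for every $\varepsilon$ near $0$.

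First I would differentiate $\Psi$ at $\varepsilon=0$, setting $\text{IF}(P_T,Q)=\partial_\varepsilon C(P_\varepsilon)\big|_{\varepsilon=0}$ and $\theta_0=C(P_T)$. The chain rule splits the derivative into three groups of terms: (i) the derivative acting on the argument $C(P_\varepsilon)$, contributing $A\,\text{IF}(P_T,Q)$ with the $d\times d$ matrix $A=\frac{\gamma}{\gamma-1}\int (K_w*P_T)^{1-\gamma}\nabla_\phi g_{\theta_0}\,dx-\int (K_w*P_T)^{-\gamma}\nabla_\phi g_{\theta_0}\,dP_T$; (ii) the derivative acting on the powers of $K_\varepsilon$, which, via $\partial_\varepsilon K_\varepsilon^{1-\gamma}=(1-\gamma)(K_w*P_T)^{-\gamma}(K_w*Q-K_w*P_T)$ and $\partial_\varepsilon K_\varepsilon^{-\gamma}=-\gamma(K_w*P_T)^{-\gamma-1}(K_w*Q-K_w*P_T)$, produces, after merging the $dx$-integral with the $dP_T$-integral, a multiple of $\int (K_w*P_T)^{-\gamma}g_{\theta_0}\,(K_w*Q-K_w*P_T)\big(1-p_{\theta_0}/(K_w*P_T)\big)\,dx$; and (iii) the derivative acting on the integrating measure $dP_\varepsilon$ of the second term, giving $-\int (K_w*P_T)^{-\gamma}g_{\theta_0}\,d(Q-P_T)$.

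Next I would solve $A\,\text{IF}(P_T,Q)=\text{(ii)}-\text{(iii)}$ for $\text{IF}$. Here I would invoke the order-zero identity $\frac{\gamma}{\gamma-1}\int (K_w*P_T)^{1-\gamma}g_{\theta_0}\,dx=\int (K_w*P_T)^{-\gamma}g_{\theta_0}\,dP_T$ (equation (\ref{eqn:EstimEq}) at $\varepsilon=0$) twice: once to cancel the $-dP_T$ contribution hidden inside $d(Q-P_T)$, and once to absorb the $-K_w*P_T$ part of the factor $K_w*Q-K_w*P_T$. What remains is exactly the bracket in (\ref{eqn:IFGeneralDef}), and left-multiplying by $A^{-1}$ yields that formula. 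Then (\ref{eqn:IFFisherConsist}) is just the specialization $C(P_T)=\phi^T$, and (\ref{eqn:IFFisherConsistOutliers}) follows by taking $Q=\delta_{x_0}$ and using $K_w*\delta_{x_0}(x)=\frac1w K\big((x-x_0)/w\big)$, so that the last integral collapses to $g_{C(P_T)}(x_0)/(K_w*P_T(x_0))^{\gamma}$.

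The main obstacle is not this bookkeeping but the analytic scaffolding around it: one must show that $\varepsilon\mapsto C(P_\varepsilon)$ is actually differentiable at $0$, since an $\arginf$ functional is not automatically smooth. This calls for the implicit function theorem applied to $\Psi$, hence sufficient smoothness of $\phi\mapsto p_\phi$, domination conditions allowing $\nabla_\phi$ and $\partial_\varepsilon$ to be moved inside every integral above, and — most importantly — non-singularity of $A$, which plays the role of an information-type matrix of the dual criterion. These requirements are of the same nature as the integrability conditions already imposed for consistency, and in a concrete model such as the Gaussian example treated above they can be checked directly; I would carry them as standing hypotheses rather than reprove them in full.
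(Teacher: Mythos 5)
Your proposal is correct and follows essentially the same route as the paper: implicit differentiation of the estimating equation (\ref{eqn:EstimEq}) along the contamination path, collection of the $\nabla_\phi$-terms into the matrix $A$, and a double use of the order-zero estimating equation to reduce $K_w*(Q-P_T)$ and $d(Q-P_T)$ to the $K_w*Q$ and $dQ$ terms appearing in (\ref{eqn:IFGeneralDef}). You are in fact more explicit than the paper about the regularity scaffolding (differentiability of $\varepsilon\mapsto C(P_\varepsilon)$, invertibility of $A$), which the paper only assumes in passing.
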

\begin{proof}
By deriving the left hand side of (\ref{eqn:EstimEq}), we get:
\[\frac{\gamma}{\gamma-1} \int{\frac{\left[(\gamma-1)\nabla p_{C(P)}\left(\nabla p_{C(P)}\right)^t+p_{C(P)}J_{p_{C(P)}}\right]p_{C(P)}^{\gamma-2}}{(K_w*P)^{\gamma-1}}} \text{IF}(P,Q) - \gamma\int{\frac{p_{C(P)}^{\gamma-1}\left[K_w*(Q-P)\right]\nabla p_{C(P)}}{(K_w*P)^{\gamma}}(x) dx}.\]
The right hand side gives:
\begin{multline*}
\int{\frac{\left[(\gamma-1)\nabla p_{C(P)}\left(\nabla p_{C(P)}\right)^tp_{C(P)}^{\gamma-2}+p_{C(P)}^{\gamma-1}J_{p_{C(P)}}\right]}{(K_w*P)^{\gamma}}(x)dP(x)}\text{IF}(P,Q) - \gamma\int{\frac{p_{C(P)}^{\gamma-1}\left[K_w*(Q-P)\right]\nabla p_{C(P)}}{(K_w*P)^{\gamma+1}}(x) dP(x)}\\
+ \int{\frac{p_{C(P)}^{\gamma-1}\nabla p_{C(P)}}{(K_w*P)^{\gamma}}(x)(dQ-dP)(x)}.
\end{multline*}
Let $A$ be the matrix given by:
\begin{eqnarray*}
A = \int{\left(\frac{\gamma}{\gamma-1} - \frac{p(x)}{K_w*P}\right)\frac{\left[(\gamma-1)\nabla p_{C(P)}\left(\nabla p_{C(P)}\right)^t+p_{C(P)}J_{p_{C(P)}}\right]p_{C(P)}^{\gamma-2}}{(K_w*P)^{\gamma-1}}}.
\end{eqnarray*}
We have now:
\begin{multline*}
A\; \text{IF}(P,Q) = \gamma\int{\frac{p_{C(P)}^{\gamma-1}\left[K_w*(Q-P)\right]\nabla p_{C(P)}}{(K_w*P)^{\gamma}}(x) dx} + \int{\frac{p_{C(P)}^{\gamma-1}\nabla p_{C(P)}}{(K_w*P)^{\gamma}}(x)(dQ-dP)(x)} \\
-\gamma\int{\frac{p_{C(P)}^{\gamma-1}\left[K_w*(Q-P)\right]\nabla p_{C(P)}}{(K_w*P)^{\gamma+1}}(x) dP(x)}
\end{multline*}
which, assuming $A$ is invertible and using the estimating equation (\ref{eqn:EstimEq}), may be rewritten as:
\[
\text{IF}(P,Q) = \gamma A^{-1}\int{\frac{p_{C(P)}^{\gamma-1}\left[K_w*Q\right]\nabla p_{C(P)}}{(K_w*P)^{\gamma}}\left(1-\frac{p}{K*P}\right)(x) dx} + A^{-1}\int{\frac{p_{C(P)}^{\gamma-1}\nabla p_{C(P)}}{(K_w*P)^{\gamma}}(x)dQ(x)}.
\]
The remaining of the proof is a simple substitution of $C(P)$ by $\phi^T$ when $P=P_{\phi^T}$, and replacing $Q$ by the dirac measure on a point $x_0$.
\end{proof}

\begin{remark}
The form of the IF is somewhat similar to the IF of the classical MD$\varphi$DE defined by (\ref{eqn:MDphiDEClassique}). \cite{TomaBronia} show that the IF of the classical MD$\varphi$DE is given by:
\[\text{IF}(P_T,x) = J^{-1}\int{\frac{\nabla p_{\phi^T}}{p_{\phi^T}}}\]
where $J$ is the information matrix given by $\int{\frac{\nabla p_{\phi^T} (\nabla p_{\phi^T})^t}{p_{\phi^T}}}$. Going back to the IF of the new MD$\varphi$DE given by (\ref{eqn:IFFisherConsistOutliers}), replacing $K_w*P_T$ by $p_{\phi^T}$ cancels the first term whereas the second term gives $A^{-1}\int{\frac{\nabla p_{\phi^T}}{p_{\phi^T}}}$, where $A = J + \frac{1}{\gamma-1}J_{p_{\phi^T}}$. \\
Intuitively, our modification has resulted in the term $\frac{p_{\phi^T}^{\gamma}}{(K_w*P)^{\gamma}}$ which could oblige the IF to be bounded in some cases. This is the ratio between the true density and the smoothed one. When $\gamma>0$, it is surprising that the IF becomes \emph{more} bounded as the ratio between the true distribution and the smoothed one decreases, which means that the smoothing is producing over estimation at the tail of the distribution.  
\end{remark}
\begin{example}
We resume the univariate gaussian example. Let's calculate the IF given by (\ref{eqn:IFFisherConsistOutliers}) since, as already seen in Example 2, the new MD$\varphi$DE is Fisher consistent.\\
The quantity $\frac{p_{\phi^T}^{\gamma-1}\nabla p_{\phi^T}}{(K_w*P)^{\gamma}}(x_0)$ is the only term which varies. It is given by:
\begin{eqnarray*}
\frac{p_{\phi^T}^{\gamma-1}\nabla p_{\phi^T}}{(K_w*P)^{\gamma}}(x_0) & = & (1+w^2)^{\gamma/2}x_0e^{-\frac{x_0^2}{2}}e^{-(\gamma-1)\frac{x_0^2}{2}} e^{\gamma\frac{x_0^2}{2(1+w^2)}} \\
 & = & (1+w^2)^{\gamma/2} x_0e^{-\frac{\gamma w^2}{2(1+w^2)}x_0^2}
\end{eqnarray*}
Hence, this quantity is bounded as soons as $\gamma>0$. The second quantity is an integral which needs to exist and be finite. We have:
\begin{eqnarray*}
\frac{p_{\phi^T}^{\gamma-1}K((x-x_0)/w)\nabla p_{\phi^T}}{(K_w*P)^{\gamma}}\left(1-\frac{p}{K_w*P}\right)(x) & = &  \frac{(1+w^2)^{\gamma/2}}{w} xe^{-\frac{\gamma w^2}{2(1+w^2)}x^2} e^{-\frac{(x-x_0)^2}{2w^2}} \frac{\frac{1}{\sqrt{1+w^2}}e^{-\frac{x^2}{2(1+w^2)}} - e^{-\frac{x^2}{2}}}{\frac{1}{\sqrt{1+w^2}}e^{-\frac{x^2}{2(1+w^2)}}} \\
 & = & \frac{(1+w^2)^{\frac{\gamma+1}{2}}}{w} \exp\left[-\frac{\gamma w^4 + 1}{2w^2(1+w^2)}x^2 + \frac{xx_0}{w^2} -\frac{x_0^2}{2w^2}\right] \times \\
 &  & \qquad \qquad \qquad \qquad \qquad \qquad \quad \left(\frac{1}{\sqrt{1+w^2}}e^{-\frac{x^2}{2(1+w^2)}} - e^{-\frac{x^2}{2}}\right)
\end{eqnarray*}
It is clear now that if $\gamma>0$, the integral exists. We should not forget that the integral term also depends on $x_0$. The dominating term is $e^{-x_0^2}$, so that the integral term is bounded as a function of $x_0$ as soon as the integral exists.\\
It remains to show that the term $A$ exists and is invertible. Since $\nabla p_{\phi^T} = xe^{-x^2/2}$, and $J_{p_{\phi^T}} = (1+x^2)e^{-x^2/2}$, then:
\[
\frac{\left[(\gamma-1)\nabla p_{C(P)}\left(\nabla p_{C(P)}\right)^t+p_{C(P)}J_{p_{C(P)}}\right]p_{C(P)}^{\gamma-2}}{(K_w*P)^{\gamma-1}} = \sqrt{\frac{1+w^2}{2\pi}} (1+\gamma x^2)e^{-\frac{\gamma w^2 + 1}{2(1+w^2)}x^2}.\]
Hence,
\begin{eqnarray*}
A & = & \sqrt{\frac{1+w^2}{2\pi}}\frac{\gamma}{\gamma-1}\int{(1+\gamma x^2)e^{-\frac{\gamma w^2 + 1}{2(1+w^2)}x^2} dx}  - \frac{1+w^2}{\sqrt{2\pi}}\int{(1+\gamma x^2)e^{-\frac{\gamma w^2 + w^2 + 1}{2(1+w^2)}x^2} dx} \\
 & = & \sqrt{\frac{1+w^2}{2\pi}}\frac{\gamma}{\gamma-1} \left(\sqrt{\frac{2\pi}{a}} + \gamma\sqrt{\frac{2\pi}{a^3}}\right) - \frac{1+w^2}{\sqrt{2\pi}} \left(\sqrt{\frac{2\pi}{b}} + \gamma\sqrt{\frac{2\pi}{b^3}}\right)
\end{eqnarray*}
where $a = \frac{\gamma w^2 + 1}{1+w^2}$ and $b = \frac{\gamma w^2 + w^2 + 1}{1+w^2}$. It is clear that for $\gamma\in(0,1)$, the two terms constituting $A$ have the same sign, hence $A$ cannot be zero since it is the sum of two negative terms. However, if $\gamma>1$, $A$ may by zero for some cases. Indeed, $A$ is 0 whenever $\gamma^2 (1+\gamma+2\gamma w^2)^2 (1+(\gamma+1)w^2)^3 - (\gamma-1)(1+w^2)(1+\gamma+(\gamma+2)w^2)^2 = 0$. Notice that function $w\mapsto \gamma^2 (1+\gamma+2\gamma w^2)^2 (1+(\gamma+1)w^2)^3 - (\gamma-1)(1+w^2)(1+\gamma+(\gamma+2)w^2)^2$ is equal to $2\gamma-1>0$ when $w=0$, whereas it has a $-\infty$ limit at $+\infty$. Thus, it passes by zero since it is a continuous function. \\
Previous arguments permit us to conclude for sure that for $\gamma\in(0,1)$, the influence function of the estimator defined by (\ref{eqn:NewMDphiDE}) is bounded in the gaussian model independently of the bandwidth of the gaussian kernel. Moreover, it is unbounded for $\gamma<0$. Hence, one can hope to get a robust estimation when $\gamma\in(0,1)$. However, further investigations are needed for the case of $\gamma<0$.
\end{example}
\section{The Basu-Lindsay approach}\label{subsec:BLapproach}
The idea of smoothing the empirical distribution was at first employed to avoid the problem of absolute continuity of the model with respect to $dP_n$ when we use the later to replace the true distribution in (\ref{eqn:PhiDivergence}), see \cite{Beran} for the case of the Hellinger distance. \cite{BasuLindsay} argue that the use of such methods require consistency and rates of convergence for the kernel estimator. They propose to smooth not only the empirical distribution, but also the model. For example, if the smoothing is by convolution with a symmetric kernel $K$ such as the gaussian kernel, the Basu-Lindsay approach is summarized in the following two lines:
\begin{eqnarray}
p_{\phi}^*(x) & = & \frac{1}{w}\int_{\mathbb{R}}{p_{\phi}(y) K\left(\frac{x-y}{w}\right)dy}; \nonumber\\
\hat{\phi} & = & \arginf_{\phi\in\Phi} \int_{\mathbb{R}}{\varphi\left(\frac{p_{\phi}^*(x)}{K_{n,w}(x)}\right)K_{n,w}(x)dx},
\label{eqn:BasuLindsayDiv}
\end{eqnarray}
where $K_{n,w}(x) = \frac{1}{nw}\sum{K\left(\frac{x-y_i}{w}\right)}$ is the Parzen-Rosenblatt symmetric-kernel estimator. The authors prove the robustness of (\ref{eqn:BasuLindsayDiv}) using the residual adjustment function (RAF), see \cite{LindsayRAF}, since the corresponding influence function is generally unbounded, keeping first order efficiency in hand. There is still the choice of the kernel and its window, since their theoretical study demands a \emph{transparency assumption} of the kernel\footnote{The transparency assumption here means that the smoothed score function (derivative of the log-likelihood) is proportional to the non smoothed one. The proportion rate can only be a function of the parameters.} which is not verified in general. A transparent kernel ensures no loss of information when smoothing the model density. They also show in simple examples that even when we use non transparent kernel, loss of information is not big provided that we are using a convenient kernel.\\
For example, in the gaussian model $\mathcal{N}(\mu,\sigma^2)$, the gaussian kernel verifies the transparency property. Besides, the smoothed model is merely a gaussian density with variance equal to $\sigma^2+h^2$. Thus, the Basu-Lindsay approach appears as if we are calculating a divergence between a \emph{weighted} version of the model and the kernel estimator.
\subsection{Smoothing-the-model's effect}\label{subsec:SmoothingModelEffect}
The Basu-Lindsay approach seems to be more sensitive to the choice of the kernel than standard methods. For example, let's take the case of densities defined on $(0,\infty)$ (with zero possibly included). Simple examples of such distributions are Weibull distributions and generalized Pareto distributions (GPDs). It is well-known that estimation based on symmetric kernels is biased near zero. Thus, smoothing the model with such kernels will result in similar bias near zero. Figure \ref{fig:SmoothingeffectGamma} shows the influence of a gaussian kernel on a GPD model. The smoothed model has a peak near zero and decreases then towards zero, and hence largely underestimates the values of the "non smoothed" model near zero. Thus, the divergence calculates a distance between a biased estimator of the true distribution and a biased model, and there is no intuitive guarantee of what should give the minimization of such a function. Standard methods which do not smooth the model would suffer less from this sort of problems since the bias is only in the kernel estimator.\\
Simulation results show that among the three methods which use a kernel estimator (Beran's approach, the Basu-Lindsay approach and our kernel-based MD$\varphi$DE) the Basu-Lindsay approach is the most sensitive one. Under the model, all three methods do not give satisfactory results in comparison to the MLE (or the classical MD$\varphi$DE) when we use symmetric kernels. When outliers are present, even the Basu-Lindsay estimator still gives a better result than the MLE.\\
\begin{figure}[h]
\centering
\includegraphics[scale=0.42]{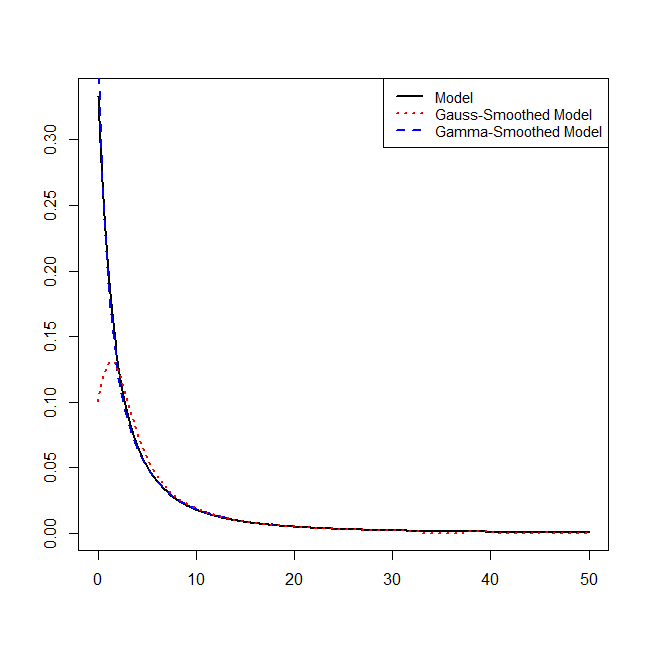}
\caption{Smoothing the model with a gaussian kernel results in a great loss in information. The use of an asymmetric kernel such as the the reciprocal inverse gaussian (RIG) seems to be a good alternative.}
\label{fig:SmoothingeffectGamma}
\end{figure}

\noindent The solution for the previous problem is of course to either use a bias-correction method, see \cite{BiasCorrSurvey}, or to use asymmetric kernels which do not suffer from the boundary bias, see \cite{Libengue}. A more intriguing example is a Weibull distribution with shape parameter in $(0,1)$. The density function explodes to infinity as we approach from zero\footnote{Of course, if we are defining the Weibull distribution with a location parameter, the pdf explodes to infinity near the value of the location parameter.}. Cases such as GPD models can be treated efficiently using bias-correction methods since these assume that the support is semi-closed. Models which has singularities such as the Weibull model can be treated using asymmetric kernels such as gamma kernels or reciprocal inverse gaussian kernels\footnote{Asymmetric kernels have an attractive property that they can treat both bounded and unbounded densities.}. These methods can be employed to recover a good performance in the Basu-Lindsay approach and give better results for the Beran and our kernel-based MD$\varphi$DE.\\ 
Let's see how this kind of solution can be applied on the Basu-Lindsay approach. We discuss only the case of asymmetric kernels since similar arguments apply for bias-correction methods. Let $\hat{f}$ be the asymmetric-kernel estimator defined by:
\[\hat{f}(x) = \frac{1}{n c(y_1,\cdots,y_n)}\sum_{i=1}^n{K_{x,w}(y_i)},\]
where $K_{x,w}$ is the asymmetric kernel calculated at observation $y_i$, and $c(y_1,\cdots,y_n)$ is a constant which ensures integrability to 1. For example, $K$ is the gamma kernel:
\[K_{x,w}(y) = \frac{y^{x/w}}{\Gamma(1+x/w) h^{1+x/w}} e^{-y/w}, \qquad \text{for } y\in[0,\infty),\]
where $\Gamma$ is the classical gamma function. Estimator $\hat{f}$ can no longer be defined as the convolution between the asymmetric kernel and the empirical distribution in the same way as symmetric ones. Thus, the smoothed model in the Basu-Lindsay approach can no longer be obtained by simple convolution. It is given by:
\[p_{\phi}^*(x) = \int_{0}^{\infty}{\frac{1}{c(y)}K_{x,w}(y)p_{\phi}(y)dy},\]
where $c(y)$ is a function which normalizes the kernel for each value of $y$ in order to be a density. It is given by:
\[c(y) = \int_{0}^{\infty}{K_{x,w}(y)dx}.\]
Unfortuantely, this normalization function cannot be calculated but numerically. Taking into account the number of integrations needed to perform such a task and the calculus of the $\varphi-$divergence afterwards which also needs numerical integration, we get a great complexity. In comparison to the classical approach of \cite{Beran}, the calculus of the smoothed model imposes two extra embedded integrals making the calculus of the $\varphi-$divergence very difficult on two levels. The first one is the execution time, and the second one is the subtlety of the whole calculus since all these integrals are carried out over slow decreasing functions on the half line\footnote{The calculs of bounded integrals is far more simple than infinit integrals. Besides, a slow decreasing function (at the border of the its domain), even if it is smooth, is harder to be handled by numerical integration methods than fast decreasing ones.}.\\
\begin{remark}
We were unable to use asymmetric kernels in the Basu-Lindsay approach, because integration calculus (three embedded ones) failed even when restricting the calculus of the normalizing function $c(y)$ on a finite interval. The execution time using the statistical tool \cite{Rtool} on an i7 laptop with 8G RAM took 12 minutes for a simple calculus of the smoothed model. One can imagine now the execution time of the $\varphi-$divergence and finally the optimization over $\phi$. The method should work if one can handle efficiently the problem of numerical integrations and give close results to the case when we do not smooth the model.
\end{remark}
\begin{remark}
The use of the normalization function is necessary to get a very small loss of information. If it is not used, there will be a similar underestimation near zero to the case of symmetric kernels when applied on models defined on a semi-closed intervals.
\end{remark}
Very recently, \cite{vKDE} have proposed a method which does not contain a normalization function. Their approach is based on the so called Mellin transform to approximate the distribution function and then derive an estimate of the density function. Their estimator called as varying kernel density estimator (vKDE) is defined by:
\begin{equation}
\hat{f}_{\alpha}(x) = \frac{1}{n}\sum_{i=1}^n{\frac{1}{y_i}\frac{1}{\Gamma(\alpha)}\left(\frac{\alpha x}{y_i}\right)^{\alpha}\exp\left(-\frac{\alpha x}{y_i}\right)}.
\label{eqn:MTKDE}
\end{equation}
This estimator is different from estimators defined based on symmetric or asymmetric kernels as explained by the authors. They provide a bias-corrected version of this estimator to reduce the bias at the boundary. Nevertheless, we prefer to use (\ref{eqn:MTKDE}) because it integrates to 1 and the Basu-Lindsay approach can be performed more efficiently and reasonably in comparison to the use of asymmetric kernels when working with distributions defined on the half line. The parameter $\alpha$ is a natural number, and (\ref{eqn:MTKDE}) is $L1$--consistent as $\alpha$ goes to infinity under suitable conditions. It even achieves the optimal rate of convergence for MSE and MISE. \\
It is important to notice that $\hat{f}_{\alpha}(0)=1$ for $\alpha\geq 1$. Thus, it is preferable to be used for densities which have value equal to 0 at 0 or for densities which are defined on $(0,\infty)$. In kernel-based estimation procedures, the value at zero is not important because it disappears in integration calculus. Besides, no observation will have exactly the value zero.

\section{Advantages and disadvantages of the new reformulation}
The new reformulation of the minimum dual $\varphi-$divergence has apparently many advantages in comparison with the classical approach and the Basu-lindsay's estimator. We list some of these points.
\begin{itemize}
\item[$\bullet$] The role of the kernel estimator appears directly in the formula of the IF and the ratio between the true distribution and its smoothed version is the part which controls the boundedness of the IF. The method also inherits its robustness from the fact that it approximates a $\varphi-$divergence. Simulation results that it still copes with the performance of both the MLE and classical MD$\varphi$DE when we are under the model.\\
\item[$\bullet$] In comparison with the classical MD$\varphi$DE, our new approach has omitted the double optimization by approximating the argument of the supremum in the dual representation. This constitutes a very important step since on the one hand, the double optimization requires a greater execution time which is of order equals to the square of the time needed for a simple optimization\footnote{There is also the initialization problems for each internal optimization calculus.}. On the other hand, the \emph{supremal} form of the objective function to be minimized afterwords creates further complications in studying the regularity (continuity and differentiability) of the supremal function which play an important role in optimization methods. It is true that optimization methods for non differentiable functions exist, nevertheless, these methods suffer from low convergence speed rates in comparison to methods which use the gradient of the objective function such as first order gradient descent (or the hessian matrix such as second order gradient descent and the BFGS).\\
\item[$\bullet$] Our approach contains only one integration which should be calculated numerically, whereas the smooth-of-the-model techinque in the Basu-Lindsay's estimator creates another integration which should be calculated numerically in general. Besides this calculus intervenes inside an external integration calculus. Thus the number of numerical integrations is highly increased depending on the difficulty of the external integration\footnote{Difficulty comes from a bad shape of the integrand sometimes or irregularities. It also comes from functions with low decreasing rate at infinity for infinite integrals.}. Besides, the use of asymmetric kernels or bias-correction methods is not possible since these methods add another internal integral unless one solve all these integrals efficiently, see Sec. \ref{subsec:BLapproach} for more details.
\item[$\bullet$] There is a difference between our new approach and direct smoothing techniques. Although the performances and estimates are close, our approach keeps the philosophy of approximating a divergence between the model and the empirical distribution. In the Basu-Lindsay's approach or classical methods of inserting a kernel in (\ref{eqn:PhiDivergence}) such as \cite{Beran}, the divergence is calculated between the empirical distribution and the (smoothed) model.\\
\end{itemize}
We list some of the drawbacks of our approach:
\begin{itemize}
\item Our method still suffers, similarly to the Basu-Lindsay's approach and any method which uses kernels, from the problem of choosing the kernel and the window. This problem stays minor as long as we are working with regular densities which converges to zero at both extremities of support. When the density tends to infinity on the border or does not converge to zero, asymmetric kernels or bias-correction methods are needed. Unfortunately, these two tools, although efficient, lack a good and a general method for the choice of the window. \\
\item Unlike the Basu-Lindsay's approach, we were not able to reformulate a general condition such as \emph{kernel transparency} in order to avoid the need to consistency of the kernel estimator in some cases\footnote{For our kernel-based MD$\varphi$DE, the gaussian location model does not need consistency of the kernel when the gaussian kernel is used.}. Note, however, that this transparency condition is still a very hard task, and if it is not verified, consistency of the kernel is needed.\\
\item As we will see in the simulation paragraph, our kernel-based MD$\varphi$DE has apparently traded some of its efficiency with a robustness properties. It is therefore not as good as the MLE and the classical MD$\varphi$DE under the model.\\
\item Our approach is not suitable for working with multidimensional distributions, since in higher dimensions, the so called \emph{curse of dimensionality} appears and the neighborhoods of observed data becomes void. Thus the calculus of the kernel estimator would require much more data than univariate problems. This is not the case of the classical approach. We still can use projection-based nonparameteric estimators to replace the kernel and do the job. We also present hereafter a particular solution to contamination models which can be generalized directly to multivariate cases.\\
\end{itemize}

\section{The Dual \texorpdfstring{$\varphi-$}{TEXT}divergence estimator}\label{sec:DphiDE}
\subsection{General facts and comments}
The dual $\varphi-$divergence estimator (D$\varphi$DE) was defined in \cite{BroniatowskiKeziou2007} as the argument of the supremum in (\ref{eqn:DivergenceDef}). It is defined by:
\begin{equation}
\hat{\alpha}_n = \argsup_{\alpha\in\Phi}\left\{\int{\varphi'\left(\frac{p_{\phi}}{p_{\alpha}}\right)(x)p_{\phi}(x)dx} - \frac{1}{n}\sum_{i=1}^n{\left[\frac{p_{\phi}}{p_{\alpha}} \varphi'\left(\frac{p_{\phi}}{p_{\alpha}}\right) - \varphi'\left(\frac{p_{\phi}}{p_{\alpha}}\right)\right](y_i)}\right\}
\label{eqn:DphiDE}
\end{equation}
This estimator is far more simple than the classical MD$\varphi$DE defined by (\ref{eqn:MDphiDEClassique}) since it needs only a simple optimization over $\alpha$ for a given choice of the escort parameter $\phi$. Besides, this estimator is proved to be robust in some models from an IF point of view, provided a suitable choice of the escort parameter. Indeed, the IF is given by (see \cite{TomaBronia}):
\[\text{IF}(y|\phi) = \left[\int{J_f(x)p_{\phi^T}(x)dx}\right]^{-1}\left[\int{\left(\frac{p_{\phi}}{p_{\phi^T}}\right)^{\gamma}(x)\nabla_{\phi} p_{\phi^T}(x) dx} - \left(\frac{p_{\phi}}{p_{\phi^T}}\right)^{\gamma}(y)\frac{\nabla_{\phi} p_{\phi^T}(y)}{p_{\phi^T}(y)}\right]\]
where:
\[f(\alpha,\phi,y) = \int{\frac{p_{\phi}^{\gamma}}{p_{\alpha}^{\gamma-1}}p_{\phi}dx} - \left[\frac{p_{\phi}}{p_{\alpha}}(y)\right]^{\gamma}\]
Previous papers which discussed the choice of the escort parameter have either let the choice arbitrary in the region where the IF is bounded (this is the case of \cite{TomaBronia}), or proposed to use robust estimates for the escort parameters (this is the case of \cite{Cherfi} and \cite{Frydlova}). The first idea is very complicated since we have no idea about the true value of the parameters and a bad choice of the escort parameter even inside the region where the IF is bounded does not ensure a good result. In \cite{Frydlova} and \cite{Cherfi}, experimental results show that the D$\varphi$DE in a normal model is very close to the escort parameter and coincide with the escort parameter when the later is equal to the MLE. The last fact can be easily verified following the proof of Theorem 6 in \cite{Broniatowski2014}. Indeed, one may show that the MLE is a zero of the estimating equation of the D$\varphi$DE and has a definit negative jacobian matrix of the corresponding objective function. On the other hand, the use of a robust escort parameter is not always a good idea. We discuss these two ideas on two examples.
\begin{example}
We resume the two-component gaussian mixture example. We have already shown that the classical MD$\varphi$DE has an unbounded IF in this model in paragraph \ref{sec:NonRobsutMDphiDE}. The IF of the D$\varphi$DE is not the same. We will try and give some conditions on the escort parameter in order to make it bounded. The first term in the influence function is a matrix which is independent of $y$ and is constant. Supposing that it is invertible, our job is to investigate both the existence of the integral, which is also a constant, and the remaining term which changes according to $y$. The integral exists since the the fraction is of order $e^{ax}$ whereas the derivative is of order $e^{-x^2}$. The remaining term needs to be studied extensively. The fraction $\frac{\nabla_{\phi} p_{\phi^T}(y)}{p_{\phi^T}(y)}$ was already studied in the case of the MD$\varphi$DE. We, therefore, need only to study the fraction $\left(\frac{p_{\phi}}{p_{\phi^T}}\right)^{\gamma}$.
\begin{eqnarray*}
\left(\frac{p_{\phi}}{p_{\phi^T}}\right)^{\gamma} & = & \left(\frac{\lambda e^{-\frac{1}{2}(y-\mu_1)^2} + (1-\lambda)e^{-\frac{1}{2}(y-\mu_1)^2}}{\lambda^T e^{-\frac{1}{2}(y-\mu_1^T)^2} + (1-\lambda^T)e^{-\frac{1}{2}(y-\mu_1^T)^2}}\right)^{\gamma} \\
 & = & \left(\frac{\lambda + (1-\lambda) e^{y(\mu_2-\mu_1) + \frac{1}{2}\mu_1^2-\frac{1}{2}\mu_2^2}}{\lambda^T + (1-\lambda^T) e^{y(\mu_2^T-\mu_1^T) + \frac{1}{2}(\mu_1^T)^2-\frac{1}{2}(\mu_2^T)^2}} e^{y(\mu_1 - \mu_1^T) + \frac{1}{2}(\mu_1^T)^2 - \frac{1}{2}\mu_1^2}\right)^{\gamma} \\
 & = & \left(\frac{1-\lambda + \lambda e^{y(\mu_1-\mu_2) + \frac{1}{2}\mu_2^2-\frac{1}{2}\mu_1^2}}{1-\lambda^T + \lambda^T e^{y(\mu_1^T-\mu_2^T) + \frac{1}{2}(\mu_2^T)^2-\frac{1}{2}(\mu_1^T)^2}} e^{y(\mu_2 - \mu_2^T) + \frac{1}{2}(\mu_2^T)^2 - \frac{1}{2}\mu_2^2}\right)^{\gamma}
\end{eqnarray*}
When $y$ tends to $-\infty$, if $\mu_1>\mu_1^T$, then the second line shows that the fraction gives a finite limit equals to 0. Otherwise, it gives $+\infty$. When $y$ tends to $+\infty$, if $\mu_2<\mu_2^T$, the third line shows that the fraction gives a finite limit equals to 0. Otherwise, it gives $+\infty$. We need to incorporate this with the terms of the vector $\frac{\nabla_{\phi} p_{\phi^T}(y)}{p_{\phi^T}(y)}$. The derivative with respect to $\lambda$, is already bounded, and hence no additional condition is needed. The derivative with respect to $\mu_1$ is also bounded at $+\infty$. However, at $-\infty$ it is of order $y$. Still, it vanishes against the term $e^{\gamma y(\mu_1 - \mu_1^T) + \frac{\gamma}{2}(\mu_1^T)^2 - \frac{\gamma}{2}\mu_1^2}$ which comes from the fraction $\left(\frac{p_{\phi}}{p_{\phi^T}}\right)^{\gamma}$ under conditions $\gamma>0$ and $\mu_1>\mu_1^T$. Finally, the derivative with respect to $\mu_2$ is treated similarly.\\
We conclude that provided that the matrix term is invertible, the influence function of the D$\varphi$DE is bounded whenever the escort parameter verify either of the following conditions according to the value of $\gamma$:
\begin{eqnarray}
\mu_1>\mu_1^T, \quad \mu_2<\mu_2^T \qquad & \text{if } & \gamma>0 \label{eqn:MixGaussRobustCond1}\\
\mu_1<\mu_1^T, \quad \mu_2>\mu_2^T \qquad & \text{if } & \gamma<0 \label{eqn:MixGaussRobustCond2}
\end{eqnarray}
The use of a robust escort parameter verifying the set of conditions (\ref{eqn:MixGaussRobustCond1}, \ref{eqn:MixGaussRobustCond2}) leads to a more robust parameter than the escort. However, the use of a robust escort parameter which does not fulfill the set of conditions (\ref{eqn:MixGaussRobustCond1}, \ref{eqn:MixGaussRobustCond2}) has a negative impact on the resulting estimator. In our simulations in Sec. \ref{sec:Simulations}, we have analyzed the mixture whose true set of parameters is $(\lambda^T=0.35,\mu_1^T=-2,\mu_2^T=1.5)$. We used our new MD$\varphi$DE (with a Silverman's rule for the window) as an escort parameter. The divergence criterion is the Hellinger divergence which corresponds to $\gamma=0.5$. Thus, we are in the context of condition (\ref{eqn:MixGaussRobustCond1}). The new MD$\varphi$DE verify this condition and the resulting D$\varphi$DE has in average a better error, see table \ref{tab:DphiDEGaussMixEx}. In the same table, we give another escort parameter which as good as the previous one depending on our two error criteria, and even slightly better. If we calculate the D$\varphi$DE using this escort parameter which clearly does not verify condition (\ref{eqn:MixGaussRobustCond1}), the resulting estimator does not give a better estimate than the escort. It is clearly worse since the error has nearly been doubled. \\
\begin{table}[h]
\centering
\begin{tabular}{|c|c|c|}
\hline
Estimator & $\chi^2$ & Total variation \\
\hline
$\hat{\phi}_1=(\hat{\lambda}=0.349,\hat{\mu}_1=-1.767,\hat{\mu}_2=1.377)$ & 0.155 & 0.087 \\
$\hat{\phi}_2 = (\hat{\lambda}=0.36,\hat{\mu}_1=-2.2, \hat{\mu}_2=1.7)$ & 0.142 & 0.079\\
\hline
D$\varphi$DE($\hat{\phi}_1$) & 0.142 & 0.076 \\
D$\varphi$DE($\hat{\phi}_2$) & 0.213 & 0.115 \\
\hline
\end{tabular}
\label{tab:DphiDEGaussMixEx}
\caption{The influence of a robust escort parameter on the D$\varphi$DE in a mixture of two gaussian components. The error is calculated between the true distribution and the estimated one, see Sec. \ref{sec:Simulations}.}
\end{table}
\end{example}
\begin{example}
Let $p_{\phi}$ be a generalized Pareto distribution:
\[p_{\nu,\sigma}(y) = \frac{1}{\sigma}\left(1+\nu\frac{y}{\sigma}\right)^{-1-\frac{1}{\nu}},\quad \text{for } y\geq 0.\]
The shape and the scale are supposed to be unknown and equal to $\nu^T=0.7, \sigma^T=3$. It is necessary for the IF of the D$\varphi$DE to be bounded\footnote{The IF contains an inverse of a $2\times 2$ matrix which cannot be simply calculated. Since it is a mere constant, we only discussed the other terms in the IF.} following the value of $\gamma$ to locate the shape of the escort parameter with respect to the true value of the shape parameter. If $\gamma\in(0,1)$, it is necessary for the IF to be bounded that $\nu<\nu^T$. If $\gamma<0$, then the IF can be bounded whenever $\nu>\nu^T$. Our simulation results in paragraph \ref{subsec:SimulationGPD} shows that for $\gamma=0.5$ (the hellinger divergence), the D$\varphi$DE calculated using a robust escort parameter (our kernel-based MD$\varphi$DE) has deteriorated the performance significantly. The total variation distance corresponding to the escort parameter is 0.05 whereas the total variation distance corresponding to the D$\varphi$DE is $0.12$. The escort parameter gives an estimate of the shape parameter $0.766$ which seems to be a good estimate. It is worth noting that it still gives better results than those obtained using MLE which gives a total variation distance equal to $0.195$.\\
\end{example}
The past two examples\footnote{See the remaining of the simulations for more examples.}form an opposed result to the conjuncture of both articles \cite{Frydlova} and \cite{Cherfi} about the use of robust escort parameter. The use of a robust escort is a gamble and does not guarantee a better estimator than the escort itself. Thus, we are taking a great risk by using the D$\varphi$DE. Notice, finally, that the D$\varphi$DE is still more robust than the MLE and the classical MD$\varphi$DE even if the IF is not bounded.
\subsection{Relation with the density power divergences}
The density power divergence (MDPD) was first introduced by \cite{BasuMPD}. It is defined by:
\begin{eqnarray}
\hat{\phi}_n & = & \arginf_{\phi} \int{p_{\phi}^{1+a}}(z) dz - \frac{a+1}{a}\frac{1}{n}\sum_{i}^n{p_{\phi}^{a}(y_i)} \nonumber \\
 & = & \arginf_{\phi} \mathbb{E}_{\phi}\left[p_{\phi}^a\right] - \frac{a+1}{a}\mathbb{E}_n\left[p_{\phi}^{a}\right]
\label{eqn:MPDdef}
\end{eqnarray}
Let's look at the D$\varphi$DE for power divergences with $\gamma=-a<0$. It is given by:
\begin{eqnarray}
\hat{\alpha}_n & = & \argsup_{\alpha\in\Phi} \frac{1}{\gamma-1}\int{\frac{p_{\theta}^{\gamma}}{p_{\alpha}^{\gamma-1}}(x)dx} - \frac{1}{\gamma}\frac{1}{n}\sum_{i=1}^n{\left[\frac{p_{\theta}}{p_{\alpha}}\right]^{\gamma}(y_i)}\nonumber\\
 & = & \argsup_{\alpha\in\Phi} -\frac{1}{a+1}\int{\frac{p_{\alpha}^{a+1}}{p_{\theta}^{a}}(x)dx} + \frac{1}{a}\frac{1}{n}\sum_{i=1}^n{\left[\frac{p_{\alpha}}{p_{\theta}}\right]^{a}(y_i)}\nonumber\\
& = & \arginf_{\alpha\in\Phi} \int{\frac{p_{\alpha}^{a+1}}{p_{\theta}^{a}}(x)dx} - \frac{1+a}{a}\frac{1}{n}\sum_{i=1}^n{\left[\frac{p_{\alpha}}{p_{\theta}}\right]^{a}(y_i)}\nonumber \\
& = & \arginf_{\alpha\in\Phi}\mathbb{E}_{\alpha}\left[\left(\frac{p_{\alpha}}{p_{\theta}}\right)^a\right] - \frac{a+1}{a}\mathbb{E}_n\left[\left(\frac{p_{\alpha}}{p_{\theta}}\right)^a\right]
\label{DphiDENeg}
\end{eqnarray}
A simple comparison between (\ref{eqn:MPDdef}) and (\ref{DphiDENeg}) gives that, the D$\varphi$DE seems to be a penalized form of the MDPD. This penalization by a density $p_{\theta}$ creates a big trouble from a robustness point of view. The robustness of the D$\varphi$DE is now not only controlled by the divergence power $a=-\gamma$ but also through $p_{\theta}$. We have seen in the previous paragraph that the robustness of the D$\varphi$DE in a two-component gaussian mixture varies according to the position of $\theta$ from the true set of parameters. The difficulty of this escort parameter constitutes the only drawback of the D$\varphi$DE in comparison to the MPD. It is still a positive point in its favor. Indeed, the penalization by $p_{\theta}$ can be reread in the spirit of \cite{BroniaKeziou2006}. The ratio $p_{\alpha}/p_{\theta}$ is the Radon-Nikoyme density of $P_{\alpha}$ with respect to $P_{\phi}$. Thus, one can define the D$\varphi$DE even if $P_{\alpha}$ is not absolutely continuous with respect to the Lebesgue measure on $\mathbb{R}$. This fact cannot be done in the MDPD.\\


\section{A solution for the case of contamination models}
We define a contamination model to be the following mixture model:
\[P_T = (1-\varepsilon)P_{\phi^T} + \varepsilon Q\]
for $\varepsilon\in[0,1)$ which should be small. We have already seen here above that the main problem in the classical MD$\varphi$DE is that the dual representation (\ref{eqn:ParametricDualForm}) largely underestimate the divergence between the true distribution and the model when the data is contaminated. Since the supremum is attained when $p_{\alpha}=p_T$, the model $p_{\alpha}$ cannot cope with the contamination part $\varepsilon Q$ keeping a \emph{good} distance from the main part of the distribution $(1-\varepsilon)P_{\phi^T}$. In order to reestablish the supremum attainment, or at least reduce the gap between the dual representation and the true value of the divergence $D_{\varphi}(P_{\phi},P_T)$, we propose to replace $p_{\alpha}$ by a \emph{contaminated model} $(1-\lambda)p_{\alpha}+\lambda q_{\theta}$. This corresponds to the use of the following class of functions in the dual formula of the divergence (\ref{eqn:GeneralDualRep}):
\[\mathcal{F}_{\theta} = \left\{\varphi'\left(\frac{p_{\phi}}{(1-\lambda)p_{\alpha}+\lambda q_{\theta}}\right), \alpha\in\Phi\subset\mathbb{R}^d, \theta\in\Theta\subset\mathbb{R}^{d'},\lambda\in[0,1)\right\}\]
The minimum dual $\varphi-$divergence estimator can now be defined by:
\begin{equation}
\hat{\phi}_n = \arginf_{\phi\in\Phi} \sup_{\alpha\in\Phi, \theta\in\Theta,\lambda\in[0,1)} \left\{\int{\varphi'\left(\frac{p_{\phi}}{(1-\lambda)p_{\alpha}+\lambda q_{\theta}}\right)(x)p_{\phi}(x)dx} - \frac{1}{n}\sum_{i=1}^n{\varphi^{\#}\left(\frac{p_{\phi}}{(1-\lambda)p_{\alpha}+\lambda q_{\theta}}\right)}\right\}
\label{eqn:MDphiDEContaminModel}
\end{equation}
When we replace $P_n$ by $P_T$, the supremum is attained whenever $\alpha=\phi^T, \lambda=\varepsilon, q_{\theta}=dQ/dx$. Hence, if we are under the model, i.e. $\varepsilon=0$  and $dQ/dx$ is in the submodel $(q_{\theta})_{\theta\in\Theta}$, the previous estimator is Fisher consistent unlike our estimator defined by (\ref{eqn:NewMDphiDE}).\\
The estimator defined by (\ref{eqn:MDphiDEContaminModel}) is clearly a modified version of the classical MD$\varphi$DE defined by (\ref{eqn:MDphiDEClassique}), and show what the classical approach misses. The advantage of such an approach is that we can use it in multidimensional problems without further modifications (unlike our first approach given in paragraph \ref{subsec:KernelSolution}). On the other hand, the choice of a model for the contamination part may be easier than the choice of the kernel and its window in the MD$\varphi$DE defined by (\ref{eqn:NewMDphiDE}), since there is already a whole theory in the literature of time series for modeling the contamination (noise) in a dataset.\\
The influence function of the estimator defined by (\ref{eqn:MDphiDEContaminModel}) can be calculated similarly to the classical MD$\varphi$DE (see \cite{TomaBronia}). However, the general case when $dQ/dx$ is not a member of the submodel $(q_{\theta})_{\theta\in\Theta}$ is very complicated. There is still a simple case when the contamination $Q$ is a member of the submodel $(q_{\theta})_{\theta\in\Theta}$. In this case, the attainement of the supremum in the dual representation permits us to use directly $D_{\varphi}(P_{\phi},P_T)$ which is proved to be a robust tool, see \cite{Donoho}. Thus, the influence function is unbounded and is the same as the influence function of a $\varphi-$divergence $D_{\varphi}(P_{\phi},P_T)$ which is the same as the IF of the MLE (and the classical MD$\varphi$DE), see \cite{LindsayRAF}.

\section{Simulation study}\label{sec:Simulations}
We summarize the results of 100 experiments by giving the average of the estimates and the error committed, and the corresponding standard deviation. We consider two error criteria. The total variation distance (TVD) and the Chi square divergence between the true distribution and the estimated one. These criteria are defined as follows:
\[\sqrt{\chi^2(p_{\phi},p_{\phi^T})} = \sqrt{\int{\frac{\left(p_{\phi}(y)-p_{\phi^T}(y)\right)^2}{p_{\phi^T}(y)}dy}},\quad \text{TVD}(p_{\phi},p_{\phi^T}) = \sup_{A\in\mathcal{B}_n(\mathbb{R})}\left|dP_{\phi}(A) - dP_{\phi^T}(A)\right|.\]
We prefer to use the Chi square divergence, because it measures the relative error between two probability laws. Hence, the error committed on sets where the true distribution attributes small values is penalized in a similar way to sets where the true distribution attributes large values. We use also the TVD because it has the property of measuring the largest error committed when measuring a set $A$ using the estimated distribution instead of the true one. The TVD can be directly calculated using the $L1$ distance. Indeed, the Shceff\'e lemma (see \cite{Meister} page 129.) states that:
\[\sup_{A\in\mathcal{B}_n(\mathbb{R})}\left|dP_{\phi}(A) - dP_{\phi^T}(A)\right| = \frac{1}{2}\int_{\mathbb{R}}{\left|p_{\phi}(y) - p_{\phi^T}(y)\right|dy}.\]
We consider the Hellinger divergence for estimators based on $\varphi-$divergences. The parameter vector is estimated using five methods:
\begin{enumerate}
\item Maximum likelihood (MLE) which is calculated using EM for mixture models;
\item The classical MD$\varphi$DE defined by (\ref{eqn:MDphiDEClassique});
\item Our kernel-based MD$\varphi$DE defined by (\ref{eqn:NewMDphiDE}) with different choices for the kernel and its bandwidth;
\item The Basu-Lindsay approach with different choices for the kernel and its bandwidth;
\item The dual $\varphi$--divergence estimator (D$\varphi$DE) defined by (\ref{eqn:DphiDE}) with escort parameter the result of our kernel-based MD$\varphi$DE with the best choice of the kernel and window among presented possibilities; 
\item The minimum power density estimator (MPD) of \cite{BasuMPD} defined by (\ref{eqn:MPDdef}) for $a\in\{0.1,0.25,0.5,0.75,1\}$.
\end{enumerate}
We give for each experiment a summary of the results with comments, and precise the used kernels and the corresponding windows choices. We finally give an overall conclusion with some practical remarks.\\
Optimization were done using the Nelder-Mead algorithm. Integrations calculus were done using function \texttt{distrExIntegrate} of package \texttt{distrEx} which is a slight modification of the standard function \texttt{integrate}. It performs a Gauss-Legendre quadrature when function \texttt{integrate} returns an error. We have noticed that functions such as \texttt{integral} of package \texttt{pracma}\footnote{Function \texttt{integral} includes a variety of adaptive numerical integration methods such as Kronrod-Gauss quadrature, Romberg's method, Gauss-Richardson quadrature, Clenshaw-Curtis (not adaptive) and (adaptive) Simpson's method.	}, although has a good performance, is slow. Besides, function \texttt{int} of package \texttt{rmutil}, which uses either the Romberg method or algorithm 614 of the collected algorithms from ACM, seems to underestimate the value of the integral in slightly difficult circumstances such as heavy tailed distributions. For example, when we used it to calculate the classical MD$\varphi$DE in the GPD case, it gave robust results because it underestimated the infinity part of the integral (forged thresholding effect). Finally, during some experiences on GPD observations and Weibull distributions based on the Basu-Lindsay approach, function \texttt{distrExIntegrate} failed to converge and function \texttt{integral} was used to attain a result.  \\
Our simulation study covers the following models:
\begin{enumerate}
\item Gaussian model with unknown mean and variance;
\item Two gaussian mixtures with two components where the proportion and the two means are unknown;
\item Generalized Pareto distribution with unknown shape and scale;
\item Three Weibull mixtures with two components where the proportion and the two shapes are unknown.
\end{enumerate}
Outliers were added in the original data in many ways which will be specified according to each case. We have either added noise outside the support of the dataset or by dispersing the noise over the whole dataset. We have also used different distributions to produce the noise.\\
For the first two models, we only used a gaussian kernel with window chosen using either Silverman's rule (nrd0 in the statistical tool R) or Sheather and Jones' rule (SJ). For the heavy tailed models which are defined on half the real line, we needed to use non classical kernels such as asymmetric kernels (RIG: reciprocal inverse gaussian and GA: gamma kernels) and the varying KDE of \cite{vKDE} denoted here as MT (Mellin transform) defined here above by (\ref{eqn:MTKDE}), followed by the value of the bandwidth $\alpha\in\{5,10,15,20\}$. In the GPD model and the first Weibull mixture, we present a simple comparison between symmetric kernels and other non classical methods and showed the advantage of the later in such context. We therefore avoided using symmetric kernels for other Weibull mixtures. For the Basu-Lindsay approach, we did not implement asymmetric kernels, see discussion in paragraph \ref{subsec:SmoothingModelEffect}. We only used the varying KDE.\\
In what concerns the rule for deciding the window for the non classical kernels, we have tried out the cross-validation method (CV), but it resulted always in large (small for the varying KDE) and inconvenient windows especially when outliers are inserted. We were, therefore, obliged to use fixed windows in order to obtain good results. For each kernel and method, the window value or the rule used to calculate it is written next to it. More details can be found at each paragraph.
\subsection{Univariate gaussian model}
We consider the gaussian distribution $\mathcal{N}(\mu,\sigma^2)$ when both parameters $\mu$ and $\sigma$ are unknown. We generate at each run a 100-sample of the standard gaussian distribution $\mathcal{N}(0,1)$. Outliers are added simply by replacing the 10 largest values in the sample by the value 10.\\
The maximum likelihood estimator of the parameters are simply the empirical mean and variance $\hat{\mu}=\frac{1}{100}\sum{y_i}, \hat{\sigma}^2 = \frac{1}{99}\sum{(y_i-\hat{\mu})^2}$. For methods which need kernels, we used a gaussian kernel with two rules for the window; Silverman's rule and Sheather and Jones' one. We calculate the power density estimator (MPD) for values of the tradeoff parameter $a\in\{0.1,0.25,0.5,0.75,1\}$. The D$\varphi$DE was calculated using the kernel-based MD$\varphi$DE as an escort with the Silverman's rule. Estimation results are summarized in table \ref{tab:EstimGauss}. Estimation error is calculated in table \ref{tab:ErrGauss}. When we are under the model, all compared methods give the same result with very slight differences. As we add $10\%$ outliers, the classical MD$\varphi$DE and the MLE give the same result which is positively deviated from the true mean with a large variance. This is already expected by virtue of the result of \cite{Broniatowski2014}. Other methods, ours included, give robust results except for MPD with $a=0.1$. Our estimator (for both windows choices) is at the same level of efficiency as the MLE under the model. Besides, the window choice seems irrelevant for methods based on kernels but for Beran's method where Silverman's rule is a bit better. The MPD seems to give the best tradeoff between efficiency and robustness for $a=0.5$ conquering other methods. The kernel-based MD$\varphi$DE and the Basu-Lindsay approaches give slightly better efficiency which is traded with slightly lower robustness in comparison to the result of MPD with $a=0.5$.\\

\begin{table}[hp]
\centering
\begin{tabular}{|l|c|c|c|c||c|c|c|c|}
\hline
\multirow{2}{2.5cm}{Estimation method} & \multicolumn{4}{|c||}{No Outliers} & \multicolumn{4}{|c|}{$10\%$ Outliers}\\
\cline{2-9}
  & $\mu$ & sd$(\mu)$ & $\sigma$ & sd$(\sigma)$ & $\mu$ & sd$(\mu)$ & $\sigma$ & sd$(\sigma)$\\
 \hline
 \hline
\multicolumn{9}{|c|}{Hellinger} \\
\hline
\hline
Classical MD$\varphi$DE & 0.005 & 0.111 & 0.983 & 0.082 & 0.833 & 0.103 & 3.157 & 0.039\\
\hline
New MD$\varphi$DE - Silverman & 0.005 & 0.113 & 0.967 & 0.081 & -0.187 & 0.114 & 0.810 & 0.069\\
New MD$\varphi$DE - SJ & 0.005 & 0.113 & 0.973 & 0.082 & -0.191 & 0.114 & 0.800 & 0.068\\
\hline
Basu-Lindsay - Silverman & 0.005 & 0.114 & 0.968 & 0.081 & -0.191 & 0.114 & 0.805 & 0.068\\
Basu-Lindsay - SJ & 0.005 & 0.113 & 0.970 & 0.081 & -0.193 & 0.114 & 0.799 & 0.067\\
\hline
Beran - Silverman & 0.005 & 0.113 & 1.024 & 0.087 & -0.191 & 0.114 & 0.878 & 0.075\\
Beran - SJ & 0.005 & 0.112 & 1.048 & 0.089 & -0.192 & 0.114 & 0.853 & 0.073\\
\hline
\hline
MPD 0.1 & 0.005 & 0.112 & 0.983 & 0.082 & 0.319 & 0.111 & 2.451 & 0.079\\
MPD 0.25 & 0.006 & 0.112 & 0.983 & 0.083 & -0.145 & 0.114 & 0.854 & 0.074\\
MPD 0.5 & 0.008 & 0.117 & 0.979 & 0.087 & -0.115 & 0.116 & 0.875 & 0.081\\
MPD 0.75 & 0.010 & 0.123 & 0.975 & 0.093 & -0.093 & 0.120 & 0.894 & 0.089\\
MPD 1 & 0.012 & 0.129 & 0.971 & 0.098 & -0.077 & 0.124 & 0.910 & 0.094\\
\hline
\hline
D$\varphi$DE & 0.005 & 0.112 & 0.982 & 0.082 & -0.164 & 0.114 & 0.873 & 0.080\\
\hline
\hline
MLE & 0.005 & 0.111 & 0.988 & 0.082 & 0.833 & 0.103 & 3.172 & 0.039 \\
\hline
\end{tabular}
\caption{The mean value and the standard deviation of the estimates in a 100-run experiment in the standard gaussian model. The divergence criterion is the Hellinger divergence. The escort parameter of the D$\varphi$DE is taken as the new MD$\varphi$DE with the Silverman bandwidth choice.}
\label{tab:EstimGauss}
\end{table}

\begin{table}[hp]
\centering
\begin{tabular}{|l|c|c|c|c||c|c|c|c|}
\hline
\multirow{2}{2.5cm}{Estimation method} & \multicolumn{4}{|c||}{No Outliers} & \multicolumn{4}{|c|}{$10\%$ Outliers}\\
\cline{2-9}
  & $\chi^2$ & sd($\chi^2$) & TVD & sd(TVD) & $\chi^2$ & sd($\chi^2$) & TVD & sd(TVD)\\
 \hline
 \hline
\multicolumn{9}{|c|}{Hellinger} \\
\hline
\hline
Classical MD$\varphi$DE & 0.104 & 0.052 & 0.054 & 0.026 & 8.503 & 0.113 & 0.516 & 0.002\\
\hline
New MD$\varphi$DE - Silverman & 0.106 & 0.052 & 0.056 & 0.028 & 0.230 & 0.063 & 0.136 & 0.041\\
New MD$\varphi$DE - SJ & 0.105 & 0.052 & 0.055 & 0.027 & 0.239 & 0.062 & 0.141 & 0.041\\
\hline
Basu-Lindsay - Silverman & 0.105 & 0.052 & 0.055 & 0.028 & 0.235 & 0.062 & 0.139 & 0.040\\
Basu-Lindsay - SJ & 0.105 & 0.052 & 0.055 & 0.027 & 0.240 & 0.062 & 0.142 & 0.040\\
\hline
Beran - Silverman & 0.114 & 0.063 & 0.054 & 0.025 & 0.191 & 0.067 & 0.110 & 0.042\\
Beran - SJ & 0.125 & 0.076 & 0.057 & 0.026 & 0.205 & 0.066 & 0.119 & 0.042\\
\hline
D$\varphi$DE & 0.104 & 0.052 & 0.054 & 0.026 & 0.183 & 0.068 & 0.105 & 0.042\\
\hline
\hline
MPD 0.1 & 0.104 & 0.051 & 0.053 & 0.026 & 5.772 & 0.356 & 0.411 & 0.013\\
MPD 0.25 & 0.105 & 0.052 & 0.054 & 0.026 & 0.185 & 0.066 & 0.107 & 0.042\\
MPD 0.5 & 0.110 & 0.054 & 0.057 & 0.028 & 0.165 & 0.068 & 0.094 & 0.042\\
MPD 0.75 & 0.116 & 0.060 & 0.060 & 0.032 & 0.152 & 0.070 & 0.086 & 0.043\\
MPD 1 & 0.121 & 0.066 & 0.063 & 0.036 & 0.144 & 0.070 & 0.080 & 0.043\\
\hline
\hline
MLE & 0.104 & 0.052 & 0.053 & 0.025 & 8.522 & 0.111 & 0.518 & 0.002\\
\hline
\end{tabular}
\caption{The mean value of errors committed in a 100-run experiment with the standard deviation. The divergence criterion is the Hellinger divergence. The escort parameter of the D$\varphi$DE is taken as the new MD$\varphi$DE with the Silverman bandwidth choice.}
\label{tab:ErrGauss}
\end{table}
\clearpage
\subsection{Mixture of two gaussian components}
We show in this paragraph several simulations from a two-component gaussian mixture where the data is contaminated or not by a $10\%$ of outliers. We present two mixtures. The first one has the following parameters $\lambda = 0.35, \mu_1 = -2, \mu_2 = 1.5$. The second one has closer components (means). Its parameters are $\lambda = 0.45, \mu_1 = -0.5, \mu_2 = 2$. Variances of both components are supposed to be fixed at 1. The two mixtures are ploted in figure \ref{fig:TheTwoMixtures}. We are only interested in the means and the proportion of each class. Contamination was done for the first mixture by adding in the original sample to the 5 lowest values random observations from the uniform distribution $\mathcal{U}[-5,-2]$. We also added to the 5 largest values random observations from the uniform distribution $\mathcal{U}[2,5]$. Estimation results are summarized in table \ref{tab:EstimGaussMix}. Estimation error is calculated in table \ref{tab:ErrGaussMix}. For the second mixture, contamination was done by adding in the original sample to the 5 lowest values random observations from the uniform distribution $\mathcal{U}[-3,-1]$. We add to to the 5 largest values random observations from the uniform distribution $\mathcal{U}[1,3]$. Estimation results are summarized in table \ref{tab:EstimGaussMixCloserMeans}. Estimation error is calculated in table \ref{tab:ErrGaussMixCloserMeans}. Maximum likelihood estimates are calculated using the EM algorithm. \\
\begin{figure}[ht]
\centering
\includegraphics[scale=0.3]{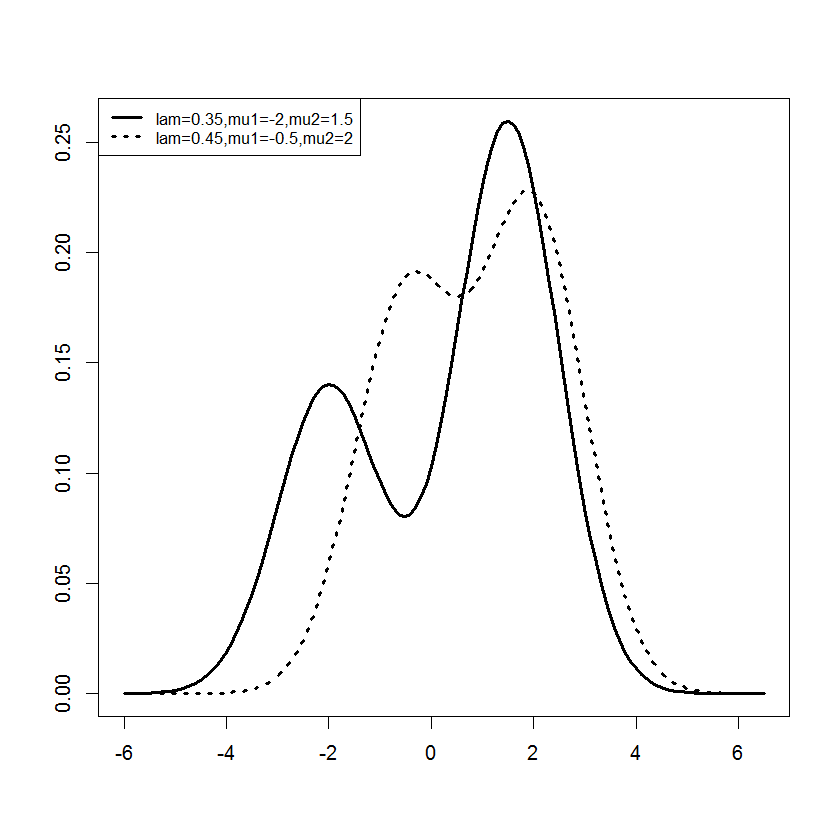}
\caption{The two gaussian mixtures.}
\label{fig:TheTwoMixtures}
\end{figure}

\noindent In what concerns the first mixture (table \ref{tab:ErrGaussMix}): When we are under the model, all compared methods give the same performance. When outliers are added, both classical MD$\varphi$DE and MLE are not robust and give the same result. Other methods provide robust results. The choice of the window has a clearer influence than in the gaussian case. The Silverman's rule gives better results for kernel-based approaches. Error values are close for robust methods and MPD 0.1 is the best one (unlike the univariate gaussian).\\
In what concerns the second mixture: When we are under the model, slight differences appear in favor of the classical MD$\varphi$DE and the MLE (calculated using EM). When we add the outliers, these two estimators fail. MPD for $a=0.1,0.25$ and the Basu-Lindsay approach also fail in the eye of the $\chi^2$ distance. Our kernel-based MD$\varphi$DE have close robustness to the remaining estimators; the MPD for $a=0.5$ and Beran's method. The $\chi^2$ error is more sensitive and show higher differences in favor of our approach against the Basu-Lindsay approach and the minimum power density for small values of the tradeoff parameter. This was basically because of some experiences which failed to converge to a model where the two components are near 0 and considered the second component as the negative noised part of the data. Thus a great relative error has occurred. \\
\begin{table}[hp]
\resizebox{\columnwidth}{!}{
\centering
\begin{tabular}{|l|c|c|c|c|c|c||c|c|c|c|c|c|}
\hline
\multirow{2}{2.5cm}{Estimation method} & \multicolumn{6}{|c||}{No Outliers} & \multicolumn{6}{|c|}{$10\%$ Outliers}\\
\cline{2-13}
  & $\lambda$ & sd($\lambda$) & $\mu_1$ & sd$(\mu_1)$ & $\mu_2$ & sd$(\mu_2)$ & $\lambda$ & sd($\lambda$) & $\mu_1$ & sd$(\mu_1)$ & $\mu_2$ & sd$(\mu_2)$\\
 \hline
 \hline
\multicolumn{13}{|c|}{Hellinger} \\
\hline
\hline
Classical MD$\varphi$DE &0.360 & 0.054 & -1.989 & 0.204 & 1.493 & 0.136 & 0.342 & 0.064 & -2.617 &0.288 & 1.713 & 0.172\\
\hline
New MD$\varphi$DE - Silverman & 0.360 & 0.054 & -1.993 & 0.208 & 1.499 & 0.133 & 0.349 & 0.058 & -1.767 &0.226 & 1.377 & 0.135\\
New MD$\varphi$DE - SJ & 0.359 & 0.054 & -1.981 & 0.206 & 1.490 & 0.134 & 0.346 & 0.059 & -1.706 &0.218 & 1.333 & 0.136\\
\hline
Basu-Lindsay - Silverman & 0.361 & 0.055 & -1.979 & 0.207 & 1.490 & 0.139 & 0.339 & 0.062 & -1.927 &0.305 & 1.377 & 0.158\\
Basu-Lindsay - SJ & 0.360 & 0.054 & -1.977 & 0.203 & 1.486 & 0.135 & 0.346 & 0.059 & -1.751 &0.227 & 1.339 & 0.140\\
\hline
Beran - Silverman & 0.371 & 0.050 & -1.985 & 0.203 & 1.546 & 0.132 & 0.369 & 0.053 & -1.788 & 0.218 & 1.477 & 0.134\\
Beran - SJ & 0.366 & 0.052 & -1.983 & 0.204 & 1.522 & 0.134 & 0.355 & 0.056 & -1.743 & 0.217 & 1.384 & 0.136\\
\hline
D$\varphi$DE & 0.361 & 0.054 & -1.988 & 0.203 & 1.492 & 0.136 & 0.355 & 0.056 & -2.132 &0.224 & 1.605 & 0.137\\
\hline
\hline
MPD 0.1 & 0.360 & 0.054 & -1.991 & 0.207 & 1.493 & 0.134 & 0.346 & 0.059 & -2.052 & 0.243 & 1.452 & 0.144\\
MPD 0.25 & 0.360 & 0.053 & -1.994 & 0.213 & 1.492 & 0.133 & 0.351 & 0.057 & -1.832 & 0.223 & 1.394 & 0.134\\
MPD 0.5 & 0.360 & 0.053 & -1.997 & 0.226 & 1.489 & 0.136 & 0.353 & 0.056 & -1.819 & 0.218 & 1.404 & 0.132\\
\hline
\hline
MLE (EM) & 0.360 & 0.054 & -1.989 & 0.204 & 1.493 & 0.136 & 0.342 & 0.064 & -2.617 &0.288 & 1.713 & 0.172\\
\hline
\end{tabular}
}
\caption{The mean value and the standard deviation of the estimates in a 100-run experiment in a two-components gaussian mixture. The divergence criterion is the Hellinger divergence. The escort parameter of the D$\varphi$DE is taken as the new MD$\varphi$DE with the Silverman bandwidth choice.}
\label{tab:EstimGaussMix}
\end{table}

\begin{table}[hp]
\centering
\begin{tabular}{|l|c|c|c|c||c|c|c|c|}
\hline
\multirow{2}{2.5cm}{Estimation method} & \multicolumn{4}{|c||}{No Outliers} & \multicolumn{4}{|c|}{$10\%$ Outliers}\\
\cline{2-9}
  & $\chi^2$ & sd($\chi^2$) & TVD & sd(TVD) & $\chi^2$ & sd($\chi^2$) & TVD & sd(TVD)\\
 \hline
 \hline
\multicolumn{9}{|c|}{Hellinger} \\
\hline
\hline
Classical MD$\varphi$DE & 0.113 & 0.044 & 0.064 & 0.025 & 0.335 & 0.102 & 0.150 & 0.034\\
\hline
New MD$\varphi$DE - Silverman & 0.113 & 0.045 & 0.064 & 0.025 & 0.155 & 0.059 & 0.087 & 0.033\\
New MD$\varphi$DE - SJ & 0.113 & 0.045 & 0.064 & 0.025 & 0.179 & 0.061 & 0.101 & 0.035\\
\hline
Basu-Lindsay - Silverman & 0.115 & 0.043 & 0.065 & 0.024 & 0.155 & 0.073 & 0.085 & 0.033\\
Basu-Lindsay - SJ & 0.113 & 0.043 & 0.064 & 0.024 & 0.170 & 0.062 & 0.096 & 0.035\\
\hline
Beran - Silverman & 0.113 & 0.046 & 0.064 & 0.025 & 0.132 & 0.050 & 0.073 & 0.027\\
Beran - SJ & 0.112 & 0.045 & 0.063 & 0.025 & 0.157 & 0.057 & 0.087 & 0.032\\
\hline
D$\varphi$DE & 0.112 & 0.044 & 0.064 & 0.025 & 0.142 & 0.061 & 0.076 & 0.031\\
\hline
\hline
MPD 0.1 & 0.113 & 0.044 & 0.064 & 0.025 & 0.124 & 0.052 & 0.069 & 0.029\\
MPD 0.25 & 0.114 & 0.045 & 0.064 & 0.025 & 0.140 & 0.054 & 0.079 & 0.030\\
MPD 0.5 & 0.117 & 0.047 & 0.065 & 0.025 & 0.138 & 0.053 & 0.078 & 0.030\\
\hline
\hline
MLE & 0.113 & 0.044 & 0.064 & 0.025 & 0.335 & 0.102 & 0.150 & 0.034\\
\hline
\end{tabular}
\caption{The mean value of errors committed in a 100-run experiment with the standard deviation. The divergence criterion is the Hellinger divergence. The escort parameter of the D$\varphi$DE is taken as the new MD$\varphi$DE with the Silverman bandwidth choice.}
\label{tab:ErrGaussMix}
\end{table}


\begin{table}[hp]
\resizebox{\columnwidth}{!}{
\centering
\begin{tabular}{|l|c|c|c|c|c|c||c|c|c|c|c|c|}
\hline
\multirow{2}{2.5cm}{Estimation method} & \multicolumn{6}{|c||}{No Outliers} & \multicolumn{6}{|c|}{$10\%$ Outliers}\\
\cline{2-13}
  & $\lambda$ & sd($\lambda$) & $\mu_1$ & sd$(\mu_1)$ & $\mu_2$ & sd$(\mu_2)$ & $\lambda$ & sd($\lambda$) & $\mu_1$ & sd$(\mu_1)$ & $\mu_2$ & sd$(\mu_2)$\\
 \hline
 \hline
\multicolumn{13}{|c|}{Hellinger} \\
\hline
\hline
Classical MD$\varphi$DE & 0.457 & 0.077 & -0.487 & 0.240 & 2.006 & 0.187 & 0.437  & 0.128 & -0.860 & 0.478 & 2.192 & 0.343\\
\hline
New MD$\varphi$DE - Silverman &  0.457 & 0.077 & -0.488 & 0.242 & 2.006 & 0.191 & 0.444 & 0.098 & -0.409 & 0.376 & 1.873 & 0.240\\
New MD$\varphi$DE - SJ & 0.456  & 0.077 & -0.490 & 0.242 & 2.009 & 0.191 & 0.443  & 0.098 & -0.381 & 0.376 & 1.851 & 0.235\\
\hline
Basu-Lindsay - Silverman & 0.460 & 0.079 & -0.470 & 0.247 & 2.004 & 0.189 & 0.406 & 0.150 & -0.834 & 0.880 & 1.89 & 0.386\\
Basu-Lindsay - SJ & 0.460  & 0.078 & -0.472 & 0.246 & 2.008 & 0.190 & 0.410  & 0.144 & -0.762 & 0.888 & 1.857 & 0.352\\
\hline
Beran - Silverman & 0.464 & 0.066 & -0.533 & 0.221 & 2.080 & 0.180 & 0.456 & 0.076 & -0.494 & 0.233 & 2.012 & 0.225\\
Beran - SJ & 0.465 & 0.064 & -0.541 & 0.213 & 2.096 & 0.178 & 0.453 & 0.080 & -0.454 & 0.230 & 1.964 & 0.219\\
\hline
D$\varphi$DE & 0.457  & 0.077 & -0.487 & 0.239 & 2.006 & 0.187 & 0.447 & 0.086 & -0.661 &0.283 & 2.100 & 0.231\\
\hline
\hline
MPD 0.1 & 0.456 & 0.077 & -0.492 & 0.238 & 2.005 & 0.191 & 0.424 & 0.142 & -0.843 & 0.872 & 2.015 & 0.504\\
MPD 0.25 & 0.456 & 0.076 & -0.497 & 0.236 & 2.003 & 0.199 & 0.441 & 0.097 & -0.505 &0.443 & 1.912 & 0.243\\
MPD 0.5 & 0.455 & 0.076 & -0.503 & 0.241 & 2.000 & 0.212 & 0.453 & 0.080 & -0.394 &0.234 & 1.906 & 0.205\\
\hline
\hline
MLE & 0.457 & 0.077 & -0.487 & 0.240 & 2.006 & 0.187 &  0.432 & 0.146 & -0.964 & 0.706 & 2.222 & 0.593\\
\hline
\end{tabular}
}
\caption{The mean value and the standard deviation of the estimates in a 100-run experiment in a two-components gaussian mixture with close means. The divergence criterion is the Hellinger divergence. The escort parameter of the D$\varphi$DE is taken as the new MD$\varphi$DE with the Silverman bandwidth choice.}
\label{tab:EstimGaussMixCloserMeans}
\end{table}

\begin{table}[hp]
\centering
\begin{tabular}{|l|c|c|c|c||c|c|c|c|}
\hline
\multirow{2}{2.5cm}{Estimation method} & \multicolumn{4}{|c||}{No Outliers} & \multicolumn{4}{|c|}{$10\%$ Outliers}\\
\cline{2-9}
  & $\chi^2$ & sd($\chi^2$) & TVD & sd(TVD) & $\chi^2$ & sd($\chi^2$) & TVD & sd(TVD)\\
 \hline
 \hline
\multicolumn{9}{|c|}{Hellinger} \\
\hline
\hline
Classical MD$\varphi$DE &  0.108 & 0.050 & 0.061 & 0.029 & 0.294 & 0.528 & 0.122 & 0.044\\
\hline
New MD$\varphi$DE - Silverman & 0.110  & 0.051 & 0.062 & 0.029 & 0.156 & 0.245 & 0.081 & 0.044\\
New MD$\varphi$DE - SJ & 0.109 & 0.052 & 0.062 & 0.029 &  0.163 & 0.242 & 0.085 & 0.042\\
\hline
Basu-Lindsay - Silverman & 0.110 & 0.050 & 0.062 & 0.029 & 0.961 & 3.366 & 0.097 & 0.067\\
Basu-Lindsay - SJ & 0.110  & 0.050 & 0.063 & 0.029 & 0.982  & 3.606 & 0.092 & 0.067\\
\hline
Beran - Silverman & 0.113 & 0.050 & 0.062 & 0.027 & 0.114 & 0.053 & 0.065 & 0.031\\
Beran - SJ & 0.114 & 0.051 & 0.062 & 0.026 & 0.111 & 0.053 & 0.064 & 0.032\\
\hline
D$\varphi$DE &  0.108 & 0.050 & 0.061 & 0.029 & 0.150 & 0.075 & 0.081 & 0.034\\
\hline
\hline
MPD 0.1 & 0.108 & 0.050 & 0.062 & 0.029 & 2.745 & 10.73 & 0.090 & 0.067\\
MPD 0.25 & 0.110 & 0.051 & 0.063 & 0.029 & 0.589 & 4.676 & 0.072 & 0.042\\
MPD 0.5 & 0.114 & 0.052 & 0.065 & 0.030 & 0.121 & 0.059 & 0.072 & 0.037\\
\hline
\hline
MLE & 0.108  & 0.050 & 0.061 & 0.029 & 1.813  & 6.76 & 0.130 & 0.057\\
\hline
\end{tabular}
\caption{The mean value of errors committed in a 100-run experiment with the standard deviation in a mixture of two gaussian components with close means. The divergence criterion is the Hellinger divergence. The escort parameter of the D$\varphi$DE is taken as the new MD$\varphi$DE with the Silverman bandwidth choice.}
\label{tab:ErrGaussMixCloserMeans}
\end{table}

\clearpage
\subsection{Generalized Pareto distribution}\label{subsec:SimulationGPD}
We show in this paragraph several simulations from the generalized Pareto distribution (GPD) where the data is contaminated or not by a $10\%$ of outliers. A GPD with a fixed location at zero, a scale parameter $\sigma>0$ and a shape parameter $\nu>0$ is defined by:
\[p_{\nu,\sigma}(y) = \frac{1}{\sigma}\left(1+\nu\frac{y}{\sigma}\right)^{-1-\frac{1}{\nu}},\quad \text{for } y\geq 0\]
We generate 100 samples. Each sample contains 100 observations drawn independently from the same distribution GPD($\nu = 0.7, \sigma=3$). Outliers are added by replacing 10 observations (chosen randomly) from each sample by observations from the distribution GPD($\nu=1,\sigma=10,\mu=500$) where $\mu$ is the location parameter. Estimation results are summarized in table \ref{tab:EstimGPD}. Estimation error is calculated in table \ref{tab:ErrGPD}. The maximum likelihood estimator was calculated using the \texttt{gpd.fit} function of package \texttt{ismev}.\\
In the litterature of nonparametric density estimation, it is mentioned everywhere that symmetric kernels are not suitable for densities defined on half the real line because of the boundary effect. We, however, still use them here for the sake of comparison when they are employed inside an estimation criterion. For more details, we invite the reader to revisit paragraph \ref{subsec:SmoothingModelEffect}.\\
When we are under the model, All presented methods except for the Basu-Lindsay approach attained the same efficiency of the MLE and sometimes even better for given choices of the kernel or the tradeoff parameter. Our kernel-based MD$\varphi$DE attained a similar performance to the MLE for \emph{all} non classical kernels and the corresponding choices of the window. Beran's method attained this performance only with the varying KDE (MT 5,10,15,20). MPD attained this level only for small values of $a$ (0.25 and 0.1). Other kernel choices were not very successful except for our kernel-based MD$\varphi$DE with a gaussian kernel and a Silverman's rule. This may be some indication of small sensitivity to the kernel used. \\

When outliers are added, performance of kernel-based methods is slightly deteriorated whereas other methods are greatly influenced, and the error is at least doubled; MPD for all cases included. The use of asymmetric kernels seems to be the most convenient for a GPD model. Our kernel-based MD$\varphi$DE seems to give the best result (in $\chi^2$ and TVD) for all kernels and corresponding windows keeping a great marge in its favor in comparison with other methods. \\

Why does the Basu-Lindsay approach give bad results in a GPD model using a gaussian kernel? A natural answer is that the gaussian kernel is not suited for densities which do not go to zero at both extremities of the domain of definition of the true distribution as was already indicated in Sect. \ref{subsec:BLapproach}. It is well known that symmetric kernels have the so-called boundary effect or bias. In the Basu-Lindsay approach, this fact has a double bad effect. The first is on the kernel estimator which no longer is appropriate to replace the true distribution near zero. The second is on the smoothed model. When the model is smoothed with a gaussian kernel, a great loss in information occurs in comparison to the original model, see Fig (\ref{fig:SmoothingeffectGamma}). Now that both the kernel estimator and the smoothed model are "corrupted", the divergence between them is no longer related to the divergence between the model and the empirical distribution. The use of asymmetric kernels or bias-correction methods was not possible practically because these methods provide non normalized estimators, see paragraph \ref{subsec:SmoothingModelEffect} Remark 5. This causes further difficulties in numerical integrations while smoothing the model, and requires higher execution time than possible. We therefore used the non classical kernel estimator based on the Mellin transform defined by (\ref{eqn:MTKDE}). This estimator is normalized by construction and is free of boundary bias. Results based on such estimator are a clear improvement.\\
Last but not least, it is worth noting that both asymmetric kernels gave very close results for all kernel-based methods. In the remaining experiences, we will only be using the reciprocal inverse gaussian (RIG) kernel.\\
\Remark
The nature of the heavy tail of the GPD (slow decrease at infinity) made integration calculus difficult, and some integration functions failed to give fairly correct results. We, therefore, and in order to avoid integration on an infinite interval $[0,\infty)$, propose to use a quantile trick which is translated by the change of variable:
\[\int_{0}^{\infty}{\varphi'\left(\frac{p_{\phi}}{p_{\alpha}}\right)(x) p_{\phi}(x) dx} = \int_{0}^1{\varphi'\left(\frac{p_{\phi}}{p_{\alpha}}\right)p_{\phi}(\mathbb{F}_{\phi}^{-1}(y))dy}\]
where $\mathbb{F}_{\phi}^{-1}(y) = \frac{\sigma}{\nu}((1-y)^(-\nu)-1)$ is the quantile of the GPD probability law $P_{\phi}$. Although this idea may appear ineffective since it does not change anything in the integral (the quantile funtion takes back values from $[0,1)$ into $[0,\infty)$), it was the savior from using other integration functions such as function \texttt{int} which work, but largely underestimate the true value, see discussion at the beginning of this section. In fact, integration methods perform in general better when integrating on a finite interval than when integrating on an infinite one.\\

\begin{table}[hp]
\centering
\begin{tabular}{|l|c|c|c|c||c|c|c|c|}
\hline
\multirow{2}{2.5cm}{Estimation method} & \multicolumn{4}{|c||}{No Outliers} & \multicolumn{4}{|c|}{$10\%$ Outliers}\\
\cline{2-9}
  & $\nu$ & sd$(\nu)$ & $\sigma$ & sd$(\sigma)$ & $\nu$ & sd$(\nu)$ & $\sigma$ & sd$(\sigma)$\\
 \hline
 \hline
\multicolumn{9}{|c|}{Hellinger} \\
\hline
\hline
Classical MD$\varphi$DE & 0.721  & 0.174 & 3.029 & 0.575 & 1.655 & 0.113 & 2.694 & 0.491\\
\hline
New MD$\varphi$DE - Gauss Silverman & 0.463 & 0.142 & 2.719 & 0.586 & 0.571 & 0.197 & 2.427 & 0.599\\
New MD$\varphi$DE - Gauss SJ & 0.343 & 0.108 & 2.858 & 0.597 &  0.368 & 0.141 & 2.798 & 0.569\\
New MD$\varphi$DE - RIG CV & 0.528 & 0.140 & 3.125 & 0.611 &  0.775 & 0.202 & 2.844 & 0.571\\
New MD$\varphi$DE - RIG Nrd0 & 0.562 & 0.139 & 3.133 & 0.605 & 0.817 & 0.219 & 2.815 & 0.545\\
New MD$\varphi$DE - RIG SJ & 0.522 & 0.129 & 3.138 & 0.616 & 0.688 & 0.191 & 2.903 & 0.574\\
New MD$\varphi$DE - GA CV & 0.530 & 0.139 & 3.117 & 0.610 & 0.766  & 0.204 & 2.833 & 0.577\\
New MD$\varphi$DE - GA Nrd0 & 0.564 & 0.139 & 3.112 & 0.601 & 0.814  & 0.211 & 2.787 & 0.544\\
New MD$\varphi$DE - GA SJ & 0.520 & 0.126 & 3.135 & 0.607 & 0.691  & 0.185 & 2.895 & 0.576\\
New MD$\varphi$DE - MT 5 & 0.641 & 0.156 & 3.217 & 0.615 & 1.202 & 0.161 & 2.806 & 0.510\\
New MD$\varphi$DE - MT 10 & 0.607 & 0.153 & 3.272 & 0.628 & 1.090 & 0.195 & 2.876 & 0.552\\
New MD$\varphi$DE - MT 15 & 0.588 & 0.150 & 3.307 & 0.636 & 1.026 & 0.206 & 2.920 & 0.565\\
New MD$\varphi$DE - MT 20 & 0.573 & 0.148 & 3.331 & 0.643 & 0.979 & 0.212 & 2.956 & 0.577\\
\hline
Basu-Lindsay - Gauss Silverman & 0.128  & 0.125 & 6.022 & 1.522 & 0.122  & 0.109 & 7.151 & 2.025\\
Basu-Lindsay - Gauss SJ & 0.078  & 0.066 & 4.603 & 1.057 &  0.097 & 0.087 & 4.843 & 1.316\\
Basu-Lindsay - MT 5 & 0.833  & 0.156 & 2.232 & 0.651 & 0.765 & 0.189 & 2.937 & 0.666\\
Basu-Lindsay - MT 10 & 0.853  & 0.197 & 2.297 & 0.659 & 0.777 & 0.193 & 2.880 & 0.704\\
Basu-Lindsay - MT 15 &  0.881 & 0.176 & 2.293 & 0.517 & 1.164 & 0.169 & 2.893 & 0.530\\
Basu-Lindsay - MT 20 &  0.907 & 0.180 & 2.337 & 0.603 & 0.936 & 0.206 & 2.694 & 0.580\\
\hline
Beran - Gauss Nrd0 & 0.216  & 0.108 & 5.165 & 1.218 & 0.197 & 0.125 & 6.084 & 1.546\\
Beran - Gauss SJ & 0.231  & 0.108 & 3.988 & 0.919 & 0.229 & 0.134 & 4.135 & 0.939\\
Beran - RIG CV & 0.516  & 0.134 & 3.890 & 0.832 & 0.833 & 0.218 & 3.944 & 0.745\\
Beran - RIG Nrd0 & 0.515  & 0.138 & 4.441 & 1.026 & 0.878 & 0.233 & 4.229 & 0.954\\
Beran - RIG SJ & 0.507  & 0.136 & 3.813 & 0.787 & 0.732 & 0.200 & 3.641 & 1.113\\
Beran - GA CV & 0.486  & 0.134 & 3.936 & 0.847 & 0.745  & 0.207 & 4.097 & 0.822\\
Beran - GA Nrd0 & 0.475  & 0.139 & 4.510 & 0.998 & 0.778 & 0.220 & 4.547 & 1.032\\
Beran - GA SJ & 0.503  & 0.133 & 3.780 & 0.773 & 0.703 & 0.186 & 3.589 & 0.781\\
Beran - MT 5 & 0.711  & 0.150 & 3.384 & 0.640 & 1.339 & 0.140 & 2.979 & 0.551\\
Beran - MT 10 & 0.665  & 0.150 & 3.315 & 0.620 & 1.231 & 0.155 & 2.900 & 0.530\\
Beran - MT 15 & 0.637  & 0.154 & 3.310 & 0.640 & 1.164 & 0.169 & 2.893 & 0.530\\
Beran - MT 20 & 0.627  & 0.156 & 3.302 & 0.637 & 0.936 & 0.206 & 2.694 & 0.580\\
\hline
D$\varphi$DE & 0.720  & 0.179 & 3.026 & 0.580 & 1.45 & 0.290 & 2.749 & 0.524\\
\hline
\hline
MPD 1 & 0.729 & 0.402 & 3.023 & 0.660 & 1.039 & 0.483  & 3.273 & 0.681 \\
MPD 0.75 & 0.716 & 0.331 & 3.025 & 0.631 & 1.021 & 0.416  & 3.242 & 0.645 \\
MPD 0.5 & 0.715 & 0.263 & 3.023 & 0.603 & 1.028 & 0.361  & 3.171 & 0.605 \\
MPD 0.25 & 0.722 & 0.200 & 3.019 & 0.581 & 1.292 & 0.240  & 2.955 & 0.532 \\
MPD 0.1 & 0.723 & 0.175 & 3.019 & 0.568 & 1.564 & 0.154  & 2.779 & 0.500 \\
\hline
\hline
MLE &   0.719 & 0.174 & 3.031 & 0.58 & 1.654 & 0.113  & 2.695 & 0.492 \\
\hline
\end{tabular}
\caption{The mean value and the standard deviation of the estimates in a 100-run experiment in the GPG model. The divergence criterion is the Neymann Chi square divergence or the Hellinger. The escort parameter of the D$\varphi$DE is taken as the new MD$\varphi$DE with the Silverman bandwidth choice.}
\label{tab:EstimGPD}
\end{table}

\begin{table}[hp]
\centering
\begin{tabular}{|l|c|c|c|c||c|c|c|c|}
\hline
\multirow{2}{2.5cm}{Estimation method} & \multicolumn{4}{|c||}{No Outliers} & \multicolumn{4}{|c|}{$10\%$ Outliers}\\
\cline{2-9}
  & $\chi^2$ & sd($\chi^2$) & TVD & sd(TVD) & $\chi^2$ & sd($\chi^2$) & TVD & sd(TVD)\\
 \hline
 \hline
\multicolumn{9}{|c|}{Hellinger} \\
\hline
\hline
Classical MD$\varphi$DE &  0.099 & 0.077 & 0.044 & 0.026 & 1.027 & 0.195 & 0.142 & 0.014\\
\hline
New MD$\varphi$DE - Silverman & 0.159 & 0.056 & 0.087 & 0.034 & 0.171 & 0.070 & 0.097 & 0.044\\
New MD$\varphi$DE - SJ & 0.189  & 0.052 & 0.100 & 0.035 & 0.183  & 0.066 & 0.098 & 0.042\\
New MD$\varphi$DE - RIG CV & 0.109 & 0.045 & 0.058 & 0.027 & 0.114 & 0.065 & 0.053 & 0.029\\
New MD$\varphi$DE - RIG Nrd0 & 0.100 & 0.044 & 0.054 & 0.027 & 0.142 & 0.130 & 0.056 & 0.029\\
New MD$\varphi$DE - RIG SJ & 0.110 & 0.044 & 0.059 & 0.027 & 0.104 & 0.056 & 0.054 & 0.030\\
New MD$\varphi$DE - GA CV & 0.108 & 0.045 & 0.058 & 0.027 & 0.114 & 0.063 & 0.054 & 0.029\\
New MD$\varphi$DE - GA Nrd0 & 0.100 & 0.044 & 0.054 & 0.027 & 0.132 & 0.092 & 0.056 & 0.028\\
New MD$\varphi$DE - GA SJ & 0.109 & 0.044 & 0.058 & 0.027 & 0.104 & 0.056 & 0.054 & 0.030\\
New MD$\varphi$DE - MT 5 & 0.093 & 0.053 & 0.049 & 0.028 & 0.472 & 0.307 & 0.089 & 0.024\\
New MD$\varphi$DE - MT 10 & 0.095 & 0.050 & 0.051 & 0.028 & 0.336 & 0.243 & 0.078 & 0.026\\
New MD$\varphi$DE - MT 15 & 0.097 & 0.048 & 0.053 & 0.028 & 0.268 & 0.193 & 0.072 & 0.027\\
New MD$\varphi$DE - MT 20 & 0.099 & 0.047 & 0.054 & 0.029 & 0.226 & 0.154 & 0.068 & 0.028\\
\hline
Basu-Lindsay - Silverman & 0.301  & 0.08 & 0.179 & 0.048 & 0.361  & 0.110 & 0.214 & 0.061\\
Basu-Lindsay - SJ & 0.256  & 0.046 & 0.145 & 0.033 & 0.264  & 0.055 & 0.151 & 0.039\\
Basu-Lindsay - MT 5 & 0.155  & 0.082 & 0.090 & 0.047 & 0.100 & 0.077 & 0.051 & 0.036\\
Basu-Lindsay - MT 10 & 0.155  & 0.080 & 0.085 & 0.043 & 0.102 & 0.078 & 0.053 & 0.038\\
Basu-Lindsay - MT 15 & 0.140  & 0.107 & 0.071 & 0.050 & 0.421 & 0.278 & 0.086 & 0.025\\
Basu-Lindsay - MT 20 & 0.157  & 0.085 & 0.078 & 0.044 & 0.160 & 0.083 & 0.059 & 0.031\\
\hline
Beran - Gauss Nrd0 & 0.241  & 0.072 & 0.142 & 0.045 & 0.297 & 0.090 & 0.177 & 0.053\\
Beran - Gauss SJ & 0.199  & 0.049 & 0.109 & 0.034 & 0.207 & 0.044 & 0.114 & 0.032\\
Beran - RIG CV & 0.133  & 0.060 & 0.076 & 0.038 & 0.226 & 0.128 & 0.094 & 0.041\\
Beran - RIG Nrd0 & 0.164  & 0.085 & 0.097 & 0.051 & 0.306 & 0.235 & 0.114 & 0.054\\
Beran - RIG SJ & 0.123  & 0.060 & 0.069 & 0.039 & 0.146 & 0.097 & 0.070 & 0.048\\
Beran - GA CV & 0.136  & 0.060 & 0.078 & 0.038 &  0.195 & 0.100 & 0.094 & 0.044\\
Beran - GA Nrd0 & 0.169  & 0.078 & 0.101 & 0.048 & 0.267 & 0.186 & 0.121 & 0.057\\
Beran - GA SJ &  0.120 & 0.058 & 0.068 & 0.037 & 0.130 & 0.078 & 0.065 & 0.040\\
Beran - MT 5 & 0.103 & 0.067 & 0.052 & 0.030 & 0.915 & 0.729 & 0.111 & 0.022\\
Beran - MT 10 & 0.093 & 0.057 & 0.049 & 0.029 & 0.581 & 0.615 & 0.095 & 0.023\\
Beran - MT 15 & 0.094 & 0.054 & 0.050 & 0.029 & 0.421 & 0.278 & 0.086 & 0.025\\
Beran - MT 20 & 0.095 & 0.055 & 0.051 & 0.029 & 0.371 & 0.298 & 0.081 & 0.026\\
\hline
D$\varphi$DE &  0.099 & 0.077 & 0.048 & 0.028 & 0.843 & 0.407 & 0.120 & 0.030\\
\hline
\hline
MPD 1 & 0.211 & 0.310 & 0.068 & 0.038 & 0.477 & 0.665 & 0.089 & 0.047\\
MPD 0.75 & 0.204 & 0.389 & 0.062 & 0.034 & 0.424 & 0.545 & 0.085 & 0.043\\
MPD 0.5 & 0.141 & 0.160 & 0.056 & 0.030 & 0.419 & 0.515 & 0.082 & 0.039\\
MPD 0.25 & 0.106 & 0.082 & 0.049 & 0.028 & 0.669 & 0.441 & 0.104 & 0.030\\
MPD 0.1 & 0.099 & 0.083 & 0.047 & 0.027 & 0.955 & 0.326 & 0.133 & 0.019\\
\hline
\hline
MLE & 0.099 & 0.077 & 0.048 & 0.026 & 1.025  & 0.195 & 0.142 & 0.014\\
\hline
\end{tabular}
\caption{The mean value of errors committed in a 100-run experiment with the standard deviation for the GPD model. The divergence criterion is the Neymann Chi square divergence or the Hellinger. The escort parameter of the D$\varphi$DE is taken as the new MD$\varphi$DE with the gamma kernel.}
\label{tab:ErrGPD}
\end{table}

\clearpage

\subsection{Mixtures of Two Weibull Components}
We present the results of estimating three different two-component Weibull mixtures. The model has the following density:
\[p_{\phi}(x) = 2\lambda\nu_1 (2x)^{\nu_1-1} e^{-(2x)^{\nu_1}}+(1-\lambda)\frac{\nu_2}{2}\left(\frac{x}{2}\right)^{\nu_2-1} e^{-\left(\frac{x}{2}\right)^{\nu_2}}.\]
Scale parameters are supposed to be known and equal to $0.5$ for the first component and $2$ for the second component. The proportion is unknown and fixed at $0.35$. Shape parameters are supposed unknown. Our examples cover a variety of cases of a Weibull mixture where the density function has either a finite limit at zero or goes to infinity for one of the components:
\begin{enumerate}
\item a mixture with close modes $\nu_1 = 1.2, \nu_2 = 2$;
\item a mixture with one mode and with limit equal to infinity at zero $\nu_1 = 0.5, \nu_2 = 3$;
\item a mixture with no modes and with limit equal to infinity at zero $\nu_1 = 0.5, \nu_2 = 1$.
\end{enumerate}
We plot these mixtures in figure \ref{fig:ThreeWeibullMixtures}. Outliers were added in different ways to illustrate several scenarios. For the first mixture, outliers were added by replacing 10 observations of each sample chosen randomly by 10 observations drawn independently from a Weibull distribution with shape $\nu = 0.9$ and scale $\sigma = 3$. See tables (\ref{tab:EstimWeibullMixTwoModes}) and (\ref{tab:ErrWeibullMixTwoModes}). For the second mixture, we added to the 10 largest observations of each sample a random observation drawn from the uniform distribution $\mathcal{U}[2,10]$. See tables \ref{tab:EstimWeibullMixOneMode} and \ref{tab:ErrWeibullMixOneMode}. For the third one, outliers were added by replacing 10 observations, chosen randomly, of each sample by observations from the uniform distribution $\mathcal{U}[\max y_i, 75]$ after having verified that no observation in the overall data has exceeded the value 50.  See tables \ref{tab:EstimWeibullMixNoMode} and \ref{tab:ErrWeibullMixNoMode}.\\
\begin{figure}[h]
\centering
\includegraphics[scale=0.4]{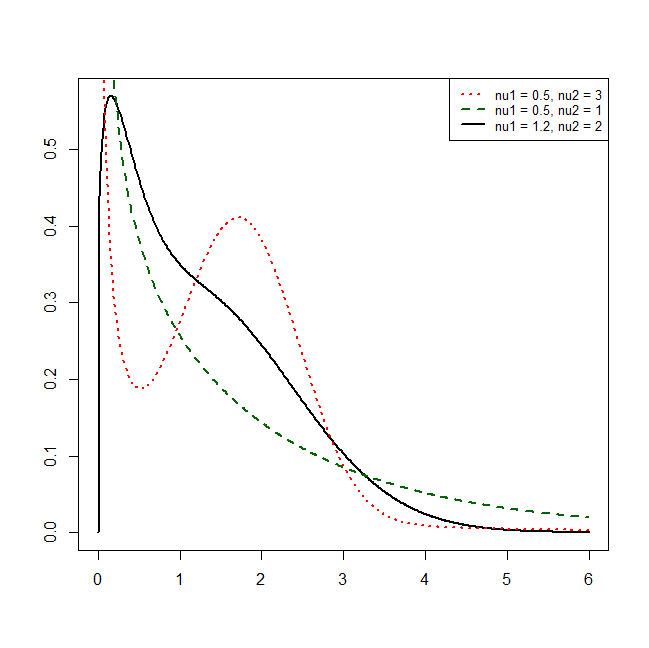}
\caption{The three Weibull mixtures used in our experience.}
\label{fig:ThreeWeibullMixtures}
\end{figure}

\noindent The caclulus of the $\chi^2$ divergence between the estimated model and the true distribution gave often infinity on all mixtures for all estimation methods even under the model. This is because small bias in the estimation of the shape parameter results in a great relative error in both the tail behavior and near zero. We therefore, only provide the TVD as an error criterion.\\  
The first Weibull mixture was the least complicated case. We were able to get satisfactory results for our kernel-based MD$\varphi$DE using a gaussian kernel. The two other mixtures were more challenging, and we needed to use asymmetric kernels to solve the problem of the bias near zero. It is worth noting that the Basu-Lindsay approach provided very bad estimates in the three mixtures which keeps it out of the competition. Note also that the use of a gaussian kernel gave very pleasant results for the first mixture in spite of the boundary bias. We excluded it from mixtures which have infinity limit at zero because it did not work well because of the large bias at zero.\\

For the first mixture, under the model all presented methods provide close results (and sometimes better) to the MLE except for the Basu-Lindsay approach with all available choices and Beran's method with the varying KDE (MT) for windows 5 and 10 which fail. Under contamination, our method gives better results than all other methods and have very close (even slightly better) performance to the MPD for tradeoff parameter higher than 0.25.\\
For the second mixture, the Basu-Lindsay approach failed again. Beran's method gave good result under the model only in one case; the RIG with window 0.01. The density power MPD worked very well only for a tradeoff parameter lower than 0.5 and gave a good compromise between robustness and efficiency. It gave the best compromise in the presented methods. Our kernel-based MD$\varphi$DE has close results to MPD with difference of $0.01$ in the TVD. It is worth noting that our kernel-based MD$\varphi$DE gave faire results for the two proposed kernels; the asymmetric kernel RIG for window 0.01 as before and the varying KDE MT for windows 10, 15 and 20. A fact which was not verified for other kernel-based methods showing again a less sensibility towards the kernel.\\
For the third mixture, the Basu-Lindsay approach did not give good results especially under the model. The only satisfactory results (which gave a good tradeoff between robustness and efficiency) were obtained by our kernel-based MD$\varphi$DE for RIG kernel with window 0.01, Beran's method with the same kernel and window and the MPD for $a=0.5$. Our method and Beran's gave the same result with difference of 0.015 in favor of the power density estimator. Better efficiency were obtained by other choices but on the cost of the robustness of the resulting estimator under contamination.
\begin{table}[hp]
\resizebox{\columnwidth}{!}{
\centering
\begin{tabular}{|l|c|c|c|c|c|c||c|c|c|c|c|c|}
\hline
\multirow{2}{2.5cm}{Estimation method} & \multicolumn{6}{|c||}{No Outliers} & \multicolumn{6}{|c|}{$10\%$ Outliers}\\
\cline{2-13}
  & $\lambda$ & sd($\lambda$) & $\nu_1$ & sd$(\nu_1)$ & $\nu_2$ & sd$(\nu_2)$ & $\lambda$ & sd($\lambda$) & $\nu_1$ & sd$(\nu_1)$ & $\nu_2$ & sd$(\nu_2)$\\
 \hline
 \hline
\multicolumn{13}{|c|}{Hellinger} \\
\hline
\hline
Classical MD$\varphi$DE & 0.355 & 0.066 & 1.245 & 0.228 & 2.054 & 0.237 & 0.410 & 0.257 & 1.045 & 0.255 & 1.718 & 0.849\\
\hline
New MD$\varphi$DE - Gauss Silverman & 0.384 & 0.067 & 1.221 & 0.244 & 2.138 & 0.291 & 0.348 & 0.076 & 1.121 & 0.265 & 1.822 & 0.319\\
New MD$\varphi$DE - Gauss SJ & 0.387 & 0.067 & 1.227 & 0.240 & 2.188 & 0.308 & 0.356 & 0.076 & 1.133 & 0.261 & 1.905 & 0.319\\
New MD$\varphi$DE - RIG 0.01 & 0.371 & 0.066 & 1.297 & 0.231 & 2.215 & 0.321 & 0.355 & 0.100 & 1.213 & 0.229 & 1.955 & 0.344\\
New MD$\varphi$DE - RIG 0.1 & 0.358 & 0.065 & 1.233 & 0.210 & 2.065 & 0.267 & 0.330 & 0.117 & 1.127 & 0.226 & 1.741 & 0.304\\
New MD$\varphi$DE - RIG SJ & 0.351 & 0.066 & 1.217 & 0.207 & 2.001 & 0.245 & 0.324 & 0.132 & 1.107 & 0.226 & 1.670 & 0.297\\
New MD$\varphi$DE - MT 5 & 0.328 & 0.112 & 1.301 & 0.235 & 1.809 & 0.192 & 0.363 & 0.229 & 1.195 & 0.213 & 1.592 & 0.356\\
New MD$\varphi$DE - MT 10 & 0.330 & 0.091 & 1.355 & 0.235 & 1.923 & 0.220 & 0.351 & 0.204 & 1.247 & 0.230 & 1.645 & 0.285\\
New MD$\varphi$DE - MT 15 & 0.327 & 0.076 & 1.383 & 0.234 & 1.973 & 0.237 & 0.348 & 0.199 & 1.275 & 0.233 & 1.680 & 0.294\\
New MD$\varphi$DE - MT 20 & 0.328 & 0.076 & 1.403 & 0.233 & 2.002 & 0.249 & 0.348 & 0.198 & 1.295 & 0.235 & 1.702 & 0.297\\
\hline
Basu-Lindsay - Gauss Silverman & 0.752 & 0.064 & 2.199 & 0.248 & 38.66 & 8.66 & 0.822  & 0.083 & 1.927 & 0.276 & 32.37 & 13.52 \\
Basu-Lindsay - Gauss SJ & 0.723 & 0.059 & 2.205 & 0.257 & 16.18 & 10.75 & 0.759  & 0.065 & 1.958 & 0.263 & 19.52 & 10.56 \\
Basu-Lindsay - MT 5 & 0.403 & 0.072 & 1.339 & 0.224 & 3.241 & 0.547 & 0.346 & 0.076 & 1.260 & 0.210 & 2.874 & 0.338 \\
Basu-Lindsay - MT 10 & 0.390 & 0.069 & 1.409 & 0.234 & 3.281 & 0.465 & 0.337 & 0.067 & 1.319 & 0.217 & 2.813 & 0.233 \\
Basu-Lindsay - MT 15 & 0.393 & 0.067 & 1.458 & 0.248 & 3.297 & 0.476 & 0.333 & 0.062 & 1.340 & 0.232 & 2.823 & 0.257 \\
Basu-Lindsay - MT 20 & 0.399 & 0.066 & 1.472 & 0.221 & 3.282 & 0.458 & 0.335 & 0.068 & 1.362 & 0.225 & 2.819 & 0.300 \\
\hline
Beran - Gauss Silverman & 0.254 & 0.058 & 1.313 & 0.087 & 2.010 & 0.200 & 0.182  & 0.074 & 1.174 & 0.162 & 1.703 & 0.253\\
Beran - Gauss SJ & 0.295 & 0.067 & 1.371 & 0.104 & 2.085 & 0.225 & 0.240 & 0.079 & 1.284 &0.127 & 1.794 & 0.266\\
Beran - RIG 0.01 & 0.368 & 0.064 & 1.240 & 0.198 & 2.147 & 0.277 & 0.339 & 0.094 & 1.151 &0.200 & 1.858 & 0.332\\
Beran - RIG 0.1 & 0.345 & 0.061 & 1.117 & 0.103 & 1.897 & 0.172 & 0.289 & 0.095 & 1.033 &0.125 & 1.570 & 0.247\\
Beran - RIG SJ & 0.320 & 0.060 & 1.069 & 0.074 & 1.725 & 0.138 & 0.260 & 0.123 & 0.997 &0.088 & 1.416 & 0.203\\
Beran - MT 5 & 0.453 & 0.307 & 1.146 & 0.178 & 1.386 & 0.180 & 0.626 & 0.349 & 1.055 & 0.172 & 1.461 & 0.531\\
Beran - MT 10 & 0.354 & 0.201 & 1.238 & 0.201 & 1.553 & 0.133 & 0.419 & 0.304 & 1.134 & 0.202 & 1.450 & 0.425\\
Beran - MT 15 & 0.334 & 0.153 & 1.286 & 0.211 & 1.664 & 0.143 & 0.404 & 0.277 & 1.178 & 0.188 & 1.500 & 0.370\\
Beran - MT 20 & 0.334 & 0.136 & 1.317 & 0.218 & 1.738 & 0.156 & 0.383 & 0.256 & 1.207 & 0.198 & 1.542 & 0.348\\
\hline
D$\varphi$DE & 0.356 & 0.066 & 1.248 & 0.232 & 2.069 & 0.278 & 0.332 & 0.142 & 1.113 & 0.248 & 1.700 & 0.289\\
\hline
\hline
MPD 1 & 0.358 & 0.087 & 1.238 & 0.252 & 2.127 & 0.521 & 0.343 & 0.113 & 1.167 & 0.239 & 2.005 & 0.517\\
MPD 0.75 & 0.353 & 0.073 & 1.236 & 0.237 & 2.088 & 0.397 & 0.341 & 0.108 & 1.164 & 0.235 & 1.951 & 0.432\\
MPD 0.5 & 0.354 & 0.068 & 1.238 & 0.230 & 2.071 & 0.345 & 0.336 & 0.105 & 1.159 & 0.237 & 1.860 & 0.344\\
MPD 0.25 & 0.354 & 0.066 & 1.239 & 0.226 & 2.053 & 0.272 & 0.324 & 0.131 & 1.132 & 0.235 & 1.699 & 0.321\\
MPD 0.1 & 0.355 & 0.066 & 1.242 & 0.227 & 2.048 & 0.238 & 0.394 & 0.241 & 1.091 & 0.215 & 1.780 & 0.792\\
\hline
\hline
MLE (EM) & 0.355 & 0.066 & 1.245 & 0.228 & 2.054 & 0.237 & 0.321 & 0.187 & 0.913 & 0.313 & 1.575 & 0.325\\
\hline
\end{tabular}
}
\caption{The mean value and the standard deviation of the estimates in a 100-run experiment on a two-component Weibull mixture ($\lambda=0.35,\nu_1=1.2,\nu_2=2$). The escort parameter of the D$\varphi$DE is taken as the new MD$\varphi$DE with the SJ bandwidth choice.}
\label{tab:EstimWeibullMixTwoModes}
\end{table}

\begin{table}[hp]
\centering
\begin{tabular}{|l|c|c|c||c|c|c|}
\hline
\multirow{2}{2.5cm}{Estimation method} & \multicolumn{3}{|c||}{No Outliers} & \multicolumn{3}{|c|}{$10\%$ Outliers}\\
\cline{2-7}
  & mean & median & sd & mean & median & sd\\
 \hline
 \hline
\multicolumn{7}{|c|}{Hellinger} \\
\hline
\hline
Classical MD$\varphi$DE & 0.052 & 0.048 & 0.025 & 0.108 & 0.094 & 0.099\\
\hline
New MD$\varphi$DE - Gauss Silverman & 0.058 & 0.054 & 0.029 & 0.068 & 0.065 & 0.034\\
New MD$\varphi$DE - Gauss SJ & 0.058 & 0.053 & 0.029 & 0.064 & 0.061 & 0.031\\
New MD$\varphi$DE - RIG 0.01 & 0.058 & 0.052 & 0.030 & 0.059 & 0.057 & 0.030 \\
New MD$\varphi$DE - RIG 0.1 & 0.051 & 0.049 & 0.026 & 0.066 & 0.062 & 0.032 \\
New MD$\varphi$DE - RIG SJ & 0.050 & 0.050 & 0.026 & 0.071 & 0.066 & 0.032 \\
New MD$\varphi$DE - MT 5 & 0.057 & 0.055 & 0.025 & 0.081 & 0.074 & 0.032 \\
New MD$\varphi$DE - MT 10 & 0.054 & 0.053 & 0.026 & 0.075 & 0.071 & 0.032 \\
New MD$\varphi$DE - MT 15 & 0.054 & 0.054 & 0.026 & 0.073 & 0.069 & 0.032 \\
New MD$\varphi$DE - MT 20 & 0.055 & 0.054 & 0.027 & 0.073 & 0.069 & 0.031 \\
\hline
Basu Lindsay - Gauss Silverman & 0.298 & 0.289 & 0.042 & 0.247 & 0.253 & 0.050\\
Basu Lindsay - Gauss SJ & 0.252 & 0.256 & 0.051 & 0.242 & 0.246 & 0.044\\
Basu Lindsay - MT 5 & 0.127 & 0.141 & 0.046 & 0.121 & 0.111 & 0.042\\
Basu Lindsay - MT 10 & 0.133 & 0.136 & 0.039 & 0.117 & 0.111 & 0.036\\
Basu Lindsay - MT 15 & 0.134 & 0.141 & 0.039 & 0.118 & 0.110 & 0.038\\
Basu Lindsay - MT 20 & 0.132 & 0.138 & 0.039 & 0.117 & 0.109 & 0.039\\
\hline
Beran - Gauss Silverman & 0.068 & 0.062 & 0.028 & 0.082 & 0.081 & 0.031\\
Beran - Gauss SJ & 0.060 & 0.054 & 0.028 & 0.067 & 0.065 & 0.029 \\
Beran - RIG 0.01 & 0.052 & 0.048 & 0.026 & 0.060 & 0.058 & 0.029 \\
Beran - RIG 0.1 & 0.042 & 0.039 & 0.020 & 0.067 & 0.061 & 0.030 \\
Beran - RIG SJ & 0.045 & 0.044 & 0.017 & 0.079 & 0.076 & 0.030 \\
Beran - MT 5 & 0.099 & 0.097 & 0.016 & 0.125 & 0.125 & 0.022 \\
Beran - MT 10 & 0.073 & 0.070 & 0.021 & 0.102 & 0.100 & 0.028 \\
Beran - MT 15 & 0.064 & 0.060 & 0.022 & 0.092 & 0.089 & 0.030 \\
Beran - MT 20 & 0.059 & 0.055 & 0.023 & 0.086 & 0.084 & 0.030 \\
\hline
D$\varphi$DE & 0.053 & 0.049 & 0.027 & 0.068 & 0.065 & 0.031\\
\hline
\hline
MPD 1 & 0.065 & 0.061 & 0.034 & 0.068 & 0.064 & 0.030 \\
MPD 0.75 & 0.059 & 0.056 & 0.029 & 0.063 & 0.060 & 0.029 \\
MPD 0.5 & 0.056 & 0.052 & 0.029 & 0.061 & 0.056 & 0.029 \\
MPD 0.25 & 0.052 & 0.048 & 0.027 & 0.068 & 0.067 & 0.031 \\
MPD 0.1 & 0.051 & 0.048 & 0.026 & 0.088 & 0.083 & 0.039 \\
\hline
\hline
MLE & 0.052 & 0.048 & 0.025 & 0.095 & 0.098 & 0.035 \\
\hline
\end{tabular}
\caption{The mean value with the standard deviation of the TVA committed in a 100-run experiment on a two-component Weibull mixture ($\lambda=0.35,\nu_1=1.2,\nu_2=2$). The escort parameter of the D$\varphi$DE is taken as the new MD$\varphi$DE with the SJ bandwidth choice.}
\label{tab:ErrWeibullMixTwoModes}
\end{table}

\begin{table}[hp]
\resizebox{\columnwidth}{!}{
\centering
\begin{tabular}{|l|c|c|c|c|c|c||c|c|c|c|c|c|}
\hline
\multirow{2}{2.5cm}{Estimation method} & \multicolumn{6}{|c||}{No Outliers} & \multicolumn{6}{|c|}{$10\%$ Outliers}\\
\cline{2-13}
  & $\lambda$ & sd($\lambda$) & $\nu_1$ & sd$(\nu_1)$ & $\nu_2$ & sd$(\nu_2)$ & $\lambda$ & sd($\lambda$) & $\nu_1$ & sd$(\nu_1)$ & $\nu_2$ & sd$(\nu_2)$\\
 \hline
 \hline
\multicolumn{13}{|c|}{Hellinger} \\
\hline
\hline
Classical MD$\varphi$DE & 0.344 & 0.059 & 0.497 & 0.079 & 3.063 & 0.476  & 0.376 & 0.053 & 0.339 & 0.030 & 2.892 & 0.484\\
\hline
New MD$\varphi$DE RIG - 0.01 & 0.330 & 0.061 & 0.540 & 0.140 & 3.170 & 0.503 & 0.338 & 0.061 & 0.432 & 0.105 & 3.055 & 0.583\\
New MD$\varphi$DE RIG - 0.1 & 0.371 & 0.063 & 0.468 & 0.138 & 3.045 & 0.452 & 0.392  & 0.072 & 0.372 & 0.085 & 2.927 & 0.464\\
New MD$\varphi$DE RIG - SJ & 0.395 & 0.072 & 0.442 & 0.134 & 3.013 & 0.443 & 0.424  & 0.086 & 0.354 & 0.082 & 2.916 & 0.459\\
New MD$\varphi$DE MT - 5 & 0.311 & 0.062 & 0.520 & 0.065 & 2.875 & 0.451 & 0.316 & 0.063 & 0.376 & 0.036 & 2.699 & 0.471\\
New MD$\varphi$DE MT - 10 & 0.302 & 0.062 & 0.548 & 0.077 & 2.903 & 0.433 &  0.306 & 0.062 & 0.384 & 0.039 & 2.727 & 0.448\\
New MD$\varphi$DE MT - 15 & 0.295 & 0.063 & 0.564 & 0.084 & 2.927 & 0.434 & 0.298 & 0.063 & 0.388 & 0.042 & 2.745 & 0.450\\
New MD$\varphi$DE MT - 20 & 0.289 & 0.063 & 0.575 & 0.091 & 2.943 & 0.437 & 0.291 & 0.063 & 0.392 & 0.044 & 2.758 & 0.454\\
\hline
Basu-Lindsay MT - 5 & 0.250 & 0.070 & 0.834 & 0.168 & 2.849 & 0.733 & 0.185 & 0.074 & 0.715 & 0.208 & 2.189& 0.155\\
Basu-Lindsay MT - 10 & 0.240 & 0.065 & 0.797 & 0.157 & 2.789 & 0.550 & 0.197 & 0.087 & 0.707 & 0.201 & 2.324& 0.132\\
Basu-Lindsay MT - 15 & 0.254 & 0.073 & 0.745 & 0.140 & 2.915 & 0.584 & 0.204 & 0.078 & 0.674 & 0.181 & 2.352& 0.092\\
\hline
Beran RIG - 0.01 & 0.298 & 0.058 & 0.647 & 0.082 & 3.017 & 0.437 & 0.295 & 0.057 & 0.486 & 0.081 & 2.842 & 0.460\\
Beran RIG - 0.1 & 0.234 & 0.054 & 0.652 & 0.105 & 2.374 & 0.245 & 0.216 & 0.053 & 0.408 & 0.056 & 2.149 & 0.291\\
Beran RIG - SJ & 0.194 & 0.056 & 0.653 & 0.134 & 1.936 & 0.246 & 0.142 & 0.065 & 0.402 & 0.144 & 1.601 & 0.325\\
Beran MT - 5 & 0.250 & 0.070 & 0.463 & 0.058 & 1.603 & 0.140 & 0.245 & 0.083 & 0.340 & 0.062 & 1.494 & 0.208\\
Beran MT - 10 & 0.278 & 0.066 & 0.501 & 0.069 & 2.005 & 0.181 & 0.275 & 0.079 & 0.354 & 0.033 & 1.868 & 0.260\\
Beran MT - 15 & 0.286 & 0.065 & 0.524 & 0.075 & 2.224 & 0.218 & 0.284 & 0.071 & 0.365 & 0.033 & 2.068 & 0.280\\
\hline
D$\varphi$DE & 0.343 & 0.059 & 0.5004 & 0.084 & 3.047 & 0.474 & 0.372 & 0.056 & 0.357 & 0.056 & 2.897 & 0.502\\
\hline
\hline
MDE 0.75 & 0.444 & 0.126 & 0.595 & 0.080 & 3.466 & 0.643 & 0.417 & 0.127 & 0.602 & 0.087 & 3.233 & 0.606\\
MDE 0.5 & 0.376 & 0.067 & 0.551 & 0.093 & 3.159 & 0.488 & 0.357 & 0.067 & 0.555 & 0.097 & 2.980 & 0.484\\
MDE 0.25 & 0.347 & 0.061 & 0.512 & 0.096 & 3.057 & 0.472 & 0.331 & 0.062 & 0.471 & 0.068 & 2.879 & 0.491\\
MDE 0.1 & 0.344 & 0.059 & 0.496 & 0.084 & 3.050 & 0.470 & 0.343 & 0.058 & 0.384 & 0.037 & 2.859 & 0.484\\
\hline
\hline
MLE (EM) & 0.344 & 0.059 & 0.498 & 0.079 & 3.063 & 0.476 & 0.376 & 0.053 & 0.339 & 0.303 & 2.892 & 0.482\\
\hline
\end{tabular}
}
\caption{The mean value and the standard deviation of the estimates in a 100-run experiment in a two-component Weibull mixture ($\lambda=0.35,\nu_1=0.5,\nu_2=3$). The escort parameter of the D$\varphi$DE is taken as the new MD$\varphi$DE with the Silverman bandwidth choice.}
\label{tab:EstimWeibullMixOneMode}
\end{table}

\begin{table}[hp]
\centering
\begin{tabular}{|l|c|c|c||c|c|c|}
\hline
\multirow{2}{2.5cm}{Estimation method} & \multicolumn{3}{|c||}{No Outliers} & \multicolumn{3}{|c|}{$10\%$ Outliers}\\
\cline{2-7}
  & mean & median & sd & mean & median & sd\\
 \hline
 \hline
\multicolumn{7}{|c|}{Hellinger} \\
\hline
\hline
Classical MD$\varphi$DE & 0.060 & 0.055 & 0.024 & 0.096 & 0.094 & 0.025\\
\hline
New MD$\varphi$DE RIG - 0.01 & 0.074 & 0.070 & 0.034 & 0.076 & 0.073 & 0.039\\
New MD$\varphi$DE RIG - 0.1 & 0.079 & 0.064 & 0.053 & 0.099 & 0.086 & 0.062\\
New MD$\varphi$DE RIG - SJ & 0.091 & 0.075 & 0.068 & 0.120 & 0.099 & 0.078\\
New MD$\varphi$DE MT - 5 & 0.062 & 0.061 & 0.027 & 0.081 & 0.073 & 0.031\\
New MD$\varphi$DE MT - 10 & 0.066 & 0.064 & 0.028 & 0.076 & 0.070 & 0.030\\
New MD$\varphi$DE MT - 15 & 0.069 & 0.068 & 0.028 & 0.076 & 0.071 & 0.030\\
New MD$\varphi$DE MT - 20 & 0.072 & 0.073 & 0.029 & 0.076 & 0.071 & 0.030\\
\hline
Basu-Lindsay MT - 5 & 0.119 & 0.114 & 0.039 & 0.131 & 0.121 & 0.029  \\
Basu-Lindsay MT - 10 & 0.109 & 0.106 & 0.033 & 0.119 & 0.100 & 0.038  \\
Basu-Lindsay MT - 15 & 0.107 & 0.103 & 0.030 & 0.112 & 0.097 & 0.033  \\
\hline
Beran RIG - 0.01 & 0.077 & 0.080 & 0.026 & 0.066 & 0.063 & 0.029\\
Beran RIG - 0.1 & 0.105 & 0.104 & 0.025 & 0.112 & 0.108 & 0.038\\
Beran RIG - SJ & 0.157 & 0.032 & 0.032 & 0.193 & 0.180 & 0.053 \\
Beran MT - 5 & 0.182 & 0.183 & 0.025 & 0.207 & 0.202 & 0.032 \\
Beran MT - 10 & 0.127 & 0.127 & 0.028 & 0.153 & 0.146 & 0.037 \\
Beran MT - 15 & 0.102 & 0.104 & 0.029 & 0.126 & 0.121 & 0.036 \\
\hline
D$\varphi$DE & 0.060 & 0.057 & 0.024 & 0.091 & 0.088 & 0.027\\
\hline
\hline
MDP 0.75 & 0.103 & 0.083 & 0.067 & 0.097 & 0.083 & 0.065 \\
MDP 0.5 & 0.068 & 0.067 & 0.029 & 0.069 & 0.067 & 0.028 \\
MDP 0.25 & 0.062 & 0.058 & 0.026 & 0.064 & 0.062 & 0.029 \\
MDP 0.1 & 0.061 & 0.059 & 0.024 & 0.076 & 0.072 & 0.027 \\
\hline
\hline
MLE & 0.060 & 0.056 & 0.024 & 0.096 & 0.094 & 0.024 \\
\hline
\end{tabular}
\caption{The mean value with the standard deviation of the TVA committed in a 100-run experiment on a two-component Weibull mixture ($\lambda=0.35,\nu_1=0.5,\nu_2=3$). The escort parameter of the D$\varphi$DE is taken as the new MD$\varphi$DE with the SJ bandwidth choice.}
\label{tab:ErrWeibullMixOneMode}
\end{table}

\begin{table}[hp]
\resizebox{\columnwidth}{!}{
\centering
\begin{tabular}{|l|c|c|c|c|c|c||c|c|c|c|c|c|}
\hline
\multirow{2}{2.5cm}{Estimation method} & \multicolumn{6}{|c||}{No Outliers} & \multicolumn{6}{|c|}{$10\%$ Outliers}\\
\cline{2-13}
  & $\lambda$ & sd($\lambda$) & $\nu_1$ & sd$(\nu_1)$ & $\nu_2$ & sd$(\nu_2)$ & $\lambda$ & sd($\lambda$) & $\nu_1$ & sd$(\nu_1)$ & $\nu_2$ & sd$(\nu_2)$\\
 \hline
 \hline
\multicolumn{13}{|c|}{Hellinger} \\
\hline
\hline
Classical MD$\varphi$DE & 0.367 & 0.102 & 0.550 & 0.104 & 1.054 & 0.194 & 0.352  & 0.158 & 0.273 & 0.050 & 1.051 & 0.407\\
\hline
New MD$\varphi$DE - 0.01 & 0.445 & 0.103 & 0.562 & 0.135 & 1.212 & 0.284 & 0.409 & 0.133 & 0.464 & 0.156 & 1.148 & 0.293\\
New MD$\varphi$DE - 0.1 & 0.432 & 0.101 & 0.502 & 0.141 & 1.139 & 0.241 & 0.460  & 0.210 & 0.378 & 0.125 & 1.114 & 0.302\\
New MD$\varphi$DE - SJ & 0.431 & 0.101 & 0.485 & 0.141 & 1.127 & 0.244 & 0.487 & 0.216 & 0.356 & 0.108 & 1.110 & 0.309\\
New MD$\varphi$DE MT - 5 & 0.350 & 0.158 & 0.619 & 0.134 & 1.006 & 0.211 & 0.436 & 0.313 & 0.375 & 0.121 & 1.245 & 1.177\\
New MD$\varphi$DE MT - 10 & 0.338 & 0.148 & 0.643 & 0.135 & 1.019 & 0.167 & 0.474 & 0.322 & 0.409 & 0.140 & 1.150 & 0.516\\
New MD$\varphi$DE MT - 15 & 0.335 & 0.148 & 0.658 & 0.135 & 1.029 & 0.161 & 0.456 & 0.321 & 0.411 & 0.146 & 1.292 & 1.689\\
\hline
Basu-Lindsay MT - 5 & 0.392 & 0.178 & 0.734 & 0.122 & 1.042 & 0.022 & 0.351 & 0.225 & 0.757 & 0.177 & 1.048 & 0.026\\
Basu-Lindsay MT - 10 & 0.340 & 0.149 & 0.742 & 0.103 & 1.037 & 0.024 & 0.260 & 0.175 & 0.712 & 0.147 & 1.039 & 0.024\\
Basu-Lindsay MT - 15 & 0.340 & 0.149 & 0.742 & 0.103 & 1.037 & 0.024 & 0.222 & 0.126 & 0.696 & 0.125 & 1.043 & 0.016\\
\hline
Beran - 0.01 & 0.370 & 0.098 & 0.685 & 0.091 & 1.125 & 0.188 & 0.381 & 0.211 & 0.572 & 0.183 & 1.058 & 0.215\\
Beran - 0.1 & 0.234 & 0.093 & 0.747 & 0.113 & 1.028 & 0.118 & 0.419 & 0.372 & 0.479 & 0.211 & 1.181 & 0.553\\
Beran RIG - SJ & 0.211 & 0.185 & 0.745 & 0.130 & 1.034 & 0.230 & 0.259 & 0.331 & 0.367 & 0.181 & 1.105 & 0.542\\
Beran MT - 5 & 0.302 & 0.205 & 0.584 & 0.129 & 0.867 & 0.120 & 0.471 & 0.388 & 0.376 & 0.128 & 1.097 & 0.738\\
Beran MT - 10 & 0.327 & 0.175 & 0.610 & 0.132 & 0.929 & 0.121 & 0.490 & 0.347 & 0.394 & 0.131 & 1.155 & 0.803\\
Beran MT - 15 & 0.331 & 0.165 & 0.623 & 0.128 & 0.962 & 0.128 & 0.470 & 0.340 & 0.400 & 0.132 & 1.174 & 0.893\\
\hline
D$\varphi$DE & 0.371 & 0.111 & 0.544 & 0.100 & 1.064 & 0.240 & 0.473 & 0.293 & 0.382 & 0.175 & 1.431 & 1.818\\
\hline
\hline
MPD 0.75 & 0.494 & 0.181 & 0.619 & 0.089 & 1.341 & 0.689 & 0.505 & 0.243 & 0.625 & 0.087 & 1.313 & 0.641\\
MPD 0.5 & 0.413 & 0.134 & 0.577 & 0.101 & 1.143 & 0.349 & 0.412 & 0.255 & 0.582 & 0.101 & 1.059 & 0.358\\
MPD 0.25 & 0.366 & 0.108 & 0.542 & 0.110 & 1.064 & 0.349 & 0.554 & 0.348 & 0.503 & 0.117 & 1.205 & 0.995\\
MPD 0.1 & 0.368 & 0.109 & 0.539 & 0.106 & 1.059 & 0.237 & 0.451 & 0.322 & 0.370 & 0.111 & 1.280 & 1.407\\
\hline
\hline
MLE (EM) & 0.372 & 0.108 & 0.549 & 0.100 & 1.055 & 0.192 & 0.417 & 0.194 & 0.291 & 0.073 & 1.114 & 0.468\\
\hline
\end{tabular}
}
\caption{The mean value and the standard deviation of the estimates in a 100-run experiment in a two-component Weibull mixture ($\lambda=0.35,\nu_1=0.5,\nu_2=1$). The escort parameter of the D$\varphi$DE is taken as the new MD$\varphi$DE with the Silverman bandwidth choice.}
\label{tab:EstimWeibullMixNoMode}
\end{table}

\begin{table}[hp]
\centering
\begin{tabular}{|l|c|c|c||c|c|c|}
\hline
\multirow{2}{2.5cm}{Estimation method} & \multicolumn{3}{|c||}{No Outliers} & \multicolumn{3}{|c|}{$10\%$ Outliers}\\
\cline{2-7}
  & mean & median & sd & mean & median & sd\\
 \hline
 \hline
\multicolumn{7}{|c|}{Hellinger} \\
\hline
\hline
Classical MD$\varphi$DE & 0.056 & 0.055 & 0.026 & 0.124 & 0.114 & 0.035\\
\hline
New MD$\varphi$DE RIG - 0.01 & 0.079 & 0.073 & 0.039 & 0.090 & 0.082 & 0.044\\
New MD$\varphi$DE RIG - 0.1 &  0.079 & 0.065 & 0.059 & 0.112 & 0.101 & 0.050\\
New MD$\varphi$DE RIG - SJ & 0.076 & 0.065 & 0.041 & 0.129 & 0.117 & 0.065\\
New MD$\varphi$DE MT - 5 & 0.063 & 0.058 & 0.029 & 0.114 & 0.095 & 0.041\\
New MD$\varphi$DE MT - 10 & 0.067 & 0.063 & 0.028 & 0.112 & 0.102 & 0.038\\
New MD$\varphi$DE MT - 15 & 0.069 & 0.067 & 0.028 & 0.111 & 0.105 & 0.036\\
\hline
Basu-Lindsay MT - 5 & 0.095 & 0.067 & 0.078 & 0.118 & 0.087 & 0.088 \\
Basu-Lindsay MT - 10 & 0.094 & 0.074 & 0.073 & 0.112 & 0.088 & 0.080 \\
Basu-Lindsay MT - 15 & 0.093 & 0.072 & 0.067 & 0.103 & 0.088 & 0.063 \\
\hline
Beran RIG 0.01 & 0.079 & 0.081 & 0.028 & 0.089 & 0.087 & 0.033\\
Beran RIG 0.1 & 0.087 & 0.085 & 0.023 & 0.103 & 0.102 & 0.025\\
Beran RIG - SJ & 0.094 & 0.092 & 0.023 & 0.100 & 0.097 & 0.021 \\
Beran MT - 5 & 0.061 & 0.060 & 0.022 & 0.127 & 0.134 & 0.044 \\
Beran MT - 10 & 0.059 & 0.055 & 0.025 & 0.115 & 0.096 & 0.041 \\
Beran MT - 15 & 0.060 & 0.056 & 0.025 & 0.112 & 0.097 & 0.039 \\
\hline
D$\varphi$DE & 0.057 & 0.055 & 0.028 & 0.117 & 0.113 & 0.034\\
\hline
\hline
MPD 0.75 & 0.102 & 0.091 & 0.050 & 0.093 & 0.088 & 0.039 \\
MPD 0.5 & 0.072 & 0.067 & 0.032 & 0.075 & 0.074 & 0.033 \\
MPD 0.25 & 0.061 & 0.056 & 0.028 & 0.092 & 0.090 & 0.039 \\
MPD 0.1 & 0.058 & 0.055 & 0.027 & 0.108 & 0.087 & 0.039 \\
\hline
\hline
MLE & 0.056 & 0.055 & 0.026 & 0.122 & 0.117 & 0.029 \\
\hline
\end{tabular}
\caption{The mean value with the standard deviation of errors committed in a 100-run experiment on a two-component Weibull mixture ($\lambda=0.35,\nu_1=0.5,\nu_2=1$). The escort parameter of the D$\varphi$DE is taken as the new MD$\varphi$DE with the SJ bandwidth choice.}
\label{tab:ErrWeibullMixNoMode}
\end{table}
\clearpage

\subsection{Concluding remarks and comments}
Simulation results, although do not cover a wide range of models and divergences, give some indications about the robustness and the efficiency of the compared results. They also present possible solutions for many difficult estimation problems by employing non classical kernel methods. We summarize the most important remarks based on our simulations presented in this last section.
\begin{itemize}
\item[$\bullet$] Both MLE and classical MD$\varphi$DE have the best efficiency under the model even in \emph{difficult} models with heavy tails where kernel-based approaches could not give a satisfactory result. In regular situations such as the gaussian model (mixtures included), all methods were equivalent under the model.\\
\item[$\bullet$] When contamination is present, the compared estimators gave results as expected. Both MLE and classical MD$\varphi$DE are not robust against contamination. The D$\varphi$DE guided by our kernel-based MD$\varphi$DE gave very good results under the model, however, when contamination is present it failed to improve the result obtained by the escort in difficult situations which is the case of the three Weibull mixtures and the GPD. It even gave very bad results some times in comparison to other estimation methods, but still better than MLE and the classical MD$\varphi$DE. \\
\item[$\bullet$] The Basu-Lindsay approach worked very well in regular situations and even showed improved efficiency in comparison to the Beran's method which is concordant to the result of \cite{BasuSarkar}. It gave surprisingly good results in the GPD model under contamination when we used the varying KDE in comparison to the situation under the model. Unfortunately, it did not give satisfactory results in any of the Weibull mixtures. This method seems very sensitive to the kernel under difficult situations since the model is already influenced by the kernel creating a loss of information.\\
\item[$\bullet$] The minimum density power divergence gave very good results in all situations but the GPD. The best tradeoff parameter from our set of candidates was $a=0.5$. \\
\item[$\bullet$] The Beran's method gave very good tradeoff (and many times the best) between robustness and efficiency in most of the situations, but not very well in the GPD model. The best choice of the kernel for GPD and Weibull mixtures was the RIG with window 0.01. It was sensitive to the choice of the kernel and its window in many situations.\\
\item[$\bullet$] Our kernel-based MD$\varphi$DE gave very good results in all situations and had close results to the MPD and Beran's methods. It gave the best results in the GPD model with very good compromise between efficiency and robustness. It is worth noting that our new estimator was less influenced by the choice of the kernel and the window than all kernel-based methods which participated in the comparison showing very promising and encouraging properties.\\
\item[$\bullet$] The use of the varying KDE (MT) gave best results under the model. We believe that one can get better results if we have a better method for choosing the window than the cross-validation procedure presented in \cite{vKDE}. Recall that the cross-validation method gave good results under the model but very bad ones under contamination. It chose a value $a=1$ for most of the samples.\\
\item[$\bullet$] We are surprised that the best window that corresponds to the best performance for asymmetric kernels and the varying KDE was the most extreme one (the least for asymmetric kernels and the largest for MT). Such a choice corresponds to a fluctuating nonparameteric density estimator. Apparently, the bias at the border played the most important part in estimation. Note that for the RIG kernel as the window becomes smaller, the estimator goes faster towards infinity at zero.
\end{itemize}
\bibliographystyle{plainnat}
\bibliography{NewMDphiDE}
\end{document}